\documentclass[11pt]{article}
\usepackage[utf8]{inputenc}
\usepackage{fullpage,times}
\usepackage{amsmath, amssymb, amsthm}
\usepackage[hidelinks]{hyperref}
\usepackage{cleveref}
\usepackage{multicol}
\usepackage{color}
\usepackage[shortlabels]{enumitem}
\usepackage{xspace}
\usepackage{comment}
\newtheorem{thm}{Theorem}[section]

\newtheorem{lemma}[thm]{Lemma}
\newtheorem{theorem}[thm]{Theorem}
\newtheorem{fact}[thm]{Fact}
\newtheorem{corollary}[thm]{Corollary}
\newtheorem{question}[thm]{Question}
\newtheorem{definition}[thm]{Definition}

\theoremstyle{definition}

\newcommand{\eps}{\varepsilon}
\newcommand{\R}{\mathbb{R}}
\newcommand{\bone}{\mathbf{1}}
\newcommand{\Paren}[1]{\left(#1\right)}

\newcommand{\N}{\mathbb{N}}
\newcommand{\optis}{\mathrm{OPT}_{\mathrm{IS}}}
\newcommand{\tP}{\tilde{P}}

\newcommand{\polylog}{\mathop{\rm polylog}}
\newcommand{\poly}{\mathop{\rm poly}}
\newcommand{\MAXCOV}{\textsc{Max-Coverage}}
\newcommand{\OPTest}{\OPT_{\text{est}}}
\newcommand{\OPT}{\text{OPT}}
\newcommand{\SSS}{\mathcal{S}}
\newcommand{\UUU}{\mathcal{P}}
\newcommand{\OPTLP}{\text{OPT}_{\text{LP}}}
\newcommand{\OPTDISLP}{\text{OPT}^{\text{DIS}}_{\text{LP}}}
\newcommand{\Ex}{\mathbb{E}}
\newcommand{\Pbad}{P_{\text{bad}}}
\newcommand{\Pgood}{P_{\text{good}}}
\newcommand{\Sbad}{S_{\text{bad}}}
\newcommand{\Sgood}{S_{\text{good}}}
\newcommand{\xxx}{\widetilde{x}}
\newcommand{\vmax}{v_{\text{max}}}
\newcommand{\vol}[1]{\text{vol}\left(#1\right)}

\iftrue
\newcommand{\pasin}[1]{\textcolor{red}{Pasin: #1}}
\else
\newcommand{\pasin}[1]{}
\fi

\title{Approximation Algorithms for Geometric Maximum Coverage}

\author{%
  Sujoy Bhore\thanks{IIT Bombay. 
    \texttt{sujoy@cse.iitb.ac.in}}
  \and
  Timothy M. Chan\thanks{University of Illinois at Urbana-Champaign.
    \texttt{tmc@illinois.edu}}
  \and
  Pasin Manurangsi\thanks{Google Research.
    \texttt{pasin@google.com}}
}

\date{\today}

\begin{document}

\maketitle
\setcounter{page}{0}
\thispagestyle{empty}

\author{}
\begin{abstract}
We study the \emph{maximum coverage} problem for geometric set systems: given a set of points, a set of geometric objects, and a number $k$, select $k$ objects maximizing the number of points inside their union.

\begin{enumerate}
\item We present a polynomial-time approximation algorithm, via LP rounding, achieving approximation factor strictly better than $1-1/e$ for any set system with linear \emph{2-shallow cell complexity} (or any set system that can be decomposed into a constant number of such set systems).  The result also holds for the \emph{weighted} (or \emph{budgeted}) maximum coverage problem, where objects have weights and we want to select objects with total weight not exceeding a given budget.  The result applies to many types of geometric objects, including pseudodisks in 2D, fat axis-aligned rectangles in 2D, similar-size fat triangles in 2D, axis-aligned unit cubes in 3D, etc.---in fact, these are the same types of objects for which LP-based, $O(1)$-approximation algorithms were known for the (weighted) \emph{geometric set cover} and \emph{discrete independent set} problems.

\item For small $k$, we obtain a $(1-\varepsilon)$-approximation algorithm more generally for any set system with \emph{constant VC dimension}, running in time exponential in $\tilde{O}(k/\varepsilon)$.  This simplifies and improves Badanidiyuru, Kleinberg, and Lee's previous parameterized approximation scheme [SoCG'12]
running in time exponential 
in
$\tilde{O}(k^2/\varepsilon^5)$.

\item A continuous version of the geometric maximum coverage problem asks for $k$ objects maximizing the volume of their union. 
We give better approximation algorithms for this problem for certain families of objects;
for example, we obtain an EPTAS for fat convex objects in any constant dimension.

\item We complement our algorithms with several hardness results, e.g.,
APX-hardness for fat axis-aligned rectangles in 2D and other
types of objects, $(1-1/e+\eps)$-approximation hardness for axis-aligned boxes in a dimension dependent on $\eps$,
and a lower bound ruling out $n^{\mathop{\rm poly}(1/\varepsilon)}$-time PTASs for the continuous problem for (non-fat) axis-aligned boxes in 3D.
\end{enumerate}
\end{abstract}

\newpage
\tableofcontents
\thispagestyle{empty}
\newpage

\setcounter{page}{1}
\section{Introduction}

The \emph{max coverage} problem is one of the most fundamental optimization problems in combinatorial optimization. In its classical form, we are given a collection $S$ of sets  over a universe $P$ together with an integer $k$, and the goal is to select at most $k$ sets from $S$ whose union has maximum cardinality. A simple greedy algorithm achieves an approximation factor of $1-1/e$~\cite{CornuejolsNW80, FisherNW79,Hochbaum97}, and so does LP rounding~\cite{AgeevS04}, while the celebrated result of Feige~\cite{Feige98} shows that this guarantee is optimal unless $\mathrm{P}=\mathrm{NP}$.

Weighted extensions of the problem can also be considered. Allowing elements in $P$ to have weights does not really make the problem more challenging: we can just treat weights in $P$
as multiplicities.\footnote{
Objects with weights smaller than $1/n^{c+1}$ times the largest weight may be removed, while worsening the approximation factor by only $1-O(1/n^c)$ for an arbitrary large constant $c$.  Then by rescaling, all weights become integers bounded by $n^{c+1}$.
}
More interesting is to allow the sets in $S$ to have (positive) weights: the goal is now to select sets from $S$ with total weight
at most a given budget whose union has maximum cardinality.  This version  is sometimes called the \emph{budgeted max coverage} problem in the literature, although we will refer to it simply as the \emph{weighted} version of max coverage.
 A variant of the greedy algorithm~\cite{KhullerMN99} or LP rounding~\cite{Srinivasan01,DoerrKW10} is also known to achieve approximation factor of $1-1/e$ for this weighted problem.

Over the past three decades, max coverage has become a central primitive across approximation algorithms, submodular optimization, machine learning, and computational geometry, among others. A major research direction has been to understand whether additional structure in the input allows one to circumvent the general $1-1/e$ barrier~\cite{BadanidiyuruKL12,ChaplickDRS18, JainKPSSSU26}.
Geometric set systems provide one of the most prominent examples,
a common setting of which is when $P$ is a given finite set of points and 
the sets of $S$ are geometric objects such as disks, rectangles, etc.\ (or, 
more precisely, a set in $S$ consists of all points of $P$ inside an object).

The main question addressed in this paper is the following:

\begin{quote}
\emph{When does geometry help for the (unweighted or weighted) max coverage problem? In other words, for what types of geometric objects can we obtain approximation guarantees strictly better than $1-1/e$?}
\end{quote}

\paragraph{Prior work on geometric set cover.}
To better understand the context of this question, we begin by reviewing
prior work on the closely related, and equally fundamental, optimization problem:
\emph{set cover}.  The goal is to select the smallest number of sets in $S$ so that the 
union covers $P$ (the number $k$ is no longer part of the input).  The standard greedy algorithm or randomized LP rounding achieves an approximation factor of
$\ln |P| + O(1)$~\cite{Hochbaum97,VazBOOK}, and 
Feige's seminal work~\cite{Feige98} showed that a $(1-o(1))\ln |P|$ approximation is the best possible for general set systems unless $\mathrm{P}=\mathrm{NP}$. 

Similarly, in the weighted set cover problem, the goal is to select sets in $S$ minimizing
their total weight, so that the union covers $P$.  LP rounding~\cite{Hochbaum97,VazBOOK} also achieves approximation 
factor of $\ln |P| + O(1)$ for the weighted problem.

\begin{table}\centering\small
\begin{tabular}{ccc}
objects & \begin{tabular}{c}approx.\ factor of\\poly-time algs.\end{tabular} & \!\!hardness\!\! \\\noalign{\smallskip}\hline\noalign{\bigskip}
\begin{minipage}{.5\textwidth}
2D unit disks, 2D disks, 2D convex pseudodisks,
2D unit squares, 2D squares,
2D unit-height rectangles,
2D translates/homothets of one convex object,
3D halfspaces
\end{minipage}
 & \begin{tabular}{c}
   $1+\eps$ for unweighted \cite{MustafaR10}\\
   $O(1)$ for weighted \cite{ChanGKS12}
   \end{tabular} & NP-hard\\\noalign{\medskip}\hline\noalign{\medskip}
\begin{minipage}{.5\textwidth}
2D pseudodisks, 2D $r$-admissible objects,
3D unit cubes
\end{minipage}
  & \begin{tabular}{c}
  $O(1)$\ for unweighted~\cite{clarkson2005improved}\\
  and weighted~\cite{ChanGKS12}
  \end{tabular} &  NP-hard\\\noalign{\medskip}\hline\noalign{\medskip}
\begin{minipage}{.5\textwidth}
2D fat rectangles, 
2D similar-height rectangles,
2D fat rotated rectangles with $O(1)$ orientations,
2D vertical \& horizontal slabs,
2D translates/homothets of $O(1)$ convex objects,
2D similar-size fat triangles,
2D rectangles intersecting the $x$-axis,
3D similar-size fat boxes,
3D boxes containing the origin,
$O(1)$-D axis-aligned slabs
\end{minipage}
 & \begin{tabular}{c}
  $O(1)$\ for unweighted~\cite{clarkson2005improved}\\
  and weighted~\cite{ChanGKS12}
  \end{tabular} & \!\!\!\!\begin{tabular}{c}APX-hard\\\cite{ChanG14}\end{tabular}\!\!\!\!\\\noalign{\medskip}\hline\noalign{\medskip}
\begin{minipage}{.5\textwidth}
2D rectangles,  3D unit balls,
3D cubes, 4D unit hypercubes, 4D boxes with the origin as a vertex, 4D halfspaces
\end{minipage} 
  & \begin{tabular}{c}
   $O(\log\textrm{OPT})$ for unweighted \cite{bronnimann1994almost}\\
   $O(\log |P|)$ for weighted
   \end{tabular} 
  & \!\!\!\!\begin{tabular}{c}APX-hard\\\cite{ChanG14}\end{tabular}\!\!\!\!\\\noalign{\medskip}\hline
 \end{tabular}
 \caption{Known results for unweighted and weighted geometric set cover for various types of objects.}\label{tbl:set:cov}
\end{table}

Over the last few decades, there has been an extensive body of work on the set cover problem in geometric settings.
Here is a summary of the major techniques:
\begin{itemize}
\item \emph{Local search PTASes}: Mustafa and Ray~\cite{MustafaR10} obtained a PTAS via local search for the unweighted set cover problem for geometric objects that satisfy
a certain \emph{planar graph support} property.
For example, this includes the objects listed in the first row of Table~\ref{tbl:set:cov}%
\footnote{
Throughout this paper, all squares/rectangles/boxes are axis-aligned, unless we explicitly say ``rotated''.  ``Similar height/size'' means that the ratio of the maximum to minimum height/size is bounded by $O(1)$.
}
(see \cite{DeL23} on the convex pseudodisk case).
However, there are many types of objects that do not satisfy this property, for example,
those in the third and fourth row of Table~\ref{tbl:set:cov}); moreover, local search only
works for the unweighted problem.

\item \emph{Separator-based QPTASes}: Mustafa, Raman, and Ray~\cite{MustafaRR15} described
a different, separated-based approach that yields a QPTAS for some of the types of objects appearing in the first row of Table~\ref{tbl:set:cov}.  However, we are interested only in polynomial-time approximation algorithms in this paper, and so will ignore this kind of results.

\item \emph{LP-rounding algorithms}: 
Another class of approximation algorithms is via rounding of the standard set cover LP\@.  The analysis of the integrality gap is related to a combinatorial geometry concept called \emph{$\eps$-nets}.

Br\"onnimann and Goodrich~\cite{bronnimann1994almost} used this approach to obtain $O(\log\OPT)$-approximation
algorithms for unweighted set cover for any set system with constant \emph{VC dimension}, where $\OPT$ denotes the optimal value.  This result
is very general (it includes
most natural families of geometric objects of constant description complexity in constant dimensions), but the improvement from $\log |P|$ to $\log\OPT$
is marginal unless $\OPT$ is subpolynomially small.

Clarkson and Varadarajan~\cite{clarkson2005improved} 
(see also~\cite{AronovES10}) obtained $O(1)$-approximation algorithms
for unweighted geometric set cover for objects with linear \emph{union complexity}.

Finally, building on Varadarajan's \emph{quasi-uniform sampling} technique~\cite{varadarajan2010weighted},
Chan, Grant, K\"onemann, and Sharpe~\cite{ChanGKS12} generalized the result and obtained $O(1)$-approximation algorithms for weighted geometric set cover for objects that have 
\emph{$\ell$-shallow cell complexity} $f_\ell(n)\le n\ell^{O(1)}$.  
In most geometric examples, $\ell$-shallow cell complexity $f_\ell(n)\le n\ell^{O(1)}$ follows from
linear union complexity by using a standard Clarkson--Shor argument~\cite{clarkson1989applications}, but one advantage of using shallow cell complexity is that it can be defined abstractly for set systems without referring to geometry, unlike
union complexity, and so potentially may yield more applications outside of geometry.%
\footnote{For one recent example, Browne and Chang~\cite{BrowneC26} 
gave an application to a problem on planar graphs.}

Generalizing further, the $O(1)$-approximation algorithm also holds in the case when the input objects can be partitioned into $O(1)$ subsets each with bounded shallow cell complexity (e.g., see \cite{ChanD0SW18}).  This yields all the results in the second and third row of Table~\ref{tbl:set:cov}. 

\item \emph{APX-hardness}: On the negative side,
Chan and Grant~\cite{ChanG14} described a technique to prove APX-hardness for various types of objects, including all the ones in the third and fourth row of Table~\ref{tbl:set:cov}.
\end{itemize}

\paragraph{Prior work on geometric discrete independent set and other related problems.}
Another basic optimization problem, \emph{discrete independent set}, shares
a similar story: here, we want to find the largest (or, in the weighted case, the
largest-weight) subcollection of objects in $S$ so that no two chosen
objects contain a common point in $P$.  Known techniques similarly include:

\begin{itemize}
\item \emph{Local search PTASes}: 
Ene, Har-Peled, and Raichel~\cite{EneHR17} obtained a PTAS via local search for the unweighted discrete independent set problem for objects that satisfy
a planar graph support property similar to Mustafa and Ray's~\cite{MustafaR10}.
\item \emph{LP-rounding algorithms}: Chan and Har-Peled~\cite{ChanH12} gave $O(1)$-approximation
algorithms for unweighted and weighted discrete independent set for objects with linear union complexity.  In Appendix~\ref{sec:DIS}, we note that their analysis
actually extends to objects with linear 2-shallow cell complexity; and it similarly extends to the case when the objects can be partitioned into $O(1)$ subsets each with linear 2-shallow cell complexity.
\end{itemize}

LP-rounding approaches for geometric set cover have also been adapted
to solve a number of related problems, e.g., \emph{set multicover}~\cite{chekuri2012set,RamanR22},
\emph{partial set cover}~\cite{InamdarV18,ChekuriIQVZ22}, and \emph{online} variants of set cover~\cite{BhoreGK26}.  Partial set cover seems especially relevant, as it can be viewed as an ``inverse'' to max coverage (it asks for the minimum number of objects in $S$ to cover at least $k$
points in $P$); in fact, the two problems are equivalent for exact algorithms, though not for approximation algorithms.

\paragraph{Prior work on geometric max coverage.}
Returning to the max coverage problem, we see a far less complete
picture on when geometry helps, in contrast to the extensive literature on geometric set cover.
Badanidiyuru, Kleinberg, and Lee~\cite{BadanidiyuruKL12} studied the case when $k$ is small, which we will discuss more later (see also \cite{BergCH09,Jin00ZZ18} for more specialized results for small $k$).  For general $k$, the only major prior work is by Chaplick, De, Ravsky, and Spoerhase~\cite{ChaplickDRS18},
who obtained a local-search-based PTAS for unweighted max coverage for objects that satisfy a planar graph support property\footnote{
Chaplick et al.~\cite{ChaplickDRS18} required the ``support graph'' to have balanced sublinear-size separators instead of being planar.
} similar to Mustafa and Ray's~\cite{MustafaR10}.
This takes care of the types of objects in the first row of Table~\ref{tbl:max:cov}, but not the other types of objects.
Also, local search inherently does not work well for the weighted problem.
There has been no prior work on LP-rounding-based approaches that beat $1-1/e$ for geometric max coverage. The problem seems more challenging because the target range is narrower (we aim for a constant approximation factor above $1-1/e$), unlike minimization problems such as set cover, set multicover, or partial set cover (where any constant approximation factor would be an improvement over the general logarithmic bound).

\paragraph{Main new results.}
We present a new algorithm for geometric max coverage that has approximation factor $1-1/e+\Omega(1)$ for any family of objects with linear 2-shallow cell complexity (or input that can be partitioned into $O(1)$ such families).  The approximation factor is in fact relative to the optimal LP value, and the algorithm works for weighted problem as well (the approximation factor is not LP-relative in the weighted case though).
This yields 
the first result that beats $1-1/e$ for both the unweighted and weighted problem for all the types of objects listed in the second and third row of Table~\ref{tbl:max:cov}, as well as the first such result for the weighted problem for the types of objects in the first row.

Note that our result generalizes a known (non-geometric) result (e.g.,~\cite{AgeevS04}) that one can beat $1-1/e$ for max coverage for set systems with maximum set size $\Delta=O(1)$, since such set systems have linear 2-shallow cell complexity $f_2(n)=O(\Delta n)$.

Our algorithm is not complicated---it simply returns the best among 3 solutions:
(i) a standard LP-rounding solution, 
(ii) the greedy solution,
and (iii) a modified greedy solution in which we start with a discrete independent set.  The algorithm description requires less than half a page (see Section~\ref{sec:max:cov:lp}), and most of the work lies in its analysis, which requires a series of steps.

The quantatitive improvements over $1-1/e$ are admittedly small (we do not try to explicitly work out and optimize the precise approximation factors for specific types of objects), but our contributions are qualitatively interesting for at least two reasons. 
First, they unify the landscape for max coverage with set cover and discrete independent set:
geometry helps for max coverage on essentially the same set systems for which $O(1)$-approximation algorithms are known for set cover and discrete independent set.  In fact, our approach provides a general \emph{black-box reduction}, converting any LP-relative $O(1)$-approximation algorithm for discrete independent set into a $(1-1/e+\Omega(1))$-approximation algorithm for max coverage.  It is somewhat surprising that such a general reduction exists, and is to discrete independent set rather than set cover (Inamdar and Varadarajan~\cite{InamdarV18} and Chekuri et al.~\cite{ChekuriIQVZ22} described general reductions for LP-based algorithms from partial set cover to set cover).
Second, there has been empirical evidence that
methods like greedy and randomized LP rounding tend to achieve approximation factor better than $1-1/e$
(e.g., see \cite{DoerrKW10}).  Our result provides theoretical justification that variants of greedy and standard randomized LP rounding can indeed beat $1-1/e$ \emph{in the worst case} for many geometric instances.

\begin{table}\centering\small
\begin{tabular}{ccc}
objects & \begin{tabular}{c}approx.\ factor of\\poly-time algs.\end{tabular} & \!\!hardness\!\! \\\noalign{\smallskip}\hline\noalign{\bigskip}
\begin{minipage}{.5\textwidth}
2D unit disks, 2D disks, 2D convex pseudodisks,
2D unit squares, 2D squares,
2D unit-height rectangles,
2D translates/homothets of one convex object,
3D halfspaces
\end{minipage}
 & \begin{tabular}{c}
   $1-\eps$ for unweighted \cite{ChaplickDRS18}\\
   \!\!{\color{red}\bf\boldmath $1-1/e+\Omega(1)$ for weighted (Sec.~\ref{sec:max:cov:lp})}\!\!
   \end{tabular} & NP-hard\\\noalign{\medskip}\hline\noalign{\medskip}
\begin{minipage}{.5\textwidth}
2D pseudodisks, 2D $r$-admissible objects,
3D unit cubes
\end{minipage}
  & \begin{tabular}{c}
  \color{red}\bf\boldmath $1-1/e+\Omega(1)$ for unweighted\\
  \color{red}\bf  and weighted (Sec.~\ref{sec:max:cov:lp})
  \end{tabular} &  NP-hard\\\noalign{\medskip}\hline\noalign{\medskip}
\begin{minipage}{.5\textwidth}
2D fat rectangles, 
2D similar-height rectangles,
2D fat rotated rectangles with $O(1)$ orientations,
2D vertical \& horizontal slabs,
2D translates/homothets of $O(1)$ convex objects,
2D similar-size fat triangles,
2D rectangles intersecting the $x$-axis,
3D similar-size fat boxes,
3D boxes containing the origin,
$O(1)$-D axis-aligned slabs
\end{minipage}
 & \begin{tabular}{c}
  \color{red}\bf\boldmath $1-1/e+\Omega(1)$ for unweighted\\
  \color{red}\bf and weighted (Sec.~\ref{sec:max:cov:lp})
  \end{tabular} & \!\!\!\!\!\!\begin{tabular}{c}
  \color{red}\bf APX-hard\\ \color{red}\bf (Sec.~\ref{sec:hard:low})\end{tabular}\!\!\!\!\!\!\\\noalign{\medskip}\hline\noalign{\medskip}
\begin{minipage}{.5\textwidth}
2D rectangles, 3D unit balls,
3D cubes, 4D unit hypercubes, 4D boxes with the origin as a vertex, 4D halfspaces
\end{minipage} 
 & \begin{tabular}{c}
  $1-1/e$ for unweighted\\
  and weighted
  \end{tabular}
  & \!\!\!\!\!\!\begin{tabular}{c}APX-hard\\ 
  (\cite{BadanidiyuruKL12} \&\\ Sec.~\ref{sec:hard:low})\end{tabular}\!\!\!\!\!\!\\\noalign{\medskip}\hline
 \end{tabular}
 \caption{Results for unweighted and weighted geometric max coverage for various types of objects.  Results in {\color{red}\bf bold} are all new.}\label{tbl:max:cov}
\end{table}

\paragraph{More new results, on approximation schemes for small $k$.}
For general set systems, no FPT approximation scheme (FPT-AS) exists for max coverage, assuming Gap-ETH\footnote{The Gap Exponential Time Hypothesis (Gap-ETH)~\cite{Dinur16,ManurangsiR17} states that, for some constant $\delta > 0$, there is no $2^{o(n)}$-time algorithm that can distinguish between a satisfiable 3CNF formula and one which is not even $(1 - \delta)$-satisfiable, where $n$ denotes the number of variables.}; in fact, achieving $(1 - 1/e + \Omega(1))$-approximation is not even possible in FPT time~\cite{Cohen-AddadG0LL19,Manurangsi20}.
However, for set systems with bounded VC dimension~$d$,
Badanidiyuru, Kleinberg, and Lee~\cite{BadanidiyuruKL12} 
presented an FPT-AS parameterized by $k$, achieving
approximation factor $1-\eps$ with running time $2^{\tilde{O}(dk^2/\eps^5)}\poly(|P|,|S|)$
for any given $\eps>0$.
For constant $d$, this is a PTAS when $k$ is smaller than $\log^{0.5 - o(1)}|S|$.
The question of improving the running time was posed at the end of their paper.
We obtain a new time bound $2^{\tilde{O}(dk/\eps)}\poly(|P|,|S|)$,
which has not only better dependence in $k$ (we now get a PTAS when $k\leq \log^{1 - o(1)}|S|$), but also better dependence in $\eps$.
Moreover, our new algorithm and analysis are simpler (they can be described in 2 pages---see Section~\ref{sec:improve:BKL}).

\paragraph{More new results, on continuous max coverage (i.e., max-volume selection).}
One application of our results is to the \emph{continuous} version of the geometric max coverage problem:
here, $P=\R^d$, and cardinality is replaced by volume, i.e.,
the goal is to select $k$ objects from $S$ maximizing
the volume of the union.
This continuous problem, which we may call \emph{max-volume selection},
reduces to the discrete problem, by building
the arrangement of $S$, which has $O(n^d)$ cells, assuming that the objects have constant
description complexity, and replacing each cell with a point of weight proportional to
the volume of the cell.  (As mentioned, weights in $P$ may be treated as multiplicities.)
Thus, all our results for geometric (unweighted and weighted) max coverage automatically apply to the continuous counterparts.
The question is whether for the continuous version, we could beat $1-1/e$ for a wider class of geometric objects.

Bringmann, Cabello, and Emmerich~\cite{BringmannCE17} studied the (unweighted) max-volume selection problem and
described a PTAS for the case when the objects are axis-aligned boxes in $\R^d$ each
having the origin as a vertex (in contrast, the discrete version is APX-hard for $d\ge 4$,
as we show).  We give more positive results for (unweighted and weighted) max-volume selection:

\begin{itemize}
\item an EPTAS for fat convex objects (in particular, balls and hypercubes) in any constant dimension~$d$ which computes a $(1-\eps)$-approximation in $2^{O(1/\eps^{2d})}\poly(|S|)$ time, by using a shifted quadtree approach (see Section~\ref{sec:ptas}---in contrast, the discrete
version is APX-hard even for 2D fat rectangles);
\item a $(1-1/e+\Omega(1))$-approximation algorithm for objects that
are homothets of $O(1)$ convex objects in any constant dimension, via a modification of our LP-rounding method (see Appendix~\ref{sec:max:cov:var}---in contrast,
for the discrete version, we know how to beat $1-1/e$  for this case only in 2D). 
\end{itemize}

\paragraph{New results on hardness.}
There have been some prior results establishing
hardness for obtaining $(1-1/e+\eps)$-approximation
for geometric max coverage; e.g., see
\cite[Theorem~4.2]{ErlebachL08} on fat objects
and \cite[Theorem~1.2]{Cohen-AddadSL21} on set systems of bounded VC dimension.  However, these constructions are not realized by objects with low description complexity.
Badanidiyuru, Kleinberg, and Lee~\cite{BadanidiyuruKL12} proved
APX-hardness for max coverage for 2D rectangles and 4D halfspaces.  

We prove a number of new hardness results:
\begin{itemize}
\item We observe that Chan and Grant's APX-hardness
proofs~\cite{ChanG14} for geometric set cover can be easily modified to yield
APX-hardness for max coverage for all the types of objects in the third and fourth row in Table~\ref{tbl:max:cov} (see Section~\ref{sec:hard:low}).
\item For any constant $\eps>0$, we prove hardness for $(1-1/e+\eps)$-approximation
for max coverage for boxes in some constant dimension which is a function of $\eps$  (see Section~\ref{sec:hard:high}).
\item For the continuous problem, i.e., max-volume selection, we rule out
PTASes for boxes in high nonconstant dimensions (see Section~\ref{sec:hard:vol:high}).
\item For the max-volume selection problem, we also rule out PTASes with running time $n^{\poly(1/\eps)}$ (or, even more loosely, $n^{2^{O(1/\eps^{0.99})}}$)
for boxes in 3D  (see Section~\ref{sec:hard:vol:low}).
\end{itemize}

\section{Preliminaries}

We first recall the notion of shallow cell complexity, which was first formulated by Chan, Grant, K\"onemann, and Sharpe~\cite{ChanGKS12} and is rephrased below:

\begin{definition}
Let $\SSS$ be a class of objects over a universe $\UUU$.
For a set $S$ of objects in $\SSS$ and a point $p\in \UUU$, define the subset $S_{\mid p} = \{s\in S: p\in s\}$.
We say that $\SSS$ has \emph{$\ell$-shallow cell complexity $f_\ell(\cdot)$} if for every set $S\subseteq\SSS$ of objects, there are at most $f_\ell(|S|)$ 
distinct subsets in $\{S_{\mid p}: p\in\UUU,\ |S_{\mid p}|=\ell\}$.
\end{definition}

Our max coverage algorithm will use an algorithm for the discrete independent set problem, defined formally below:

\begin{definition}
In the \emph{max-weight discrete independent set (DIS)} problem, we are given a set $S\subseteq\SSS$ of objects and a set $P\subseteq\UUU$ of points, where each object $s$ has a weight $w_s>0$,
and we want to find a largest-weight subset $I\subseteq S$ such that every point of $P$ is contained in at most one object in $I$.

The standard LP relaxation for DIS is the following:
\[\begin{array}{lll}
\text{max}  & \displaystyle\sum_{s\in S} w_sy_s\\
\text{s.t.} & \displaystyle\sum_{s\in S:\,p\in s} y_s\le 1  & \forall p\in P\\
            &  0\le y_s\le 1                   & \forall s\in S.
\end{array}\]
\end{definition}

Chan and Har-Peled~\cite{ChanH12} gave 
$O(1)$-approximation algorithms for max(-weight) DIS for objects with linear union complexity via LP rounding.  (It is not important to know the precise definition of union complexity here.)
We note that their analysis only needs to assume
linear 2-shallow cell complexity.
For completeness, we include a full, concise
re-derivation in Appendix~\ref{sec:DIS}.



\begin{lemma}[Based on~\cite{ChanH12}]\label{lem:DIS}
Suppose $\SSS$ has 2-shallow cell complexity $f_2(n)\le cn$ for a constant $c$.
We can compute a feasible solution to the max-weight DIS problem of weight $\Omega(\OPTDISLP/c)$ in polynomial time, where $\OPTDISLP$ denotes the optimal LP value.
\end{lemma}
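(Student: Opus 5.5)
We plan to follow the Chan--Har-Peled contention-resolution style rounding, with the union-complexity bound replaced by the 2-shallow cell complexity bound. First, solve the LP to get an optimal fractional $y$. The key combinatorial fact we need is that the set system has bounded degeneracy in an appropriate weighted sense. Concretely, define for each object $s$ the quantity
\[ \sigma_s=\sum_{t\in S,\, t\ne s,\, t\cap s\cap P\neq\emptyset} y_t . \]
We will show $\sum_{s\in S} y_s\sigma_s = O(c)\sum_s y_s$, or more precisely a localized version: for every subset $S'\subseteq S$, there is an object $s\in S'$ whose fractional ``intersection mass'' within $S'$ is $O(c)$.

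To prove this, note that each intersecting pair $s,t$ contains a point of $P$. For each such pair we charge the pair to a point $p\in s\cap t$. Using the LP constraint $\sum_{u\ni p} y_u\le1$, we get
\[ \sum_{\{s,t\}\text{ intersecting}} y_sy_t \le \sum_{\text{cells}} \text{(mass of depth-2 sub-pairs)} . \]
The cleanest route is a Clarkson--Shor style random sampling. Independently pick each $s$ with probability proportional to $y_s$, say $y_s/2$. Then look at the pairs $\{s,t\}$ that are realized as a depth-exactly-2 cell in the sample. By the 2-shallow cell complexity, their expected number is at most $c\,\Ex|R|=O(c\sum y_s)$. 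On the other hand, a pair $\{s,t\}$ whose common point $p$ has total LP depth at most $1$ survives as a depth-2 cell with probability at least $(y_s/2)(y_t/2)\prod_{u\ni p}(1-y_u/2)\ge \Omega(y_sy_t)$. Comparing the two counts gives $\sum_{\{s,t\}\text{ intersecting}} y_sy_t=O(c\sum_s y_s)$, and the same holds for every subfamily. Hence there is an ordering (peeling an object of minimal $\sigma$ within the remaining family) in which each $s$ has $O(c)$ fractional mass among the later-intersecting objects. Weights are handled by processing in a suitable order, or by the variant where one peels by the ratio of charged mass to $y_s$.

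Next comes the rounding. Sample each object independently with probability $y_s/(Kc)$ for a large constant $K$. Then go through the objects in reverse peeling order and greedily keep a sampled object if it intersects no previously kept object. Alternatively, keep $s$ iff no sampled object among its ``earlier'' neighbours exists. An object $s$ is kept if it is sampled and none of its $O(c)$-mass neighbours on one side is sampled. By a union bound this happens with probability at least $\frac{y_s}{Kc}\bigl(1-O(c)/(Kc)\bigr)\ge \frac{y_s}{2Kc}$. The output is feasible (pairwise disjoint on $P$), and its expected weight is $\Omega(\sum w_sy_s/c)=\Omega(\OPTDISLP/c)$. Conditional expectations derandomize the procedure, or one can repeat it.

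The main obstacle is the sampling lemma. Points of $P$ inside $s\cap t$ may have LP depth spread over many objects, so we must argue with the pair's own witness point. We also must ensure the bound holds for every subfamily with the weighted ordering, so that the one-sided neighbour mass is controlled. If the independent-sampling probability calculation fails because of deep points, we fall back on choosing, for each pair, a witness point and using $\prod(1-y_u/2)\ge e^{-1}$, which is valid precisely because of the LP constraint.
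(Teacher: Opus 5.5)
Your proof is correct, but it organizes the rounding differently from the paper.

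The two arguments share the same core estimate. Sampling each object with probability $y_s/2$ and comparing the at most $c|R|$ realized depth-2 pairs with the lower bound $y_sy_t/8$ per conflicting pair gives $\sum_{\{s,t\}} y_sy_t\le 4c\sum_s y_s$. This holds for every subfamily. The obstacle you worry about does not arise: every point of $P$ has LP depth at most $1$, so any witness point in $s\cap t\cap P$ works.

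The paper applies this bound only to the nested families $\{s: w_s\ge t\}$. It keeps a sampled object iff no \emph{heavier} conflicting object is sampled, bounds the loss by $\frac{1}{b^2}\sum w_sy_sy_{s'}$, and controls that weighted sum by integrating the threshold bound over $t$. Its guarantee therefore holds only in aggregate for the given weights.

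You instead apply the bound to every subfamily to extract a peeling (degeneracy) order. In that order, each object has fractional conflict mass at most $8c$ among later objects. You then keep a sampled object iff no later conflicting object is sampled. This gives the per-object bound $\Pr[s\in I]\ge y_s/(2Kc)$, which does not depend on the weights at all.

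So your remark about ordering by weight or peeling by ratios is unnecessary. Linearity of expectation already gives expected weight at least $\OPTDISLP/(2Kc)$ for arbitrary weights.

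Your route yields a stronger, weight-oblivious contention-resolution guarantee, at the price of computing the peeling order. The paper's route needs only a sort by weight, but relies on the threshold-integration trick.
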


\newcommand{\PROOFDIS}{

In this appendix, we prove Lemma~\ref{lem:DIS} by reinterpretating Chan and Har-Peled's techniques for max-weight DIS~\cite{ChanH12} (more precisely, from their LP-based algorithm for unweighted DIS, and the conference version of their LP-based algorithm for weighted, non-discrete independent set, originally for objects with low union complexity).
Let $(y_s)_{s\in S}$ be the optimal LP solution.  

\paragraph{Algorithm.}  Let $b\ge 1$ be a parameter to be set later.
W.l.o.g., assume that the $w_s$'s are all distinct.

\begin{enumerate}
\item Pick a random subset $R\subseteq S$, where each object $s\in S$ is selected  independently
with probability $y_s/b$.
\item For each $s\in S$, put $s$ in $I$ iff (i) $s\in R$ and (ii) for all $s'\in S$ such that
$w_{s'}>w_s$ and $s\cap s'\cap P\neq\emptyset$, we have $s'\not\in R$.
Return $I$ (which is clearly a feasible solution).
\end{enumerate}

\paragraph{Analysis.} 
\begin{eqnarray*}
 \Ex\left[\sum_{s\in I}w_s\right] &=& \sum_{s\in S} w_s \cdot \frac{y_s}{b}  \prod_{\begin{subarray}{c}s'\in S:\\ w_{s'}>w_s,\ s\cap s'\cap P\neq\emptyset\end{subarray}} 
  \left( 1-\frac{y_{s'}}{b} \right)\\
 &\ge& \sum_{s\in S} \frac{w_s y_s}{b} \left( 1\ - \sum_{\begin{subarray}{c}s'\in S:\\ w_{s'}>w_s,\ s\cap s'\cap P\neq\emptyset\end{subarray}} 
 \delta y_{s'} \right)
  \ \ \ge\ \ \frac{\OPTDISLP}{b} \ -\ \frac{1}{b^2}\!\!\sum_{\begin{subarray}{c}s,s'\in S:\\ w_{s'}>w_s,\ s\cap s'\cap P\neq\emptyset\end{subarray}} 
   w_sy_sy_{s'}
\end{eqnarray*}
(by using the inequality $\prod_i (1-a_i) \ge (1-\sum_i a_i)$ for $a_i\in [0,1]$).

\newcommand{\RR}{\widehat{R}}
To bound this expression, we define another random subset $\RR$ where each object $s\in S$ is selected independently
with probability $y_s/2$, and we define an intermediate set of pairs for a given number $t$:
\[ \Psi_t \ =\ \{(s,s')\in \RR\times \RR: w_{s'}>w_s\ge t,\ \exists p\in s\cap s'\cap P\ \mbox{s.t.\ $p$ is not in any object of $\RR\setminus\{s,s'\}$}\}.
\]
By the 2-shallow cell complexity assumption, $|\Psi_t| \le c\,|\{s\in \RR: w_s\ge t\}|$, and so
\[ \Ex[|\Psi_t|] \ \le\ c\sum_{s\in S:\, w_s\ge t} y_s/2. \]
On the other hand, for a fixed  pair $(s,s')\in S\times S$ with $w_{s'} > w_s\ge t$ and $s\cap s'\cap P\neq\emptyset$,
we can fix a point $p_{s,s'}\in s\cap s'\cap P$ and infer (by a typical Clarkson--Shor-type argument~\cite{clarkson1989applications}) that 
\begin{eqnarray*}
\Pr[(s,s')\in \Psi_t] &\ge& \frac{y_s y_{s'}}{4} \prod_{s''\in S\setminus\{s,s'\}:\, p_{s,s'}\in s''} \left(1-\frac{y_{s''}}{2}\right)\\
&\ge& \frac{y_s y_{s'}}{4} \left( 1\ - \sum_{s''\in S\setminus\{s,s'\}:\, p_{s,s'}\in s''} \frac{y_{s''}}{2} \right)\ \ge\  
\frac{y_s y_{s'}}{8}.
\end{eqnarray*}
So,
\[ \Ex[|\Psi_t|]\ \ge\  \sum_{\begin{subarray}{c}s,s'\in S:\\ w_{s'}>w_s\ge t,\ s\cap s'\cap P\neq\emptyset\end{subarray}} \frac{y_sy_{s'}}{8}.
\]
It follows that
\[ \sum_{\begin{subarray}{c}s,s'\in S:\\ w_{s'}>w_s\ge t,\ s\cap s'\cap P\neq\emptyset\end{subarray}} y_sy_{s'}
\ \le\ 4c \sum_{s\in S:\, w_s\ge t} y_s.
\]
Integrating both sides over all $t$ from 0 to $\infty$, we get
\[ \sum_{\begin{subarray}{c}s,s'\in S:\\ w_{s'}>w_s,\ s\cap s'\cap P\neq\emptyset\end{subarray}} w_sy_sy_{s'}
\ \le\ 4c \sum_{s\in S} w_sy_s\ =\ 4c\,\OPTDISLP.
\]
We conclude that 
\[ \Ex\left[\sum_{s\in I}w_s\right]\ \ge\ \frac{\OPTDISLP}{b} - \frac{4c\,\OPTDISLP}{b^2}\ =\ \frac{\OPTDISLP}{16c}
\]
for $b=8c$.  (The algorithm can be derandomized by the standard method of conditional probabilities or expectations~\cite{MotwaniR95}.)
\qed

}

\begin{corollary}\label{cor:DIS}
Suppose $\SSS$ can be decomposed into $c'$ subclasses $\SSS_1,\ldots,\SSS_{c'}$,
each with 2-shallow cell complexity $f_2(n)\le cn$ for a constant $c$.
We can find a feasible solution to the max-weight DIS problem of weight $\Omega(\OPTDISLP/(cc'))$ in polynomial time.
\end{corollary}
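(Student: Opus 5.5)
The plan is to reduce to Lemma~\ref{lem:DIS}, applied separately to each subclass, and then output the best of the $c'$ resulting solutions. Let $(y_s)_{s\in S}$ be an optimal solution of the DIS LP for the whole input $S$; it has value $\OPTDISLP$. Partition $S$ into $S_i=S\cap\SSS_i$ for $i=1,\ldots,c'$ (an object lying in several subclasses is assigned to exactly one). By averaging, some index $i$ satisfies $\sum_{s\in S_i} w_sy_s\ge \OPTDISLP/c'$.

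Next, the restriction $(y_s)_{s\in S_i}$ is feasible for the DIS LP on the instance $(S_i,P)$. Each constraint $\sum_{s\in S_i:\,p\in s}y_s\le 1$ is implied by the corresponding constraint for $S$, because all $y_s\ge 0$. Hence the optimal LP value $\OPTDISLP^{(i)}$ of the subinstance is at least $\OPTDISLP/c'$.

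Since $\SSS_i$ has $2$-shallow cell complexity at most $cn$, Lemma~\ref{lem:DIS} applied to $(S_i,P)$ yields, in polynomial time, a set $I_i\subseteq S_i$ in which every point of $P$ lies in at most one object, with weight $\Omega(\OPTDISLP^{(i)}/c)=\Omega(\OPTDISLP/(cc'))$. This $I_i$ is also feasible for the original instance, because feasibility depends only on the chosen objects and $P$.

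The index $i$ is not known in advance, so the algorithm runs Lemma~\ref{lem:DIS} on every $S_j$ and returns the heaviest $I_j$. This costs $c'$ polynomial-time calls, or we can simply solve the LP first to identify the index $i$. The main point to check is that Lemma~\ref{lem:DIS} only uses the shallow cell complexity of the class that contains the input objects, so it applies to $S_i\subseteq\SSS_i$ as stated. Everything else is routine averaging.
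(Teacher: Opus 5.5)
Your proposal is correct and matches the paper's proof: pick a subclass $\SSS_j$ carrying at least a $1/c'$ fraction of the LP weight, then apply Lemma~\ref{lem:DIS} to $S\cap\SSS_j$. Your extra remarks fill in details the paper leaves implicit: the restricted LP solution stays feasible, and the right index can be found either by trying all $c'$ subclasses or by reading it off the LP solution.
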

\begin{proof}
Pick a subclass $\SSS_j$ with $\sum_{s\in S\cap\SSS_j}w_sy_s\ge \OPTDISLP/c'$,
and then apply Lemma~\ref{lem:DIS} to $S\cap\SSS_j$.
\end{proof}

\section{LP-Based Approximation Algorithm for Max Coverage}\label{sec:max:cov:lp}

In this section, we present our main approximation algorithm for the max coverage problem, defined as follows:

\begin{definition}
In the \emph{weighted (or budgeted) max coverage} problem, we are
given a set $S\subseteq\SSS$ of objects and a set $P\subseteq\UUU$ of points, 
where each object $s$ has a weight $w_s>0$,
and we want to find a subset of objects in $S$ covering the largest number of
points in $P$, such that the total weight of the subset is at most a given budget, which by rescaling we assume to be $1$.

The standard LP relaxation for weighted max coverage is the following:
\[\begin{array}{lll}
\text{max}  & \displaystyle\sum_{p\in P} z_p'\\
\text{s.t.} & z_p'\,\le\, z_p := \displaystyle\sum_{s\in S:\,p\in s} x_s  & \forall p\in P\\
            & 0\le z_p'\le 1                                         & \forall p\in P\\[1ex]
            & \displaystyle\sum_{s\in S} w_sx_s \le 1   & \\
            &  0\le x_s\le 1                   & \forall s\in S.
\end{array}\]
\end{definition}

Our (weighted) max coverage algorithm is actually a general black-box reduction to (weighted) DIS:

\begin{theorem}\label{thm:maxcov:LP}
Suppose that for any $S\subseteq\SSS$ and $P\subseteq\UUU$, we can find a feasible solution to the max-weight DIS problem of size at least $\OPTDISLP/C$ in polynomial time, where
$\OPTDISLP$ denotes the optimal LP value for max-weight DIS.

Then for any $S\subseteq\SSS$ and $P\subseteq\UUU$, we can find a feasible solution
to the weighted max coverage problem of value at least $(1-1/e+\Omega(1/C^2))\OPTLP-\vmax$ in polynomial time,
where $\OPTLP$ denotes the optimal LP value for weighted max coverage and $\vmax:=\max_{s\in S}|s\cap P|$ is the maximum number of points covered by a single object.  The term $\vmax$ can be removed in the unweighted case (i.e., when all the individual weights are equal to $1/k$).
\end{theorem}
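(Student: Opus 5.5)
The algorithm returns the best of three candidate solutions: (i) a standard randomized (pipage or dependent) rounding of an optimal LP solution $x$; (ii) the budgeted greedy solution; and (iii) a modified greedy solution that is seeded with a discrete independent set obtained from the black box. The analysis is a case distinction driven by a parameter $\delta=\Theta(1/C)$. Write $z_p=\sum_{s\ni p}x_s$, and call a point \emph{good} if $1-\delta\le z_p\le 1+\delta$ and the mass $z_p$ is spread over objects with small $x_s$ (say $\max_{s\ni p}x_s\le\delta$). All other points are \emph{bad}.

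\textbf{Step 1 (rounding wins on bad points).} For the rounding in (i), point $p$ is covered with probability at least $1-\prod_{s\ni p}(1-x_s)$. Compared with $(1-1/e)\min(1,z_p)$, this bound improves by an additive $\Omega(\delta)$ or $\Omega(\delta^2)$ whenever $z_p\notin[1-\delta,1+\delta]$ or some $x_s$ is large. This follows from strict concavity of $1-e^{-z}$ and strict slack in $1-x\le e^{-x}$. Hence if the bad points carry at least a $\delta$-fraction of $\OPTLP$, solution (i) achieves $(1-1/e+\Omega(\delta^3))\OPTLP$, up to rounding losses. In the weighted case those losses are charged to the single fractional item, which is the source of the $-\vmax$ term. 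I would tune the exponents of $\delta$ so that the final gain is $\Omega(1/C^2)$.

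\textbf{Step 2 (DIS seed on good points).} Otherwise, the set $P_{\rm good}$ of good points carries a $(1-\delta)$-fraction of $\OPTLP$. I set up a DIS instance on the point set $P_{\rm good}$. The DIS weight of object $s$ is a Lagrangian-adjusted value $|s\cap P_{\rm good}|-\lambda w_s\OPTLP$, where $\lambda<1$ is a constant and objects with nonpositive value are discarded. The vector $y_s=x_s/(1+\delta)$ is feasible for this DIS LP, because every good point has $z_p\le 1+\delta$. Its objective value is $\Omega(\OPTLP)$, since $\sum_s x_s|s\cap P_{\rm good}|=\sum_{p\in P_{\rm good}}z_p\approx|P_{\rm good}|$ while $\sum_s w_sx_s\le 1$. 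The black box then returns an independent set $I$ whose adjusted weight is at least $\OPTDISLP/C=\Omega(\OPTLP/C)$. Disjointness on $P_{\rm good}$ means the coverage of $I$ equals the sum of its parts. The Lagrangian term ensures that $I$ has density (points per unit budget) a constant factor above $\OPTLP$, with total coverage $\Omega(\OPTLP/C)$. I then trim $I$ by density to a prefix of budget $\beta=\Theta(1/C)$, losing at most one object, i.e.\ at most $\vmax$.

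\textbf{Step 3 (greedy from the seed).} I would run greedy on the remaining budget $1-\beta$, starting from this seed. The standard greedy recurrence gives coverage at least
\[
\mathrm{cov}(I)+\bigl(1-e^{-(1-\beta)}\bigr)\bigl(\OPT-\mathrm{cov}(I)\bigr).
\]
This exceeds $(1-1/e)\OPT$ by roughly $\mathrm{cov}(I)/e-\beta\,\OPT/e$, so a seed whose density exceeds $\OPT$ by a constant factor yields a gain of $\Omega(\beta)=\Omega(1/C)$. One subtlety is that this bound is measured against the integral $\OPT$, not $\OPTLP$. The density bound must therefore be compared with $\OPTLP\ge\OPT$, and $\lambda$ is chosen accordingly. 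In the unweighted case there is no trimming loss, so the $\vmax$ term disappears.

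\textbf{Main obstacle.} I expect the hardest step to be guaranteeing that the seed has density strictly above the benchmark while still having value $\Omega(\OPTLP/C)$. The DIS approximation is only relative to $\OPTDISLP$, so the Lagrangian shift has to be balanced against the $1/C$ loss. If that balance fails, my fallback is to compare directly with plain greedy (ii). When no such dense disjoint family exists, the good points are covered by heavily overlapping objects, and I would argue that greedy then gains on overlaps. That is the case that balancing the parameters $\delta$ and $\beta$ must reconcile, and I expect it to yield the $\Omega(1/C^2)$ improvement.
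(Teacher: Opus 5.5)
\textbf{There is a genuine gap in Steps~2--3.}

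With $\lambda<1$ a constant, positivity of the Lagrangian weight only guarantees that each $s\in I$ has density $|s\cap\Pgood|/w_s>\lambda\,\OPTLP$. That is \emph{below} $\OPTLP$, not a constant factor above it. The additive excess $\Omega(\OPTLP/C)$ returned by the black box may be spread over a large total weight $w(I)$. So a prefix of budget $\beta=\Theta(1/C)$ may have density only about $\lambda\,\OPTLP$.

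Such a seed is worse than what greedy guarantees anyway, since every greedy step has marginal density at least $\OPTLP$ minus what is already covered. Concretely, your residual bound $e^{-(1-\beta)}(1-\lambda\beta)\OPTLP$ exceeds $e^{-1}\OPTLP$ for small $\beta$.

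Nor can you hope for density $(1+\Omega(1))\OPTLP$ in the regime that matters. Unless plain greedy already gains, once greedy has covered at most $\delta\,\OPTLP$ points, every object's marginal density is at most $(1+\delta)\OPTLP$.

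The root error is the linearization in Step~3. The exact excess of your seeded-greedy bound over $(1-1/e)\OPT$ is $e^{-1}\bigl(e^{\beta}\mathrm{cov}(I)-(e^{\beta}-1)\OPT\bigr)$. When $\mathrm{cov}(I)=\beta\,\OPT$ this equals $e^{-1}\bigl(1-(1-\beta)e^{\beta}\bigr)\OPT\approx\beta^2\OPT/(2e)>0$. So the gain is second order, and ``roughly $\mathrm{cov}(I)/e-\beta\OPT/e$'' discards exactly that term. The fallback in your last paragraph is not an argument.

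Two smaller points. First, the conditions $z_p\ge1-\delta$ and $\max_{s\ni p}x_s\le\delta$ in your definition of good points are unnecessary: only an upper bound on $z_p$ is needed for DIS feasibility. The second condition also degrades your rounding gain to $\Omega(\delta^3)=\Omega(1/C^3)$. Second, run the greedy recurrence against $\OPTLP$ directly (the best marginal density is at least $\OPTLP$ minus what is covered). A bound against $\OPT$ does not give the LP-relative statement.

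\textbf{The missing idea.} A disjoint seed of density merely $(1-O(1/C))\OPTLP$ suffices, provided its weight is $\Omega(1/C)$. A seed of weight $\beta$ covering $\rho\beta\,\OPTLP$ leaves residual at most $(1-\rho\beta)\OPTLP\le e^{-\rho\beta-(\rho\beta)^2/2}\OPTLP$. Finishing greedily then gives residual $e^{-1+(1-\rho)\beta-\Omega(\beta^2)}\OPTLP$.

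\textbf{How the paper gets such a seed.} It runs plain greedy to the first index $h$ at which the best marginal density is at most $(1+\delta)t_h$. Here $t_h=\OPTLP-|A_h|$, and $A_h$ is the set covered by the first $h$ picks.
\begin{itemize}
\item If $|A_h|>\delta\,\OPTLP$, greedy itself gains $\Omega(\delta^2)$.
\item Otherwise all densities are capped by $(1+\delta)t_h$. Averaging against $\sum_s|s\cap\Pgood\setminus A_h|\,x_s\ge(1-2\delta)t_h$ and $\sum_s w_sx_s\le1$ shows that objects of density at least $(1-5\delta)t_h$ carry half the LP budget. DIS with weights $w_s$ on those objects gives $w(I)\ge1/(4C)$, hence a gain of $\Omega(1/C^2)$ for $\delta$ a small multiple of $1/C$.
\end{itemize}

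\textbf{Repairing your route.} Your Lagrangian idea can be fixed without the greedy branch by taking $\lambda=1-\theta$ with $\theta=c/C$ and $\delta\le\theta/2$. Then $\OPTDISLP=\Omega(\theta\,\OPTLP)$.
\begin{itemize}
\item If $w(I)\ge1$, a budget-filling prefix of $I$ covers at least $\lambda\,\OPTLP-\vmax$.
\item Otherwise the seed covers $u\,\OPTLP$ with $u\ge(1-\theta)\beta+\Omega(\theta/C)$. The final residual exponent is at most $-1+\theta\beta-(1-\theta)^2\beta^2/2-\Omega(\theta/C)\le-1-\Omega(1/C^2)$, uniformly in $\beta$, for a small enough constant $c$.
\end{itemize}
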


\begin{proof}
Let $(x_s)_{s\in S}$, $(z_p)_{p\in P}$, $(z_p')_{p\in P}$ be the optimal LP solution.

\paragraph{Algorithm.}  Let $\delta$ be $1/C$ times  a sufficiently small constant, and let $\delta'=5\delta$.

\newcommand{\aaa}{\hat{a}}
\newcommand{\AAA}{\hat{A}}
\newcommand{\bbeta}{\hat{\beta}}
\newcommand{\kkk}{\hat{k}}
\newcommand{\ttt}{\hat{t}}

\begin{enumerate}
\item[0.] Let $\Pbad = \{p\in P: z_p > 2\}$ and $\Pgood = P\setminus\Pbad$.  

\item If $|\Pbad| > \delta\,\OPTLP$, then 
return a solution by {\bf LP rounding}.

\item Otherwise, do {\bf standard greedy}: For $i=0,1,\ldots$, 
pick $a_{i+1}\in S$ maximizing $|a_{i+1}\cap P\setminus A_i|/w_{a_{i+1}}$ where
$A_i=(a_1\cup\cdots\cup a_i)\cap P$. 
Let $k$ be the smallest index such that $w_{a_1}+\cdots+w_{a_k}\ge 1$. 

Let $h$ be the smallest index such that $|a_{h+1}\cap P\setminus A_h|/w_{a_{h+1}} \le (1+\delta) (\OPTLP-|A_h|)$.  
If $|A_h| > \delta\,\OPTLP$, then return $\{a_1,\ldots,a_{k-1}\}$ (or, in the unweighted case, return $\{a_1,\ldots,a_k\}$).

\item Otherwise, let $\Sbad=\{s\in S: |s\cap \Pgood\setminus A_h|/w_s < (1-\delta')(\OPTLP-|A_h|)\}$ and $\Sgood=S\setminus\Sbad$.

\item Compute a feasible solution $I$ to the {\bf max-weight DIS} problem for the objects in $\Sgood$ and the points in $\Pgood$.

\item Do {\bf modified greedy}: Let $\aaa_1,\ldots,\aaa_{|I|}$ be the objects of $I$.
For $i=|I|,|I|+1,\ldots$, 
pick $\aaa_{i+1}\in S$ maximizing $|\aaa_{i+1}\cap P\setminus\AAA_i|/w_{\aaa_{i+1}}$ where
$\AAA_i= (\aaa_1\cup\cdots\cup \aaa_i)\cap P$. 
Let $\kkk$ be the smallest index such that $w_{\aaa_1}+\cdots+w_{\aaa_{\kkk}}\ge 1$.
Return $\{\aaa_1,\ldots,\aaa_{\kkk-1}\}$ (or, in the unweighted case, return $\{\aaa_1,\ldots,\aaa_{\kkk}\}$).
\end{enumerate}

\paragraph{Intuition.}
To develop intuition on why this algorithm could beat $1-1/e$, the following paragraphs provide some
informal explanations (they may be skipped for readers who prefer seeing the complete formal arguments directly). It is helpful to concentrate on the unweighted case (when all $w_s$'s are equal to $1/k$).

The algorithm returns one of 3 possible solutions: a known LP-rounding solution,
the greedy solution, or the solution of the modified greedy algorithm which is pre-loaded with a DIS.
If $|\Pbad| > \delta\,\OPTLP$, then by inspecting a known analysis of LP rounding,
it is not difficult to see that we gain an advantage over $1-1/e$.
On the other hand, if $|A_h| > \delta\,\OPTLP$, then by inspecting the known analysis
of the greedy algorithm, it is also not difficult to see that we gain an advantage
over $1-1/e$.  So, it suffices to focus on the case when $\Pbad$ and $A_h$ are both small.
For simplicity, let's pretend that $\Pbad$ and $A_h$ are both empty.

By definition of $h$, we then have $|s\cap P|\le (1+\delta)\OPTLP/k$ for all objects $s\in S$.
We know that $\sum_{s\in S} |s\cap P|x_s \ge \OPTLP$ and $\sum_{s\in S}x_s=k$.
If $(x_s)_{s\in S}$ is integral, this would imply that most of the terms $|s\cap P|$ for the $k$ objects $s$ in the solution must be close to $\OPTLP/k$---in other words, most of the objects in the solution are in $\Sgood$.
If $(x_s)_{s\in S}$ is not integral, it still implies that $\Sgood$ has most of the LP mass (i.e., $\sum_{s\in \Sgood}x_s \ge \Omega(k)$).  We can then find a large DIS of $\Sgood$ of size $\Omega(k)$ (since $(x_s/2)_{s\in S}$ is a feasible fractional solution to DIS, assuming $\Pbad$ is empty).

Now, a DIS $I$ of $\Sgood$ provides \emph{very} good coverage, since the subsets $s\cap P$ over all $s\in I$ are completely disjoint and have roughly the same size, near $\OPTLP/k$.  This gives us an advantage over greedy for the first $|I|$ iterations.  Continuing the greedy algorithm till the $k$-th iteration preserves this advantage.


\paragraph{Analysis of LP rounding (line~1).}
We use a known rounding procedure for weighted max coverage by Doerr, K\"unnemann, and Wahlstr\"om~\cite[Theorem~1]{DoerrKW10} (based on \cite{AgeevS04} in the unweighted case).\footnote{
In the unweighted case, we can alternatively use a more standard randomized rounding scheme: namely, for each of $k$ iterations, independently pick an object under the distribution that $s$ is selected with probability $x_s/k$.
However, Doerr et al.'s scheme has the advantages that it works in the weighted case and is deterministic (without needing derandomization later).
Srinvasan~\cite{Srinivasan01} described another rounding scheme for the weighted case which is more complicated.
}
They defined  
\[ F((x_s)_{s\in S})\ :=\ \sum_{p\in P} \left( 1-\prod_{s\in S:\,p\in s}(1-x_s)\right).
\]
If $(x_s)_{s\in S}$ is integral,  then $F((x_s)_{s\in S})$ corresponds to precisely the number of points covered by the solution.
For the optimal LP solution $(x_s)_{s\in S}$,
\begin{eqnarray*}
F((x_s)_{s\in S})
&\ge & \sum_{p\in P} \left(1 - \prod_{s\in S:\, p\in s} e^{-x_s}\right)
\ =\ \sum_{p\in P} (1-e^{-z_p})\\
&\ge & \sum_{p\in\Pgood} (1-e^{-z_p'}) + (1-e^{-2})|\Pbad|\\
&\ge & \sum_{p\in\Pgood} (1-e^{-1})z_p' + (1-e^{-2})|\Pbad| \qquad\mbox{(since $1-e^{-z}\ge (1-e^{-1})z$ for $z\in [0,1]$)}\\
&\ge & (1-e^{-1})(\OPTLP-|\Pbad|) + (1-e^{-2})|\Pbad|
\\
&\ge& (1-e^{-1})\OPTLP + \Omega(|\Pbad|)
\ \ge\ (1-e^{-1}+\Omega(\delta))\OPTLP,
\end{eqnarray*}
if $|\Pbad| > \delta\,\OPTLP$. 
Doerr et al.\ obtained a vector $(\xxx_s)_{s\in S}$ which is integral except
for one entry, such that $\sum_{s\in S} w_s\xxx_s\le 1$, $0\le \xxx_s\le 1$, and $F((\xxx_s)_{s\in S})\ge F((x_s)_{s\in S})$.
Rounding the non-integral entry downward, we get a feasible solution covering
at least $(1-e^{-1}+\Omega(\delta))\OPTLP - \vmax$ points.
(In the unweighted case, we can instead round the non-integral entry upward and avoid the $\vmax$ term.)
From now on, we may assume $|\Pbad|\le \delta\,\OPTLP$.

\paragraph{Analysis of standard greedy (line~2).}
We adapt known analysis of the
standard greedy algorithm
(e.g., see \cite{KhullerMN99} for the weighted case).


Let $\beta_i = w_{a_1}+\cdots+w_{a_i}$.
Let $t_i = \OPTLP - |A_i|$.  
Let $\Delta_i = |a_{i+1}\cap P\setminus A_i|/w_{a_{i+1}}$.
Then $t_i - t_{i+1} = |A_{i+1}|-|A_i| = w_{a_{i+1}}\Delta_i = (\beta_{i+1}-\beta_i)\Delta_i$.

For every $s\in S$, we know $|s\cap P\setminus A_i|/w_s\le \Delta_i$.  So,
\begin{eqnarray*}
\Delta_i &\ge& \sum_{s\in S} \Delta_i w_sx_s\ \ge\ \sum_{s\in S} |s\cap P\setminus A_i|\,x_s\ =\ \sum_{s\in S} \sum_{p\in P\setminus A_i:\, p\in s} x_s\ =\ \sum_{p\in P\setminus A_i} \sum_{s\in S:\, p\in s}x_s\\
&\ge& \sum_{p\in P\setminus A_i} z_p'\ \ge\ \OPTLP - |A_i|\ =\ t_i.
\end{eqnarray*}
Thus, 
\begin{equation}\label{eqn1}
t_{i+1}\ \le\ (1-(\beta_{i+1}-\beta_i)) t_i\ \le\ e^{-(\beta_{i+1}-\beta_i)}t_i \qquad \forall i\in\{0,1,\ldots\}.
\end{equation}

On the other hand, by definition of $h$, we have $\Delta_h \le (1+\delta) t_h$ 
and $\Delta_i > (1+\delta)t_i$ for $i\in\{0,\ldots,h-1\}$.
Thus,
\begin{equation}\label{eqn2} t_{i+1}\ \le\ \left(1-(1+\delta)(\beta_{i+1}-\beta_i)\right) t_i\ \le\ e^{-(1+\delta)(\beta_{i+1}-\beta_i)}t_i\qquad \forall i\in\{0,\ldots,h-1\}.
\end{equation}

Suppose $|A_h| > \delta\,\OPTLP$, i.e., $t_h < (1-\delta)\OPTLP$.
Recall that $\beta_k\ge 1$ and $\beta_0=0$.
If $k\le h$, then (\ref{eqn2}) implies $t_k \le e^{-(1+\delta)}\OPTLP$.
Otherwise, if $\beta_h\le\delta/2$, then (\ref{eqn1}) implies $t_k\le e^{-(1-\beta_h)}t_h < e^{-(1-\beta_h)}(1-\delta)\OPTLP\le e^{-1+\beta_h-\delta}\OPTLP = e^{-1-\Omega(\delta)}\OPTLP$.
On the other hand, if $\beta_h >\delta/2$, then
(\ref{eqn1}) and (\ref{eqn2}) imply
$t_k\le e^{-(1-\beta_h)}t_h\le e^{-(1-\beta_h)}e^{-(1+\delta)\beta_h}\OPTLP = e^{-1-\delta\beta_h}\OPTLP \le e^{-1-\Omega(\delta^2)}\OPTLP$.  In any case, 
the number of points covered by
$\{a_1,\ldots,a_{k-1}\}$ is 
$|A_{k-1}|\ge |A_k|-\vmax\ge 
(1-e^{-1-\Omega(\delta^2)})\OPTLP-\vmax \ge 
(1-e^{-1}+\Omega(\delta^2))\OPTLP-\vmax$. 
(In the unweighted case, we return $\{a_1,\ldots,a_k\}$ instead and avoid the $\vmax$ term.)

From now on, we may assume $|A_h| \le \delta\,\OPTLP$, i.e., $t_h  \ge (1-\delta)\OPTLP$.

\paragraph{On $\Sbad$ (line~3).}
Let $\gamma = 1- \sum_{s\in S} w_sx_s \ge 0$.
For every $s\in S$, we know $|s\cap P\setminus A_h|/w_s \le \Delta_h < (1+\delta) t_h$.  Now, 
\begin{eqnarray*}
\sum_{s\in S} |s\cap \Pgood\setminus A_h|\,x_s
&=& \sum_{s\in S}\sum_{p\in \Pgood\setminus A_h:\, p\in s} x_s
\ \ =\ \sum_{p\in \Pgood\setminus A_h} \sum_{s\in S:\, p\in s} x_s \ \ge\ \sum_{p\in \Pgood\setminus A_h} z_p'\\
&\ge& \OPTLP - |A_h| - |\Pbad|\ \ge\ t_h - \delta\,\OPTLP\ >\ (1-2\delta)t_h,
\end{eqnarray*}
implying that
\[
\sum_{s\in S} ((1+\delta)t_h w_s - |s\cap \Pgood\setminus A_h|)\,x_s
\ \le\ (1+\delta)t_h(1-\gamma) - (1-2\delta)t_h\ <\  (3\delta-\gamma) t_h.
\]
All the terms in the above sum are nonnegative.
Recall that $\Sbad = \{s\in S: |s\cap \Pgood\setminus A_h| < (1-\delta')t_hw_s\}$.
It follows that
\[ \sum_{s\in \Sbad}w_sx_s\ <\ \frac{(3\delta-\gamma)t_h}{(1+\delta)t_h - (1-\delta')t_h}
\ =\  \frac{3\delta-\gamma}{6\delta}\ =\ \frac12 - \frac{\gamma}{6\delta} \qquad\mbox{for $\delta'=5\delta$.}
\]

\paragraph{Analysis of DIS (line~4).}
Let $y_s=x_s/2$.
Then $(y_s)_{s\in \Sgood}$
is a feasible solution to the max-weight DIS LP for $\Sgood$ and $\Pgood$
(since $\sum_{s\in\Sgood:\, p\in s}y_s \le z_p/2\le 1$
for $p\in\Pgood$) and has weight
\[ \sum_{s\in \Sgood} w_sx_s/2 \ =\ \frac12\left(1-\gamma-\sum_{s\in\Sbad} w_sx_s\right)\ 
\ >\ \frac12 \left(1-\gamma - \frac{1}{2} + \frac{\gamma}{6\delta}\right) \ \ge\ \frac14.
\]

Thus, we can compute a feasible solution $I$  to max-weight DIS for $\Sgood$ and $\Pgood$
with $\sum_{s\in I} w_s \ge 1/(4C)$.

\paragraph{Analysis of modified greedy (line~5).}
Let $\bbeta_i = w_{\aaa_1}+\cdots+w_{\aaa_i}$.
Let $\ttt_i = \OPTLP - |\AAA_i|$.  
As before, 
\begin{equation}\label{eqn3} \ttt_{i+1}\ \le\ (1-(\bbeta_{i+1}-\beta_i)) \ttt_i\ \le\ e^{-(\bbeta_{i+1}-\bbeta_i)}\ttt_i \qquad \forall i\in\{|I|,|I|+1,\ldots\}.
\end{equation}

Recall that $\Sgood = \{s\in S: |s\cap \Pgood\setminus A_h| \ge (1-\delta')t_hw_s\}$.
For $i\in\{1,\ldots,|I|\}$, since no point of $\Pgood$ lies in more than one object of $I$,
\[ |\AAA_i|\ \ge\
\sum_{j=1}^i |\aaa_j\cap \Pgood \setminus A_h| 
\ \ge\ (1-\delta') t_h \sum_{j=1}^i w_{\aaa_j}
\ =\ (1-\delta') \bbeta_i t_h\ \ge\ (1-O(\delta))\bbeta_i\OPTLP.
\]
Thus, 
\begin{equation}\label{eqn4} \ttt_i\ \le\ (1-(1-O(\delta))\bbeta_i)\OPTLP\ \le\ e^{-(1-O(\delta))\bbeta_i - \Omega(\bbeta_i^2)}\OPTLP \qquad\forall i\in\{1,\ldots,|I|\}.
\end{equation}

Recall that $\bbeta_{\kkk} \ge 1$, and $\bbeta_{|I|}=\sum_{s\in I}w_s \ge 1/(4C)$.
If $\kkk \le |I|$, then (\ref{eqn4}) implies $\ttt_{\kkk}\le e^{-1+O(\delta)-\Omega(1)}\OPTLP \le e^{-1-\Omega(1)}\OPTLP$.
Otherwise, (\ref{eqn3}) and (\ref{eqn4}) imply
\begin{eqnarray*}
 \ttt_{\kkk} &\le& e^{-(1-\bbeta_{|I|})}\ttt_{|I|}
\ \le\  e^{-(1-\bbeta_{|I|})}e^{-(1-O(\delta))\bbeta_{|I|} - \Omega(\bbeta_{|I|}^2)}\OPTLP\ =\ e^{-1+O(\delta\bbeta_{|I|}) - \Omega(\bbeta_{|I|}^2)}\OPTLP\\
&\le& e^{-1-\Omega(\bbeta_{|I|}^2)}\OPTLP\ \le\ e^{-1-\Omega(1/C^2)}\OPTLP,
\end{eqnarray*}
by choosing $\delta$ to be $1/C$ times a sufficiently small constant.
So, the number of points covered by $\{\aaa_1,\ldots,\aaa_{\kkk-1}\}$ is at least
$(1-e^{-1-\Omega(1/C^2)})\OPTLP-\vmax\ge
(1-e^{-1}+\Omega(1/C^2))\OPTLP - \vmax$.
(In the unweighted case, we return $\{\aaa_1,\ldots,\aaa_{\kkk}\}$ instead and avoid the $\vmax$ term.)
\end{proof}

The following observation, essentially shown 
by Khuller, Moss, and Naor~\cite[Section~3]{KhullerMN99},
removes the $\vmax$ term for the weighted max coverage problem (though the approximation is no longer LP-relative%
\footnote{This is unavoidable, since the problem generalizes the 0-1 knapsack problem, which has integrality gap~2 for its standard LP.}%
).

\begin{lemma}\label{lem:vmax}
Suppose that for any $S\subseteq\SSS$ and $P\subseteq\UUU$, we can find a feasible solution to the 
weighted max coverage problem of value at least $\gamma\,\OPT-\vmax$ in polynomial time for a constant $\gamma<1$,
where $\OPT$ denotes the optimal value  and $\vmax:=\max_{s\in S}|s\cap P|$.

Then for any $S\subseteq\SSS$ and $P\subseteq\UUU$, we can find a feasible solution to the 
weighted max coverage problem of value at least $\gamma\,\OPT$ in polynomial time.
\end{lemma}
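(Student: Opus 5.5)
The plan is to follow the partial-enumeration trick of Khuller, Moss, and Naor. We guess a constant-size prefix of an optimal solution, namely its most valuable objects. Then we run the assumed algorithm on a residual instance in which every remaining object covers few new points, so that the $\vmax$ loss becomes a small fraction of $\OPT$.

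Fix an optimal solution $O$. Order its objects greedily by marginal coverage: $o_1$ covers the most points of $P$, $o_2$ covers the most points not covered by $o_1$, and so on. Let $m$ be a constant depending on $\gamma$, to be chosen later. We enumerate all subsets $G\subseteq S$ of at most $m$ objects with $\sum_{s\in G}w_s\le 1$; there are $O(|S|^m)$ of them, which is polynomial. For each $G$ we form the residual instance as follows:
\begin{itemize}
\item the points are $P'=P\setminus\bigcup_{s\in G}s$;
\item the budget is $1-\sum_{s\in G}w_s$, rescaled to $1$;
\item the objects are those $s\in S\setminus G$ with $w_s\le 1-\sum_{s'\in G}w_{s'}$ and $|s\cap P'|\le \min_{g\in G}|g\cap (P\setminus\bigcup_{g'\text{ before }g}g')|$.
\end{itemize}
The last condition uses the greedy order on $G$; it says that no remaining object adds more than the smallest marginal gain in $G$. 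We apply the assumed algorithm to each residual instance, combine its output with $G$, and return the best combination over all choices of $G$. Every combination is feasible, since its weight is within budget. We also compare against the best solution using at most $m$ objects, which covers the case $|O|\le m$.

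For the analysis, consider the iteration with $G=\{o_1,\ldots,o_m\}$. Let $v_j$ be the marginal gain of $o_j$ in the greedy order; the $v_j$ are nonincreasing. Each $o_j$ with $j>m$ has marginal gain at most $v_m$ with respect to $G$, so $O\setminus G$ survives the filtering and is feasible for the residual instance. Its residual value is at least $\OPT-|A_G|$, where $A_G$ is the set of points covered by $G$, and the residual $\vmax$ is at most $v_m\le |A_G|/m$. The algorithm therefore returns a residual solution of value at least $\gamma(\OPT-|A_G|)-|A_G|/m$. Adding $|A_G|$ gives a total of at least
\[
\gamma\,\OPT+(1-\gamma-1/m)|A_G|\ \ge\ \gamma\,\OPT
\]
as soon as $m\ge 1/(1-\gamma)$, which is a constant because $\gamma<1$.

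The step needing the most care is the filtering of the residual objects. It is the only place where the residual $\vmax$ gets bounded, and we must check that the optimal remainder $O\setminus G$ stays feasible under it. Ordering $O$ greedily by marginal gain is exactly what makes both hold: each later object's marginal gain relative to $G$ is at most $v_m$. A second point to verify is that the hypothesis applies to the residual instance. It does, because the hypothesis holds for arbitrary $S\subseteq\SSS$ and $P\subseteq\UUU$, and rescaling the budget changes no coverage values.
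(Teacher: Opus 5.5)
Your proof is correct and is essentially the paper's argument, the Khuller--Moss--Naor partial enumeration. The paper guesses an ordered prefix $a_1^*,\ldots,a_{b+1}^*$ of a greedy ordering of the optimum, commits only to the first $b$, and filters the residual objects by the marginal gain of $a_{b+1}^*$. You instead commit to all $m$ guessed objects and filter by the last marginal gain $v_m$; this yields the same bound $\gamma\,\OPT+(1-\gamma-1/m)|A_G|$.

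One small point needs care. You enumerate unordered sets $G$ and recompute a greedy order on $G$, and ties in that order can make the computed threshold smaller than $v_m$. If that happens, some $o_j$ with $j>m$ could be filtered out. To avoid this, use one fixed tie-breaking rule (e.g., smallest index) both in the algorithm and when ordering $O$ in the analysis, or enumerate ordered tuples as the paper does.
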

\begin{proof}
Let $b$ be a sufficiently large constant. For $i=0,\ldots,b$, we guess the object $a_{i+1}^*$ in the optimal solution that maximizes $|a_{i+1}^*\cap P\setminus A_i^*|$ where $A_i^*=(a_1^*\cup\cdots\cup a_i^*)\cap P$.
We then solve the problem for $P\setminus A_b^*$ and $\{s\in S: |s\cap P\setminus A_b^*|\le |a_{b+1}^*\cap P\setminus A_b^*|\}$ with budget $1-(w(a_1^*)+\cdots+w(a_b^*))$, and add back $a_1^*,\ldots,a_b^*$
to the solution.  We return the best solution over all $O(n^{b+1})$ guesses.  (If the optimal solution has fewer than $b+1$ objects,
we can find it directly by trying all $O(n^b)$ solutions.)

Let  $\vmax := |a_{b+1}^*\cap P\setminus A_b^*|\le |a_{b+1}^*\cap P\setminus A_i^*|\le |a_{i+1}^*\cap P\setminus A_i^*| =|A_{i+1}^*|-|A_i^*|$
for each $i\in\{0,\ldots,b-1\}$.  Summing yields $b\vmax\le |A_b^*|$, i.e., $\vmax\le |A_b^*|/b$.
The final solution for the correct guess has value at least
$|A_b^*| + \gamma (\OPT-|A_b^*|) - |A_b^*|/b = \gamma\,\OPT + (1-\gamma-1/b)|A_b^*| \ge \gamma\,\OPT$
by choosing $b\ge 1/(1-\gamma)$.
\end{proof}

Putting Theorem~\ref{thm:maxcov:LP} and Lemma~\ref{lem:vmax}
together with Lemma~\ref{lem:DIS} and Corollary~\ref{cor:DIS}, we now obtain the following immediate consequences:

\begin{corollary}
For the following families of objects,
there are polynomial-time $(1-1/e+\Omega(1))$-approximation algorithms for
the weighted max coverage problem:
\begin{enumerate}
\item[(i)] 2D disks/pseudodisks/$r$-admissible objects, 2D fat rectangles, 2D similar-height rectangles,
2D rectangles stabbable by one horizontal line,
2D similar-size fat rectangles, 2D homothets of one convex object, 3D halfspaces, 3D unit cubes, 3D similar-size fat boxes, and 3D boxes stabbable by one point;
\item[(ii)] 2D fat rotated rectangles with $O(1)$ orientations, 2D similar-height rotated rectangles with $O(1)$ orientations, 2D homothets of $O(1)$ convex objects, 3D boxes stabbable by $O(1)$ points, $O(1)$-dimensional slabs with $O(1)$ orientations.
\end{enumerate}
\end{corollary}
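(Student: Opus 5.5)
The corollary follows by chaining earlier results. I would verify that each listed family has 2-shallow cell complexity $f_2(n)=O(n)$, or decomposes into $O(1)$ such subfamilies. Then Lemma~\ref{lem:DIS} (respectively Corollary~\ref{cor:DIS}) supplies an LP-relative DIS algorithm with constant $C$. Feeding this into Theorem~\ref{thm:maxcov:LP} gives value at least $(1-1/e+\Omega(1/C^2))\OPTLP-\vmax$. Since $\OPTLP\ge\OPT$, this is at least $\gamma\,\OPT-\vmax$ with $\gamma=1-1/e+\Omega(1)<1$ a constant. Lemma~\ref{lem:vmax} then removes the $\vmax$ term. One detail to check is that the hypothesis of Theorem~\ref{thm:maxcov:LP} is required for \emph{every} $S\subseteq\SSS$, $P\subseteq\UUU$. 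This holds because shallow cell complexity is hereditary, and Lemma~\ref{lem:vmax} only invokes the algorithm on subinstances of the same family.

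The substantive step is the geometric input: linear 2-shallow cell complexity for the families in part~(i). Here I would appeal to the standard Clarkson--Shor argument. A cell of depth exactly $2$ in the arrangement of $n$ objects is a region covered by exactly two objects. By random sampling with probability $1/2$, each such cell appears with constant probability as a boundary feature of the union of the sample. Hence $f_2(n)=O(U(n))$, where $U(n)$ is the union complexity.

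The families in~(i) all have linear union complexity in the literature:
\begin{itemize}
\item pseudodisks and $r$-admissible objects (Kedem et al.; Whitesides--Zhao);
\item fat rectangles and homothets of one convex object in 2D;
\item similar-height rectangles and rectangles stabbed by a horizontal line, via a Chan--Grant/Aronov et al.-style argument;
\item lower envelopes of planes for 3D halfspaces, and unit cubes, similar-size fat boxes, and boxes containing a common point in 3D (Boissonnat et al.; Ezra et al.).
\end{itemize}
For some of these, such as 3D halfspaces, where "union complexity" is not the natural measure, I would instead bound the depth-$2$ cells directly. For halfspaces this is the $O(n\ell^2)$ bound on $\le\ell$-levels in 3D. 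Since these are exactly the families for which Chan et al.~\cite{ChanGKS12} and~\cite{ChanD0SW18} already established $f_\ell(n)\le n\ell^{O(1)}$, I would cite those bounds and specialize to $\ell=2$.

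For part~(ii), I would partition each input into $O(1)$ classes that fall under part~(i):
\begin{itemize}
\item rotated rectangles by orientation, after rotating each class to be axis-aligned;
\item homothets of $O(1)$ convex objects by which base object they are homothets of;
\item boxes stabbable by $O(1)$ points by a stabbing point (breaking ties arbitrarily);
\item slabs by orientation, where each class is essentially 1D intervals, which have linear complexity.
\end{itemize}
Corollary~\ref{cor:DIS} then applies with $c'=O(1)$.

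The main obstacle is purely bibliographic and technical. One must certify linear 2-shallow cell complexity for each item, particularly the less standard ones such as similar-size fat boxes in 3D and similar-height rectangles, and treat the discrete setting correctly. The discrete setting is not a problem: points of $P$ inside depth-2 cells give at most as many distinct subsets $S_{\mid p}$ as there are such cells. I would handle this by citing the known shallow cell complexity bounds compiled in~\cite{ChanGKS12,ChanD0SW18} rather than reproving them.
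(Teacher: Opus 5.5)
Your proposal is correct and is essentially the paper's own argument: the paper also chains Lemma~\ref{lem:DIS} (or Corollary~\ref{cor:DIS}), Theorem~\ref{thm:maxcov:LP} and Lemma~\ref{lem:vmax}. It certifies linear 2-shallow cell complexity for the families in (i) by citing their known linear union complexity, which implies it via Clarkson--Shor, and it handles (ii) by splitting into $O(1)$ such subfamilies, just as you do. Your one-line Clarkson--Shor sketch is a little loose: a depth-2 cell is not itself a feature of the sample's union boundary. Instead one charges it to a vertex on its boundary, which is a union vertex of the sample with constant probability, and counts vertex-free cells separately. Since both you and the paper treat this implication as a cited fact, nothing is missing.
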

\begin{proof}
The families in (i) have linear union complexity, and thus linear 2-shallow cell complexity, by known facts from combinatorial geometry~\cite{AgarwalPS08} (see \cite[Theorem~1.3]{MatousekPSSW94} on similar-size fat triangles).
The families in (ii) can be decomposed into $O(1)$ sub-families that have linear union complexity and 2-shallow complexity.
\end{proof}

We can also consider the \emph{weighted max hitting set} problem: select a subset of points of $P$ with total weight at most a given budget, maximizing the number of objects of $S$ that are hit by the subset.
Max hitting set is equivalent to max coverage in the dual set system,
where the roles of points and objects are reversed.
By known combinatorial bounds on shallow cell complexity in the dual system (see~\cite{BuzagloPR13} on pseudodisks),
we obtain:

\begin{corollary}
For the following families of objects,
there are polynomial-time $(1-1/e+\Omega(1))$-approximation algorithms for
the weighted max hitting set problem:
\begin{itemize}
\item 2D disks/pseudodisks,
2D homothets of $O(1)$ convex objects,
3D halfspaces, 3D unit cubes, and
$O(1)$-dimensional slabs with $O(1)$ orientations.
\end{itemize}
\end{corollary}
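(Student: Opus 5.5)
The max hitting set problem on $(P,S)$ is exactly weighted max coverage on the dual set system. In that system the ``objects'' are the points $p\in P$, each identified with the set $\{s\in S: p\in s\}$. The ``points'' are the objects $s\in S$, and the weights and budget sit on the elements of $P$. So the plan is to apply Theorem~\ref{thm:maxcov:LP} and Lemma~\ref{lem:vmax} to this dual system. By Lemma~\ref{lem:DIS} and Corollary~\ref{cor:DIS}, all that remains is to check that, for each listed family, the dual system has linear 2-shallow cell complexity, or splits into $O(1)$ subfamilies that each do.

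Concretely, for a finite set $Q$ of points (the dual objects), the dual 2-shallow cell complexity counts distinct pairs $\{q,q'\}\subseteq Q$ such that some object $s$ of the family satisfies $s\cap Q=\{q,q'\}$. I would bound this count by $O(|Q|)$ family by family.

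\emph{Halfspaces in 3D.} A pair $\{q,q'\}$ cut off by a halfspace gives an edge of the order-$\le 2$ structure. The Clarkson--Shor bound on $\le\!k$-sets for $k=2$ gives $O(|Q|)$ such pairs, since the number of $\le k$-sets of $n$ points in $\R^3$ is $O(nk^2)$.

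\emph{Disks and pseudodisks.} I would use the known result (Buzaglo, Pinchasi, Rote~\cite{BuzagloPR13}) that the hypergraph of subsets of $Q$ cut out by pseudodisks has $O(|Q|k)$ hyperedges of size $\le k$. For $k=2$ this is linear. Homothets of one convex object form a pseudodisk family, so homothets of $O(1)$ convex objects decompose into $O(1)$ such families.

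\emph{Unit cubes in 3D and $O(1)$-dimensional slabs with $O(1)$ orientations.} For slabs with a fixed orientation, the points are effectively one-dimensional, and an interval containing exactly two points forces them to be consecutive. That gives at most $|Q|-1$ pairs per orientation. For unit cubes, I would partition space into a grid of unit cells. Within each cell, a unit cube restricted to that cell is an orthant, so the system reduces to a constant number of dual orthant systems in 3D. Those I would bound using the linear bound for $\le 2$-level structure of 3D dominance, via a Clarkson--Shor argument combined with linear union complexity of orthants.

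The main obstacle is precisely these dual shallow cell bounds. The pseudodisk case leans on the cited result. The unit-cube case needs care, because a unit cube meeting a point set spans up to eight grid cells; the remedy would be to apply the decomposition to the cubes before dualizing. Once these bounds are in hand, Theorem~\ref{thm:maxcov:LP} gives a $(1-1/e+\Omega(1))\OPTLP-\vmax$ solution, and Lemma~\ref{lem:vmax} removes the $\vmax$ term, which proves the corollary.
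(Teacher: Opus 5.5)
Your overall route matches the paper's: dualize, check linear dual 2-shallow cell complexity family by family, and plug into Theorem~\ref{thm:maxcov:LP}, Lemma~\ref{lem:DIS} and Lemma~\ref{lem:vmax}. The halfspace, pseudodisk/homothet and slab cases are fine, with two small corrections. First, splitting the \emph{geometric} family (several convex bodies, several slab orientations) partitions the dual \emph{universe}, so the counts of 2-sets simply add. Corollary~\ref{cor:DIS}, which partitions the dual \emph{objects} (the points), is not what you are actually using. Second, the ``$O(|Q|k)$ hyperedges of size $\le k$'' form you quote overstates what holds even for disks, where subsets of size at most $k$ can number on the order of $nk^2$. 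This is harmless, since only $k=2$ is needed.

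The genuine gap is the 3D unit-cube case. Your grid partitions the space containing $Q$, and inside one cell you only count pairs $\{q,q'\}$ that lie in that cell. But a unit cube can cut out $\{q,q'\}$ with $q$ and $q'$ in different adjacent cells. Counting those pairs means counting cube positions at which the clipped pieces in two different cells each have depth exactly one. That is an overlay of orthant arrangements of different orientations, and the single-orientation dominance bound does not control it. Your remedy is left unspecified. Its literal reading, cutting each cube into cell pieces before dualizing, fails, because a 2-set of a cube need not be a 2-set of any of its pieces.

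The missing observation is that unit cubes are self-dual. With $K=[-\frac12,\frac12]^3$, we have $q\in c+K$ iff $c\in q+K$. So the dual system is isomorphic to the primal system of unit cubes $\{q+K: q\in Q\}$ versus points. Its 2-shallow cell complexity is linear by the linear union complexity of 3D unit cubes plus Clarkson--Shor, the same fact that puts 3D unit cubes in the max-coverage corollary. If you want to keep a grid, apply it in this parameter space: group the cubes by the grid cell containing their center. Then each relevant point becomes an orthant clipped to that cell, of one of eight orientations, and each point is relevant to at most eight cells. You would still need linear union complexity for octants of \emph{mixed} orientations in $\R^3$, not just the single-orientation bound.
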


We can also consider the \emph{weighted max dominating set} problem: 
select a subset of objects of $S$ with total weight at most a given budget, maximizing the number of objects of $S$ that are intersected by the subset.
Max dominating set reduces to max coverage in an appropriate set system.
By known combinatorial bounds on shallow cell complexity in such set systems \cite{AronovDEP21},
we obtain:

\begin{corollary}
For the following families of objects,
there are polynomial-time $(1-1/e+\Omega(1))$-approximation algorithms for
the weighted max dominating set problem:
\begin{itemize}
\item 2D disks/pseudodisks and
2D homothets of $O(1)$ convex objects.
\end{itemize}
\end{corollary}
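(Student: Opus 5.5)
The statement to prove is the last corollary: a $(1-1/e+\Omega(1))$-approximation for weighted max dominating set for 2D disks/pseudodisks and 2D homothets of $O(1)$ convex objects. The plan is to reduce to weighted max coverage and then chain Theorem~\ref{thm:maxcov:LP}, Lemma~\ref{lem:vmax}, Lemma~\ref{lem:DIS} and Corollary~\ref{cor:DIS}, as in the previous two corollaries. The only geometric input needed is a linear bound on the 2-shallow cell complexity of the resulting set system, which I would take from~\cite{AronovDEP21}.

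\textbf{Reduction.} Given the objects $S$ with weights and a budget, I build a max coverage instance. The points are $P := S$, one point per object (viewing objects as points is harmless since weights on $P$ act as multiplicities). For each $s\in S$ there is a set $N[s]:=\{t\in S: s\cap t\neq\emptyset\}$, with weight $w_s$. A subfamily $T\subseteq S$ dominates exactly $\bigcup_{s\in T}N[s]$, and the budget constraints coincide. So the two problems are identical, including their optimal values and their LP relaxations.

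\textbf{Shallow cell complexity.} Next I need the set system $\{N[s]\}$ to have 2-shallow cell complexity $f_2(n)\le cn$ (or a decomposition into $O(1)$ such systems). A cell at depth $2$ corresponds to an object $t$ (a point of the instance) intersecting exactly two of the chosen objects $s,s'$, with the pair $\{s,s'\}$ distinct across cells. This is exactly the kind of bound proved by Aronov et al.~\cite{AronovDEP21} for intersection hypergraphs of pseudodisks: for a subfamily $R$ of pseudodisks and another family of pseudodisks, the number of distinct pairs of $R$ that are the only members of $R$ hit by some object is $O(|R|)$. I would invoke that result directly.

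For homothets of $O(1)$ convex objects, I first partition $S$ into $O(1)$ classes according to which base convex object each member is a homothet of. Within a single class, the sets $s\cap$ (points from a fixed class) form a pseudodisk-type system, which gives linear complexity. The remaining issue is that the points of $P$ come from all classes. To handle this, I split each set $N[s]$ into $O(1)$ pieces by point class and treat each pair (object class, point class) as a subfamily. Corollary~\ref{cor:DIS}, which allows an $O(1)$-way decomposition of the objects, then supplies the DIS oracle. The catch is that Corollary~\ref{cor:DIS} decomposes objects, not points. I would get around this by noting that the Chan--Har-Peled analysis of Lemma~\ref{lem:DIS} only counts depth-2 cells. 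Its proof goes through when the point set is a union of $O(1)$ classes each giving linear complexity, because we simply sum the per-class bounds.

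\textbf{Conclusion.} With the DIS oracle of ratio $C=O(1)$ in hand, Theorem~\ref{thm:maxcov:LP} gives value $(1-1/e+\Omega(1))\OPTLP-\vmax$, and Lemma~\ref{lem:vmax} removes the $\vmax$ term. The main obstacle is the middle step: verifying that the cited combinatorial bound really applies in the form needed (depth-2 cells over arbitrary subsets of the objects) and handling the mixed-class homothet case. Everything else is bookkeeping.
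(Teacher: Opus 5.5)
Your route is the paper's own: reduce max dominating set to max coverage with universe $S$ and sets $N[s]$, take linear 2-shallow cell complexity from \cite{AronovDEP21}, and then chain Lemma~\ref{lem:DIS}/Corollary~\ref{cor:DIS}, Theorem~\ref{thm:maxcov:LP} and Lemma~\ref{lem:vmax}. The paper's justification is just a one-line citation. For disks and pseudodisks your argument is complete.

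The ``catch'' you raise about Corollary~\ref{cor:DIS} is not an obstacle. Suppose the universe of objects-as-points is a union of $O(1)$ classes. Then, directly from the definition, the number of distinct depth-2 traces is at most the sum of the per-class counts. So you never need to reopen the Chan--Har-Peled analysis.

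\textbf{The gap: homothets of several convex bodies $K_1,\ldots,K_c$, cross-class pairs.} After splitting into pairs (object class $i$, point class $j$), you need the following bound. For every finite set $R$ of homothets of $K_i$, only $O(|R|)$ pairs $\{s,s'\}\subseteq R$ can be the exact set of members of $R$ met by a single homothet of $K_j$.

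For $i=j$ this is \cite{AronovDEP21}. For $i\neq j$ it does not follow from that paper. Their theorem needs the hit family and the hitting family to lie in one common pseudodisk family. Homothets of two different convex bodies are not pseudodisks with respect to each other; for example, a tall thin rectangle and a wide thin rectangle cross four times. Your paraphrase of their result (``a subfamily $R$ of pseudodisks and another family of pseudodisks'') overstates what they prove. Calling the mixed system ``pseudodisk-type'' gives no justification for it.

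\textbf{How to fix it.} The mixed-class bound is true, but it needs a tool that only requires each family separately to be non-piercing. Raman and Ray's planar-support theorem does this. For two families that are each non-piercing (homothets of a single convex body are pseudodisks, hence non-piercing), the intersection hypergraph on $R$ has a planar support. That is a planar graph on $R$ in which every hyperedge induces a connected subgraph. Every size-2 hyperedge must then be an edge of this planar graph, so there are at most $3|R|$ of them. With this, each (object class, point class) pair has linear 2-shallow cell complexity, and the rest of your argument goes through.
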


For an application to planar graphs, Browne and Chang~\cite{BrowneC26} recently studied
the problem of finding minimum-weight subset of vertices whose distance-at-most-$r$ neighborhoods
cover all vertices in an unweighted planar graph.  They obtained an $O(1)$-approximation algorithm using
the set cover framework in \cite{ChanGKS12} by proving bounded shallow-cell complexity for the associated set system.
We immediately obtain a $(1-1/e+\Omega(1))$-approximation algorithm for the problem 
of finding a subset of vertices with total weight at most a budget, maximizing the number of vertices covered by their distance-at-most-$r$ neighborhoods.

For a non-algorithmic application of Theorem~\ref{thm:maxcov:LP}, we have the following new, interesting combinatorial result:

\begin{corollary}
For any set $P$ of $m$ points and any set $S$ of $n$ halfspaces in $\R^3$, such that each point is
contained in at least $n/k$ halfspaces of $S$, there exists $k$ halfspaces of $S$
which covers at least $(1-1/e+\Omega(1))m$ points of $P$. 
(The same holds for pseudo-disks/$r$-admissible objects in $\R^2$.)
\end{corollary}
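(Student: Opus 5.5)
The plan is to apply Theorem~\ref{thm:maxcov:LP} in the unweighted case, with all $w_s=1/k$, to the instance $(S,P)$ of 3D halfspaces, and to show that the LP optimum equals $m$. Setting $x_s=k/n$ for every $s\in S$ gives $\sum_s w_sx_s=\sum_s x_s/k=1$ and $0\le x_s\le 1$. We may assume $k\le n$; otherwise we can simply take all of $S$, and since each point lies in at least one halfspace when $n/k>0$, this covers everything. For every point $p$ we get $z_p=\sum_{s\ni p}x_s\ge (n/k)(k/n)=1$, so we may set $z_p'=1$. Therefore $\OPTLP\ge m$, and since $\OPTLP\le |P|=m$ trivially, we have $\OPTLP=m$.

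Next we need the DIS hypothesis of Theorem~\ref{thm:maxcov:LP} with $C=O(1)$. For this we invoke Lemma~\ref{lem:DIS}, which requires the class of halfspaces in $\R^3$ to have linear 2-shallow cell complexity. This is the standard Clarkson--Shor consequence of the fact that the union of $n$ halfspaces in $\R^3$ has linear complexity, equivalently that the $\le\ell$-level of planes has complexity $O(n\ell^2)$. It is the same fact the paper already uses when it lists 3D halfspaces among the examples. Lemma~\ref{lem:DIS} then gives a DIS of weight $\Omega(\OPTDISLP)$, so $C=O(1)$.

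With both ingredients in place, Theorem~\ref{thm:maxcov:LP} in the unweighted case has no $\vmax$ term. It outputs $k$ halfspaces covering at least $(1-1/e+\Omega(1/C^2))\OPTLP=(1-1/e+\Omega(1))m$ points. The existence claim follows; we only need that the algorithm's output is a valid set of $k$ halfspaces. The same argument applies verbatim to pseudodisks and $r$-admissible objects in $\R^2$, since those also have linear union complexity and hence linear 2-shallow cell complexity.

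The main step to watch is making sure the unweighted version of the theorem indeed returns at most $k$ objects. In the unweighted case $\kkk=k$ exactly, so this holds. The other point to check is the degenerate case in which $n/k$ is not an integer; it causes no trouble, because the LP solution above only needs $\sum_{s\ni p}x_s\ge 1$.
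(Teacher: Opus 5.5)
Your proposal is correct and follows the paper's own argument. The paper also uses the feasible LP point $x_s=k/n$, $z_p'=1$ to get $\OPTLP=m$, and then applies Theorem~\ref{thm:maxcov:LP} in the unweighted case (so there is no $\vmax$ term) together with Lemma~\ref{lem:DIS} for the linear 2-shallow cell complexity of 3D halfspaces. Your separate treatment of the case $k>n$, where $x_s=k/n$ would exceed $1$, fills in a detail that the paper's one-line proof leaves implicit.
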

\begin{proof}
Here, $\OPTLP=m$ for the (unweighted) max coverage LP,
by setting $x_s=k/n$ for all $s\in S$ and $z_p'=1$ for all $p\in P$.
\end{proof}

\section{Parameterized Approximation Scheme for Max Coverage}\label{sec:improve:BKL}

In this section, we consider the (unweighted) max coverage problem in the setting when $k$ is small.
For any set system with VC dimension $d$, Badanidiyuru, Kleinberg, and Lee~\cite{BadanidiyuruKL12} presented an FPT approximation scheme, specifically, a $(1-\eps)$-approximation algorithm running in $2^{O((dk^2/\eps^5)\polylog(dk/\eps))}\poly(|P|,|S|)$ time.
We simplify their algorithm and improve their time bound to $2^{O((dk/\eps)\log(dk/\eps))}\poly(|P|,|S|)$.

Badanidiyuru, Kleinberg, and Lee's key observation is that either the greedy solution yields a good enough approximation, or
the portion $A^G$ of the universe covered by the greedy solution have substantial overlap with the portion $A^*$ of the universe covered by the optimal solution.  We cannot afford to guess $A^*\cap A^G$ exactly.
Instead, we approximate $A^G$ by a small subset $R$ of points (good approximations exist for set systems with bounded VC dimension), guess $Q=A^*\cap R$, and guess how each object $a_i^*$ in the optimal solution $A^*$ covers $R$.
Then we can remove $A^G$ and recurse.

We first recall known facts about set systems of bounded VC dimension:

\begin{definition}
For a set system $(\UUU,\SSS)$ and a set $P\subseteq\UUU$,
a subset $R\subseteq P$ is called a \emph{$\delta$-approximation of $P$} if for every $\gamma\in\SSS$,
\[ \left| \frac{|\gamma\cap R|}{|R|} - \frac{|\gamma\cap P|}{|P|} \right|
\ \le\ \delta.
\]
\end{definition}

\begin{fact}[\cite{BronnimannCM99}] \label{fact:approx-net}
Let $(\UUU,\SSS)$ be a set system with VC dimension $D$ equipped with a subsystem oracle.
Given a set $P\subseteq\UUU$ of size $n$ and $\delta>0$,
there exists a $\delta$-approximation of $P$ of size
$O(\frac{D}{\delta^2}\log\frac{D}{\delta})$, and it can be constructed
in $O(D)^{3D} (\frac{1}{\delta^2}\log\frac{D}{\delta})^D n$ time.
\end{fact}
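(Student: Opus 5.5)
The plan is the classical two-stage argument for $\delta$-approximations: first a random sample of size $O(\frac{D}{\delta^2}\log\frac{D}{\delta})$ is already a $\delta$-approximation with positive probability, which gives existence; then this sampling is replaced by a deterministic merge-and-reduce construction driven by the subsystem oracle and discrepancy-based halving, which gives the stated running time.

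\textbf{Existence.} Draw $R$ as $r$ independent uniform samples from $P$. For a single range $\gamma\in\SSS$, Chernoff--Hoeffding gives $\Pr\bigl[\,\bigl||\gamma\cap R|/r-|\gamma\cap P|/|P|\bigr|>\delta\,\bigr]\le 2e^{-2\delta^2 r}$. A union bound over all of $\SSS$ is too expensive, so we use the standard double-sampling (symmetrization) trick of Vapnik--Chervonenkis. Compare $R$ with a ghost sample $R'$ of the same size. Condition on $R\cup R'$; by Sauer's lemma, the number of distinct traces of ranges on this set is $O((2r)^D)$. A union bound over these traces gives failure probability $O((2r)^D)e^{-\Omega(\delta^2 r)}$. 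This is less than $1$ once $r=C\frac{D}{\delta^2}\log\frac{D}{\delta}$, which proves existence. The logarithmic factor comes from solving $D\log r\lesssim \delta^2 r$.

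\textbf{Construction.} Run the Matou\v{s}ek / Chazelle--Matou\v{s}ek / Br\"onnimann--Chazelle--Matou\v{s}ek scheme.
\begin{enumerate}
\item Partition $P$ into blocks of size a power of two, about $s=\Theta(\frac{D}{\delta^2}\log\frac{D}{\delta})$.
\item Repeatedly merge two current sets and then \emph{halve} the union.
\item Halving works as follows. Use the subsystem oracle to list the $O(m^D)$ distinct traces of $\SSS$ on the current set of size $m$. Find a low-discrepancy two-coloring with discrepancy $O(\sqrt{m\log m^D})$, deterministically, by the method of conditional expectations on the random coloring. Keep one color class.
\item The halving step's relative error is the discrepancy divided by the size, $O(\sqrt{D\log m/m})$.
\end{enumerate}
Two standard facts drive the error analysis. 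First, an $\eps$-approximation of an $\eps'$-approximation is an $(\eps+\eps')$-approximation. Second, approximations of disjoint equal-size pieces combine into an approximation of their union with the same error. The per-level errors therefore add up. If the merge-and-reduce tree is arranged so that the total size is halved at the large sizes (Matou\v{s}ek's trick of alternating merge and reduce), the errors form a geometric series dominated by the last level. The total error is then $O(\sqrt{D\log s/s})\le\delta$ for the chosen $s$.

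\textbf{Running time.} Each halving on $O(s)$ points costs $O(D)^{3D}s^{D}\cdot O(s)$: the factor $s^{D}$ comes from enumerating traces with the oracle, and the $O(D)^{3D}$ absorbs the constants in Sauer's bound and the oracle. Summing over the $O(n/s)$ blocks in the tree, with the geometric decrease, gives $O(D)^{3D}(\frac{1}{\delta^2}\log\frac{D}{\delta})^D n$.

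The main obstacle is this last accounting: arranging the tree and the cost per level so that the total is linear in $n$, with the exponent $D$ attached only to $s$ and not to $n$. I would follow the Chazelle--Matou\v{s}ek bookkeeping here, and in this paper simply cite~\cite{BronnimannCM99} for the details.
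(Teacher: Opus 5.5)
The paper does not prove this fact; it simply cites~\cite{BronnimannCM99}. Your outline is the standard argument behind that citation: VC/symmetrization sampling for existence, then deterministic merge-and-halve, with the subsystem oracle listing the $O(m^D)$ traces and conditional expectations giving a balanced low-discrepancy coloring. Since you also defer the bookkeeping to that reference, you and the paper end up in the same place, and your existence argument is fine.

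The bookkeeping you do sketch, however, would not give the stated bound if taken literally.

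\textbf{The errors are not summable as written.} If every halving is on $O(s)$ points, the set sizes never grow. The per-round errors $\Theta(\sqrt{D\log s/s})$ then simply add up over $\log_2(n/s)$ rounds, so the error depends on $n$. To make the errors summable, the sets must grow.

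\textbf{The growth must be slow to keep the time linear.} The work in a round is roughly (number of surviving points) times (current set size)$^D$. The surviving points halve in every round except merge-only rounds, while the set size doubles only in merge-only rounds. So the work decays geometrically only if a merge-only round occurs at most once every $t=\Theta(D)$ rounds, with $t>D+1$.

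\textbf{This forces a larger block size.} Each set size is then used for $\Theta(D)$ halvings. The error of this growing phase is therefore $O(D)\cdot\sqrt{D\log s/s}$, and it is dominated by the \emph{first} (smallest) level, not the last. Keeping it below $\delta/2$ forces initial blocks of size $s=\Theta(D^3\delta^{-2}\log(D/\delta))$, not $\Theta(D\delta^{-2}\log(D/\delta))$. A final phase of pure halvings, whose errors are dominated by its last level, then shrinks the single surviving set to size $O(\frac{D}{\delta^2}\log\frac{D}{\delta})$.

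\textbf{Where $O(D)^{3D}$ comes from.} This factor in the running time is essentially $s^D$ for the enlarged block size, i.e.\ $(D^3)^D$ up to $2^{O(D)}$. It does not come from constants in Sauer's lemma, whose bound $(em/D)^D$ only helps, nor from the oracle. With your block size, the $O(D)^{3D}$ in your time estimate is asserted rather than derived.
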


\begin{fact}[observed by \cite{BadanidiyuruKL12}]
Let $(\UUU,\SSS)$ be a set system with VC dimension $d$.
Define $\SSS^{\cup k}$ to consist of all sets of the form $s_1\cup\cdots\cup s_k$
where $s_1,\ldots,s_k\in\SSS$.
Then $(\UUU,\SSS^{\cup k})$ has VC dimension at most $kd$.
\end{fact}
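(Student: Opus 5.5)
The plan is to bound the number of distinct traces that $\SSS^{\cup k}$ can cut out on a finite set. We then compare that count with $2^{|X|}$ for a set $X$ that $\SSS^{\cup k}$ shatters.

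Fix a set $X\subseteq\UUU$ with $|X|=m$ that is shattered by $\SSS^{\cup k}$. Every trace $(s_1\cup\cdots\cup s_k)\cap X$ equals $(s_1\cap X)\cup\cdots\cup(s_k\cap X)$. It is therefore determined by the $k$-tuple of traces $(s_1\cap X,\ldots,s_k\cap X)$, each of which is a trace of $\SSS$ on $X$. By the Sauer--Shelah lemma applied to $(\UUU,\SSS)$, which has VC dimension $d$, there are at most $\sum_{i\le d}\binom{m}{i}\le (em/d)^d$ traces of $\SSS$ on $X$. Hence there are at most $(em/d)^{dk}$ traces of $\SSS^{\cup k}$ on $X$. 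Since $X$ is shattered, all $2^m$ subsets must occur, so $2^m\le (em/d)^{dk}$.

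The last step solves this inequality for $m$, and I expect it to be the main obstacle. Taking logarithms gives $m\le dk\log_2(em/d)$. Iterating this once yields only $m=O(dk\log k)$, not the clean bound $m\le kd$ stated in the Fact. To reach exactly $kd$, I would try a more structural argument instead of pure counting, by induction on $k$. Given a shattered $X$, one would want to extract a subset of size $d$ whose labelings are all realized by a single constituent $s_1$, and then recurse on the remainder with $k-1$ sets. The difficulty is that the constituent sets realizing different labelings need not be coordinated across labelings, so this induction does not obviously go through. In fact, I suspect the exact bound $kd$ can fail: known lower-bound constructions for $k$-fold unions (e.g., of halfplanes) show a $\log k$ factor is sometimes necessary.

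For the paper's purposes, the Fact is only fed into Fact~\ref{fact:approx-net} with $D$ equal to the VC dimension of $\SSS^{\cup k}$. The $O(dk\log k)$ bound from the counting argument therefore suffices, and changes the final running time only in logarithmic factors inside the exponent. So I would prove the counting bound rigorously, and state explicitly that it is the form actually used downstream.
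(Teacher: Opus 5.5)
Your counting argument is correct, and your suspicion about the last step is justified. The Fact as literally stated is false, so no argument can reach $kd$ in general. The paper gives no proof to compare against; it only attributes the claim to \cite{BadanidiyuruKL12}.

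What you prove is the classical $O(dk\log k)$ bound of Blumer, Ehrenfeucht, Haussler, and Warmuth, and that bound is tight. Eisenstat and Angluin (2007) constructed classes of constant VC dimension whose $k$-fold unions have VC dimension $\Omega(k\log k)$. Csik\'os, Kupavskii, and Mustafa (2019) proved that $k$-fold unions of halfspaces in $\R^d$, $d\ge 4$, have VC dimension $\Omega(dk\log k)$. Halfspaces in $\R^4$ have VC dimension $5$, so the bound $5k$ already fails there for large $k$.

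Your parenthetical example is wrong, however. The complement of a union of $k$ halfplanes is a convex polygonal region with at most $k$ edges, and that class has VC dimension $2k+1\le 3k$, so halfplanes satisfy the stated bound. Your final inequality is also easy to close: $m\le dk\log_2(em/d)$ forces $m<2dk\log_2(2ek)$.

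Your downstream repair can be made lossless. The standard proofs behind Fact~\ref{fact:approx-net} use the VC dimension only through the Sauer--Shelah bound of the form $(O(m))^{D}$ on the number of traces on an $m$-point set, plus a subsystem oracle. This holds both for the random-sampling size bound and for the deterministic merge-and-reduce construction with discrepancy halving. For $\SSS^{\cup k}$, the oracle is obtained from that of $\SSS$ by forming $k$-fold unions of traces.

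The first half of your proof shows exactly that $\SSS^{\cup k}$ has at most $(em/d)^{dk}$ traces on any $m$-point set. So the correct replacement for the Fact is that the primal shatter function of $\SSS^{\cup k}$ is at most $(em/d)^{dk}$, and your argument proves this completely. If you use $D=dk$ as the shatter exponent in Fact~\ref{fact:approx-net}, the paper's bound $2^{O((dk/\eps)\log(dk/\eps))}$ holds as stated. You do not even incur the $\log k$ loss you concede.
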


\paragraph{Algorithm.}
Let $(\UUU,\SSS)$ be a set system with VC dimension $d$.
We will solve a generalization of the max $k$-coverage problem:
given a set $P\subseteq\UUU$ and sets $S_1,\ldots,S_k\subseteq\SSS$ of total size $n$, find
a $k$-tuple $(a_1,\ldots,a_k)\in S_1\times\cdots\times S_k$ maximizing
the number of points covered, i.e., $|(a_1\cup\cdots\cup a_k)\cap P|$.
(The original problem corresponds to the case when $S_1=\cdots=S_k=S$.)

Let $\delta_\ell,\eps_\ell>0$ be parameters to be set later. 
The following algorithm outputs a solution to this problem with approximation factor
at least $1-\eps_\ell$.

\begin{quote}
\begin{tabbing}
10.\ \ \= for\ \= for\ \= for\ \=\kill
$\MAXCOV_\ell(S_1,\ldots,S_k,P):$\\[1ex]
0.\> if $\ell=0$, return an arbitrary $k$-tuple in $S_1\times\cdots\times S_k$\\[1ex]
1.\> for $i=0,1,\ldots,k-1$:\ \ \ \ {\bf (first compute greedy solution)}\\
2.\>\> pick $a^G_{i+1}\in S_{i+1}$ maximizing $|s\cap P\setminus A^G_i|$ where
$A^G_i:=(a^G_1\cup\cdots\cup a^G_i)\cap P$\\
3.\> add $(a^G_1,\ldots,a^G_k)$ to a list $L$ and let $A^G=A^G_k$\\[1ex]
4.\> let $R$ be a $\delta_\ell$-approximation of $A^G$ in the set system
$(\UUU,\SSS^{\cup k})$\\
5.\> let $\SSS_{\mid R} = \{s\cap R: s\in \SSS\}$\\[1ex]
6.\> for each $\sigma_1,\ldots,\sigma_k\in \SSS_{\mid R}$:\ \ \ \ {\bf (then guess and recurse)}\\
7.\>\> for $i=1,\ldots,k$, let $S^{(\sigma_i)}_i = \{s\in S_i: s\cap R=\sigma_i\}$\\
8.\>\> add $\MAXCOV_{\ell-1}(S^{(\sigma_1)}_1,\ldots,S^{(\sigma_k)}_k,P\setminus A^G)$ to $L$\\[1ex]
9.\> return the $k$-tuple in $L$ that covers the most points of $P$ 
\end{tabbing}
\end{quote}

\paragraph{Analysis of approximation factor.}
Let $(a_1^*,\ldots,a_k^*)$ denote the optimal $k$-tuple in $S_1\times\cdots\times S_k$ 
maximizing $|A^*|$ where $A^*=(a_1^*\cup\cdots\cup a_k^*)\cap P$.  

If $|A^G|\ge (1-\eps_\ell)|A^*|$, then the greedy solution $(a^G_1,\ldots,a^G_k)$ already has approximation factor
at least $1-\eps_\ell$.  So, we may assume $|A^G| < (1-\eps_\ell)|A^*|$ from now on.

BKL observes that $|A^*\cap A^G| > \eps_\ell |A^*|$, which can be shown as follows:
For each $i\in\{0,\ldots,k-1\}$, $|a_{i+1}^*\cap P\setminus A^G|\le |a_{i+1}^*\cap P\setminus A^G_i| \le |a^G_{i+1}\cap P\setminus A^G_i|=|A^G_{i+1}|-|A^G_i|$.
Summing over all $i$ gives $|A^*\setminus A^G|\le |A^G|$,
and so $|A^*\cap A^G| \ge |A^*|-|A^G| > \eps_\ell |A^*|$.

Let $\sigma_i=a_i^*\cap R$. 
Let $(a_1,\ldots,a_k)$ be the $k$-tuple returned by 
$\MAXCOV_{\ell-1}(S^{(\sigma_i}_1,\ldots,S^{(\sigma_k)}_k,P\setminus A^G)$.
By definition of $\delta_\ell$-approximation,
\[ \left| \frac{|A\cap R|}{|R|} - \frac{|A\cap A^G|}{|A^G|} \right|
\ \le\ \delta_\ell 
\qquad
\mbox{and}
\qquad
\left| \frac{|A^*\cap R|}{|R|} - \frac{|A^*\cap A^G|}{|A^G|} \right|
\ \le\ \delta_\ell.
\]
Since $a_i\in S^{(\sigma_i)}_i$, we have $a_i\cap R = \sigma_i = a_i^*\cap R$ for all $i\in\{1,\ldots,k\}$.
Thus, $A\cap R = A^*\cap R$, and so
\[ \left| |A\cap A^G| -  |A^*\cap A^G| \right|\ \le\ 2\delta_\ell |A^G|.
\]
We conclude that
\begin{eqnarray*}
 |A|\ =\ |A\setminus A^G| + |A\cap A^G|
     &\ge& (1-\eps_{\ell-1})|A^*\setminus A^G| + |A^*\cap A^G| - 2\delta_\ell |A^G|\\
     &\ge& (1-\eps_{\ell-1})(|A^*|-|A^*\cap A^G|) + |A^*\cap A^G| - 2\delta_\ell |A^G|\\
     &=& (1-\eps_{\ell-1})|A^*| + \eps_{\ell-1}|A^*\cap A^G| - 2\delta_\ell |A^G|\\
     &>&   (1-\eps_{\ell-1})|A^*| + \eps_{\ell-1}\eps_\ell |A^*| - 2\delta_\ell |A^*|\\
     &=&   (1-\eps_{\ell-1} + \eps_{\ell-1}\eps_\ell/2)|A^*| \qquad\mbox{by setting $\delta_\ell=\eps_{\ell-1}\eps_\ell/4$}\\ 
     &=& (1-\eps_\ell)|A^*|,
\end{eqnarray*}
assuming that
\[ \eps_\ell = \eps_{\ell-1}(1-\eps_\ell/2),
\]
with $\eps_0=1$.
This equation is satisfied by setting $\eps_\ell=\frac{2}{\ell+2}$ (and hence, $\delta_\ell=\eps_{\ell-1}\eps_\ell/4 = \frac{1}{(\ell+1)(\ell+2)}$).
Choosing $\ell=\lceil 2/\eps\rceil-2$ yields a $(1-\eps)$-approximation at the end.

\paragraph{Analysis of running time.}
In line~4, by the two facts, we can compute a $\delta_\ell$-approximation $R$ of size
$|R|=O(\frac{dk}{\delta_\ell^2}\log\frac{dk}{\delta_\ell})$ 
in $O(dk)^{3dk} (\frac{1}{\delta_\ell^2}\log\frac{dk}{\delta_\ell})^{dk}n$ time.
In line~5,  $|\SSS_{\mid R}|=O(|R|^d)$ by the Sauer--Shelah lemma~\cite{Har11}.
In line~6, the number of choices of $(\sigma_1,\ldots,\sigma_k)$
is $O(|\SSS_{\mid R}|^k)=O(|R|^{dk}) = O(2^{O(dk\log(dk/\delta_\ell))})$.
Recall that $\delta_\ell = \Theta(1/\ell^2)$.
Thus, the running time $T_\ell$ satisfies the recurrence
\[ T_\ell\ \le\ 2^{O(dk\log(dk\ell))}\cdot (T_{\ell-1} + n^{O(1)})
\ \ \Longrightarrow\ \ T_\ell\le 2^{O(dk\ell\log(dk\ell))}n^{O(1)}.
\]

\begin{theorem}
For set systems with VC dimension $d$, there is a $(1-\eps)$-approximation algorithm for the max coverage problem with running time $2^{O((dk/\eps)\log(dk/\eps))}\poly(|P|,|S|)$.
\end{theorem}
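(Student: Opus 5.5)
The theorem follows by running $\MAXCOV_\ell(S,\ldots,S,P)$ with $\ell=\lceil 2/\eps\rceil-2$ and combining the two analyses just given. The approximation factor holds by induction on $\ell$. The base case is $\ell=0$ with $\eps_0=1$, where any tuple is a $0$-approximation. For the inductive step we use the case split already carried out. Either the greedy tuple in $L$ covers at least $(1-\eps_\ell)|A^*|$ points, or $|A^*\cap A^G|>\eps_\ell|A^*|$ by the BKL charging argument. In the latter case, the guess $\sigma_i=a_i^*\cap R$ is among those enumerated in line~6, since $a_i^*\cap R\in\SSS_{\mid R}$. Moreover $a_i^*\in S_i^{(\sigma_i)}$, so the optimum of the subproblem on $P\setminus A^G$ is at least $|A^*\setminus A^G|$. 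The recursive call then returns a tuple $A$ with $|A\setminus A^G|\ge(1-\eps_{\ell-1})|A^*\setminus A^G|$.

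The key step here is to justify comparing $A$ with $A^*$ on $A^G$. This uses the fact that $R$ is a $\delta_\ell$-approximation of $A^G$ for the system $\SSS^{\cup k}$, which contains both $A$ and $A^*$ as ranges. It also uses $A\cap R=A^*\cap R$, which holds because $(a_1\cup\cdots\cup a_k)\cap R=\sigma_1\cup\cdots\cup\sigma_k$. Together these give $\bigl||A\cap A^G|-|A^*\cap A^G|\bigr|\le 2\delta_\ell|A^G|$. We then apply the displayed chain of inequalities, using $|A^G|<|A^*|$. With $\eps_\ell=2/(\ell+2)$ and $\delta_\ell=1/((\ell+1)(\ell+2))$, the recurrence $\eps_\ell=\eps_{\ell-1}(1-\eps_\ell/2)$ can be checked directly: $\frac{2}{\ell+1}\cdot\frac{\ell+1}{\ell+2}=\frac{2}{\ell+2}$. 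Choosing $\ell=\lceil 2/\eps\rceil-2$ gives $\eps_\ell\le\eps$. Since line~9 returns the best tuple in $L$, the output is at least as good as both candidates.

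For the running time, the VC-dimension fact gives that $\SSS^{\cup k}$ has dimension at most $dk$. Fact~\ref{fact:approx-net} then gives $|R|=O(\frac{dk}{\delta_\ell^2}\log\frac{dk}{\delta_\ell})=\poly(dk\ell)$. By Sauer--Shelah, there are $|R|^{O(dk)}=2^{O(dk\log(dk\ell))}$ guesses, and each level does $n^{O(1)}$ additional work. Unrolling the recurrence over $\ell=O(1/\eps)$ levels multiplies these factors, giving $2^{O(dk\ell\log(dk\ell))}n^{O(1)}=2^{O((dk/\eps)\log(dk/\eps))}\poly(|P|,|S|)$.

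The main point to watch is that the polynomial factor does not compound across levels. The branching factor $B=2^{O(dk\log(dk\ell))}$ multiplies, but the local work is additive, so $T_\ell\le B^\ell n^{O(1)}$ rather than $n^{O(\ell)}$. A minor caveat is that Fact~\ref{fact:approx-net} needs a subsystem oracle for $\SSS^{\cup k}$ restricted to $A^G$. We obtain one by enumerating the $O(|A^G|^{d})$ traces of $\SSS$ and taking their $k$-fold unions, which takes $n^{O(dk)}$ time. To avoid this exponential dependence on $k$ in $n$, we would instead draw $R$ as a random sample of the stated size. A sample of that size is a $\delta_\ell$-approximation with constant probability, and we amplify or derandomize as needed. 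Either version stays within the claimed bound.
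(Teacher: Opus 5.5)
Your proposal is correct and is essentially the paper's own proof. It uses the same recursion $\MAXCOV_\ell$, the same greedy-versus-overlap case split via the BKL charging bound, and the same comparison of $A$ and $A^*$ on $A^G$ through the $\delta_\ell$-approximation $R$ in $\SSS^{\cup k}$. It also uses the same parameters $\eps_\ell=2/(\ell+2)$ and $\delta_\ell=1/((\ell+1)(\ell+2))$, and the same recurrence $T_\ell\le 2^{O(dk\log(dk\ell))}(T_{\ell-1}+n^{O(1)})$.

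The only slip is in your closing caveat. The construction behind Fact~\ref{fact:approx-net} runs in time linear in $n$ and queries the subsystem oracle only on subsets of size $\poly(dk/\delta_\ell)$. So brute-force enumeration of $k$-fold unions of traces costs $2^{O(dk\log(dk\ell))}$ per query, not $n^{O(dk)}$, and the deterministic version already fits the bound without random sampling. As written, your remark that ``either version'' stays within the bound contradicts your own $n^{O(dk)}$ estimate.
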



\paragraph{Remarks.}
There are a few differences with 
Badanidiyuru, Kleinberg, and Lee's original algorithm~\cite{BadanidiyuruKL12}:
\begin{itemize}
\item They bounded the number of choices more naively, without
considering $\SSS_{\mid R}$ and applying the Sauer--Shelah lemma,
resulting in a running time that is exponential in $k^2\polylog k$ instead of $k\polylog k$.
\item They described their algorithm iteratively.  Our description using recursion is a bit cleaner.
\item They used the greedy-and-guess approach to compute a subset $V$ of the universe $P$ whose solution approximates the optimal solution and whose cardinality is within a constant factor of the optimal value; afterwards, given this ``support set'' $V$, a separate lemma was applied to solve the original problem (this time using the Sauer--Shelah lemma).  We use the greedy-and-guess approach to solve the original problem \emph{directly}, thereby further simplifying the algorithm.
\end{itemize}

\title{A Shifted-Quadtree PTAS for Maximum Coverage by Fat Objects}
\author{}
\date{}
\maketitle
\end{comment}
\section{PTAS for Max-Volume Selection for Fat Objects}\label{sec:ptas}

In this section, we present an efficient PTAS for the continuous version of max coverage, i.e., max-volume selection, for fat convex objects in a constant dimension $d$.
Our algorithm works also for the weighted version of the problem.  It is based on the shifted quadtree technique~\cite{Chan03}.

\newcommand{\diam}{\text{diam}}
\newcommand{\ext}{\text{extension}}

We use the following definition of fat objects (which is clearly satisfied by balls and hypercubes):

\begin{definition}
For a constant $\alpha>0$,
an object $s$ is \emph{$\alpha$-fat} if it contains a ball of radius $\alpha\,\diam(s)$, where
$\diam(s)$ denotes the $L_\infty$-diameter of $s$.
\end{definition}

The following fact about fat objects will be needed.  We are not able to find an explicit reference, and so include a proof in Appendix~\ref{app:expansion}.

\begin{fact}\label{fact:expansion}
Let $B_r$ denote the ball of radius $r$ centered at the origin, and let $\oplus$ denote the Minkowski sum.
For any set $S$ of $\alpha$-fat convex objects in $\R^d$,
\[ \vol{\bigcup_{s\in S} (s\oplus B_{\eps\,\diam(s)})} \le (1+O_{d,\alpha}(\eps)) \vol{\bigcup_{s\in S} s}.\]
\end{fact}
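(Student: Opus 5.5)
The plan is a covering argument with canonical grid cells, charging the volume of the expanded union to the volume of the original union. Write $U=\bigcup_{s\in S}s$ and $U'=\bigcup_{s\in S}(s\oplus B_{\eps\,\diam(s)})$. It suffices to show $\vol{U'\setminus U}\le O_{d,\alpha}(\eps)\vol{U}$.

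First I decompose $U'\setminus U$ by scale. For each $s$, let $j(s)$ be the integer with $\diam(s)\in[2^{j},2^{j+1})$. Let $S_j$ be the set of objects with $j(s)=j$, and let $D_j=\bigcup_{s\in S_j}\bigl((s\oplus B_{\eps\diam(s)})\setminus s\bigr)$. Then $U'\setminus U\subseteq\bigcup_j (D_j\setminus U)$.

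The key local estimate concerns a single scale $j$. Partition space into grid cells of side $2^{j}$. Each $s\in S_j$ meets $O_d(1)$ cells. For any cell $Q$, consider the objects of $S_j$ meeting $Q$. Their shells $(s\oplus B_{\eps\diam(s)})\setminus s$ lie in the $\eps 2^{j+1}$-neighborhood of $\partial(\bigcup_{s\in S_j,\,s\cap Q\ne\emptyset}s)$, intersected with $Q$'s $O(2^j)$-neighborhood. By fatness and the packing argument, these objects can be represented by $O_{d,\alpha}(1)$ of them. Each object contains a ball of radius $\alpha 2^j$, and balls in a bounded region can be grouped: objects whose inner balls have centers within $\alpha2^{j-1}$ of one another share a common point well inside. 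This grouping alone does not bound their count.

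Instead, I would use a cleaner bound. For convex $s$ containing a ball of radius $r=\alpha\diam(s)$ centered at $c$, we have $s\oplus B_{\eps\diam(s)}\subseteq c+(1+\eps/\alpha)(s-c)$, a homothetic scaling. Hence $\vol{(s\oplus B_{\eps\diam(s)})\setminus s}\le((1+\eps/\alpha)^d-1)\vol{s}=O(\eps)\vol{s}$. This handles a single object. The difficulty is overlaps: summing over $s$ overcounts $\vol{U}$.

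To control overlaps, I charge each shell to the part of $U$ near it. The shell of $s$ lies within distance $\eps\diam(s)$ of $s$. The inner ball of $s$ is contained in $U$, so for a cell $Q$ of side $2^j$ at scale $j$, the shells at scale $j$ meeting $Q$ lie in a region of volume $O(\eps 2^{jd})\cdot N$, where $N$ counts "distinct" objects. The total shell volume at scale $j$ within $Q$ is bounded by the surface measure of the union of the relevant convex sets times $\eps 2^{j}$. Fat convex sets of comparable size meeting a common cell have a union whose boundary area is $O_{d,\alpha}(2^{j(d-1)})$ per cell. This follows because their union is star-shaped-like with respect to a grid of $O(1)$ points: each object contains a point from a fixed lattice of spacing $\alpha 2^{j-1}$ near $Q$. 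Grouping objects by such a lattice point, each group is star-shaped about that point with bounded aspect, so its boundary has area $O(2^{j(d-1)})$.

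This yields $\vol{D_j\cap Q'}\le O(\eps)2^{jd}$ for cells $Q'$ that actually meet some object of $S_j$. Those cells contain volume $\Omega(2^{jd})$ of $U$, namely an inner ball portion, via a neighboring cell. So I get $\vol{D_j}\le O(\eps)\vol{U_j}$ with $U_j=\bigcup S_j$.

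The main obstacle is summing over scales without the multiplicity of $\sum_j\vol{U_j}$. To handle this, I would charge $D_j$ only to cells at scale $j$ not already largely covered by larger objects. Where a bigger object $s'$ covers the cell, the small shells inside $s'$ lie in $U$ and cost nothing. The remaining small shells lie near $\partial U$. Each cell of side $2^j$ meeting $\partial U$ at that location can then be charged to disjoint volume of $U$ in a Whitney-type decomposition of $U$: take dyadic cells each charged at most once across scales, since a cell is charged at scale $j$ only if it contains a whole inner ball of a scale-$j$ object and is not swallowed by larger ones. I expect this bookkeeping, ensuring each unit of $\vol{U}$ is charged $O(1)$ times across all scales, to be the hardest step. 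I would attempt it first via the dyadic cell containing each object's inner ball center, charging the shell of $s$ to that cell, which lies in $U$.
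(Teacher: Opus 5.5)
The step you defer, summing over scales so that each unit of $\vol{U}$ is charged only $O(1)$ times, is the whole content of the fact, and the rule you sketch for it does not work.

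Your single-scale bound $\vol{D_j}\le O(\eps)\vol{U_j}$ is fine. It is even easier than your surface-area route: group the scale-$j$ objects by a lattice point $q$ of spacing $\Theta(\alpha 2^j)$ lying deep inside their inner balls. Then use that a union $X$ of convex sets, all containing the ball of radius $\rho$ around $q$, satisfies $X\oplus B_t\subseteq q+(1+t/\rho)(X-q)$. However, $\sum_j\vol{U_j}$ can exceed $\vol{U}$ by a factor that grows with the number of scales, so the per-scale bounds cannot simply be added.

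Your charging rule rests on the claim that when a larger object $s'$ covers the cell containing the inner-ball center of a scale-$j$ object $s$, the shell of $s$ lies in $U$ and costs nothing. That is false, because $s$ reaches distance about $\mathrm{diam}(s)$ beyond its inner ball and can stick out of $s'$. For instance, take $s'=[0,4]^d$ and the $4^{d-1}$ unit cubes $\prod_{i<d}[a_i,a_i+1]\times[3.4,4.4]$ with $a_i\in\{0,1,2,3\}$. Each small cube's center lies in a unit cell contained in $s'$. Yet for $\eps<0.1$ the slab $[0,4]^{d-1}\times(4.4,4.4+\eps]$, of volume $\Omega(\eps\,\vol{U})$, belongs to $U'\setminus U$ and misses $s'\oplus B_{4\eps}$. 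It is covered only by shells that your rule discards.

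For cells that are not swallowed, you give no argument that charges from different scales (whose dyadic cells are nested) land on essentially disjoint parts of $U$. Also, the dyadic cell of side $2^j$ containing an inner-ball center need not lie in $U$ at all, contrary to your last sentence.

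The missing ingredient is a charging that is scale-free by construction. The fix closest to your outline is to use the Whitney decomposition of the interior of $U$ itself, not cells at each object's own scale.
\begin{itemize}
\item If the shell of $s$ leaves $U$, its inner-ball center $c$ satisfies $\alpha\,\mathrm{diam}(s)\le\mathrm{dist}(c,\partial U)\le(\sqrt d+\eps)\,\mathrm{diam}(s)$.
\item Hence the Whitney cube $W\ni c$ has diameter $\Theta_{d,\alpha}(\mathrm{diam}(s))$.
\item Charge $s$ to $W$. All objects charged to $W$ then have comparable size and centers in $W$, so the grouping/homothety bound shows that the part of their expanded union outside $U$ has volume $O_{d,\alpha}(\eps)\vol{W}$.
\item Since Whitney cubes are disjoint subsets of $U$, summing gives the fact.
\end{itemize}

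The paper instead uses a maximal-function argument. On a line, the $(1+\eps)$-expansion of any segment contained in $A$ lies in the $(1+\eps)$-expansion of the component of $A$ containing it. So all such expansions together have measure at most $(1+\eps)$ times that of $A$, regardless of scale. Fubini and iteration over $d$ directions extend this to homothets of a fixed parallelepiped, with factor $(1+\eps)^d$. Fatness plus convexity then give, at each $p\in s$, a parallelepiped $\tau\subseteq s$ with vertex $p$ and size $\Theta(\mathrm{diam}(s))$, drawn from one of $O_{d,\alpha}(1)$ fixed shapes, such that $p\oplus B_{\eps\,\mathrm{diam}(s)}\subseteq(1+C\eps)\tau$.
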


\newcommand{\sss}{\widetilde{s}}

We are given a (weighted) set $S$ of $n$ $\alpha$-fat convex objects in $\R^d$, where $d$ and $\alpha$ are fixed constants.
We may assume that the maximum individual weight is at most the budget.

Let $\vmax=\max_{s\in S}\vol{s}$.
Since the optimal value is at least $\vmax$,
we may remove all objects $s\in S$ with $\vol{s}\le \eps\vmax/n$ (since these contribute at most $\eps\vmax$ total volume).  
So, we may assume that $\diam(s) \in [\Omega((\eps\vmax/n)^{1/d}),O(\vmax^{1/d})]$ for all $s\in S$.
If $S$ can be separated by a hyperplane orthogonal to the $x$-axis into two nonempty subsets, we can translate one of the subsets to bring them closer without affecting the optimal value.
Consequently, we may assume that the $x$-projection of the objects lies in an interval of length $O(n\vmax^{1/d})$.
Applying the same argument, we may assume that all objects lies in
$[0,O(n\vmax^{1/d})]^d$.
By rescaling to make $\vmax=\Theta(1/n^d)$,
we may ensure that all objects $s\in S$ lie in $[0,1)^d$ and have 
diameter $\Omega((\eps/n^{d+1})^{1/d})$.


A \emph{quadtree cell} refers to a hypercube of the form $[i_1/2^\ell,(i_1+1)/2^\ell)
\times\cdots\times [i_d/2^\ell,(i_d+1)/2^\ell)$ for some 
$i_1,\ldots,i_d,\ell\in\mathbb{Z}$.  The \emph{level} of the cell refers to $\ell$.

\newcommand{\lmax}{\lambda_{\text{max}}}

Let $b$ be a parameter to be set later.
For each object $s$, let $\lambda_s$ be the level of the smallest quadtree cell containing $s$.  Let  $\lmax = \max_{s\in S}\lambda_s \le \log((n^{d+1}/\eps)^{1/d}) + O(1)$.
We define the \emph{rounded object} $\sss$ associated with $s$ to be the union of
all level-$(\lambda_s+b)$ quadtree cells intersecting $s$.

We first use dynamic programming (DP) to solve the problem \emph{exactly} for the rounded objects (to deal with the budget constraint for the weighted problem, we incorporate a standard idea from a textbook approximation algorithm for knapsack~\cite{VazBOOK}):

\begin{lemma}\label{lem:dp}
We can solve the weighted max-volume selection problem \emph{exactly} for the
rounded objects $\{\sss: s\in S\}$ in $2^{O(2^{db})}\cdot 2^{O(\lmax)}\cdot \poly(|S|)$ time.
\end{lemma}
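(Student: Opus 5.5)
We plan a bottom-up dynamic program over the quadtree. Assign each rounded object $\sss$ to the quadtree cell $\Box_s$ of level $\lambda_s$ that is the smallest cell containing $s$. Since $s$ lies in $\Box_s$ and $\sss$ is a union of level-$(\lambda_s+b)$ cells meeting $s$, $\sss$ also lies in $\Box_s$ and is a union of level-$(\lambda_s+b)$ subcells of $\Box_s$. Only cells of level at most $\lmax$ carry objects, and we only need cells that are ancestors of some $\Box_s$; there are $O(n\lmax)$ of these. Their subcells down to level $\lmax+b$ number $2^{O(\lmax)}$ per cell in the worst case, which is where the $2^{O(\lmax)}$ factor comes from.

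For a cell $\Box$ at level $\ell$, we define its \emph{profile} to be a subset $\pi$ of the $2^{db}$ level-$(\ell+b)$ subcells of $\Box$. The interpretation is that $\pi$ lists the level-$(\ell+b)$ subcells that are \emph{fully} covered by chosen rounded objects assigned to $\Box$ or to its ancestors. Objects assigned to a cell $\Box'$ are unions of level-$(\lambda+b)$ cells, so ancestors at level $\ell'<\ell$ contribute to $\Box$ only whole cells of level $\ell'+b$. These are either all of $\Box$ (when $\ell'+b\le \ell$) or unions of level-$(\ell'+b)$ cells, which are coarser than level $\ell+b$. Hence the trace on $\Box$ of the union of all chosen objects from ancestors and from $\Box$ itself is exactly a union of level-$(\ell+b)$ subcells, and the profile captures it. There are $2^{2^{db}}$ profiles.

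The DP table is $T[\Box,\pi,W]$: the maximum volume, within $\Box$, of the union of chosen objects assigned to $\Box$ or its descendants, together with the region $\pi$, subject to total weight $W$ for objects assigned within the subtree. Here $\pi$ is the profile inherited from ancestors only. To handle budgets, we use the knapsack trick in its exact form and swap roles. Volumes are integers after scaling by $2^{d(\lmax+b)}$, since every union is a union of finest-level cells. So we index by achieved volume and minimize weight. Alternatively, we note that the textbook exact knapsack DP is pseudo-polynomial in the value. The value range is $2^{d(\lmax+b)}$, which is $2^{O(\lmax)}\cdot 2^{O(b)}$ and absorbed into the stated bound.

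The transition at $\Box$ with inherited $\pi$ works as follows. First, we choose a set of objects assigned to $\Box$. We cannot enumerate subsets, but each object's trace is one of $2^{2^{db}}$ patterns, so we run an inner knapsack over the objects at $\Box$. Its state is the current pattern $\pi'\supseteq\pi$ together with the accumulated value or weight, and this gives the minimum weight achieving each resulting pattern. Second, for each of the $2^d$ children $\Box_c$, we derive the child's inherited profile from $\pi'$ by refining: a level-$(\ell+b)$ cell in $\pi'$ lying inside $\Box_c$ becomes $2^d$ level-$(\ell+1+b)$ cells. This is valid because the child's grid is one level finer. We then combine the children's tables by a min-plus (knapsack-style) convolution over value. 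If $\Box$ is not a needed cell, or is a leaf of the relevant tree, the covered volume is simply $\vol{\pi'}$. Needed cells are compressed: a chain of cells with a single relevant child is handled the same way, with the profile refined repeatedly down the chain. The answer is read off at the root $[0,1)^d$ with $\pi=\emptyset$.

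The step most likely to be the obstacle is justifying that profiles at resolution $\ell+b$ suffice. This requires showing that objects from ancestors never cut a level-$(\ell+b)$ subcell partially, which holds by the level structure argued above. It also requires that descendants' contributions are independent across children given the refined profile, since the children's regions are disjoint. The total running time is (number of cells) $\times$ (number of profiles)$^{O(1)}$ $\times$ (value range)$^{O(1)}$ $=2^{O(2^{db})}\cdot 2^{O(\lmax)}\cdot\poly(n)$.
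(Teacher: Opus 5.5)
Your proposal is correct and essentially matches the paper's proof. The paper's table $F[\gamma,v,Z]$ is indexed by a cell, a volume $v$ that is a multiple of $2^{-(\lmax+b)d}$, and a union $Z$ of level-$(\ell+b)$ subcells; it stores the minimum weight, processes the objects whose smallest cell is $\gamma$ by a knapsack recurrence that updates $Z$, and combines the $2^d$ children by a min-plus convolution over $v$, as you do. Your explicit check that ancestors' rounded objects always meet a level-$\ell$ cell in a union of level-$(\ell+b)$ subcells is a detail the paper leaves implicit, and your chain compression is unnecessary because only $O(|S|\lmax)$ cells are relevant.
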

\begin{proof}
For a quadtree cell, let $S_\gamma$ denote the subset of all objects $s\in S$ contained in $\gamma$.

Given a level-$\ell$ quadtree cell $\gamma$, a number $v$, and 
a union $Z$ of level-$(\ell+b)$ quadtree cells contained in~$\gamma$,
define $F[\gamma,v,Z]$ to be the minimum weight of $Q$
over all subsets $Q\subseteq S_\gamma$ such that $\vol{\bigcup_{s\in Q} \sss \cup Z}\ge v$.
Note that $Z$ is a union of at most $2^{db}$ cells.
It suffices to consider values $v\in (0,O(n)]$ that are integer multiples of $2^{-(\lmax+b)d}$.
The optimal value is the largest $v$ for which $F[[0,1)^d,v,\emptyset]$ is at most the given budget.

To compute $F[\gamma,v,Z]$, divide $\gamma$ into $2^d$ level-$(\ell+1)$ quadtree cells $\gamma_1,\ldots,\gamma_{2^d}$.
Let $s_1,\ldots,s_h$ be a list of all objects in $S_\gamma\setminus \bigcup_{i=1}^{2^d} S_{\gamma_i}$;
these objects all have level $\ell$ and so $\sss_i$ is a union of level-$(\ell+b)$ quadtree cells.
Define $F_j[\gamma,v,Z]$ to be the minimum weight of $Q$ 
over all subsets $Q\subseteq S_\gamma\setminus \{s_{j+1},\ldots,s_h\}$ such that 
$\vol{\bigcup_{s\in Q} \sss \cup Z}\ge v$.
Then $F[\gamma,v,Z]=F_h[\gamma,v,Z]$, and for $j\in\{1,\ldots,h\}$,
\begin{equation}\label{rec1}
F_j[\gamma,v,Z] = \min\{F_{j-1}[\gamma,v,Z],\, F_{j-1}[\gamma,v,Z\cup\sss_j]+ w_{s_j}\},
\end{equation}
and 
\begin{equation}\label{rec2}
F_0[\gamma,v,Z] = \min_{v_1,\ldots,v_{2^d}:\, v_1+\cdots+v_{2^d}=v}
\sum_{i=1}^{2^d} F[\gamma_i,v_i,Z\cap\gamma_i].
\end{equation}
Note that $Z\cap\gamma_i$ is a union of level-$(\ell+b)$ quadtree cells, but can be converted
to a union of level-$(\ell+b+1)$ quadtree cells.

Using these recursive formulas, 
we can compute $F_j[\gamma,v,Z]$ for all $\gamma,v,Z,j$ by dynamic programming (with the obvious base cases).
The number of choices for nonempty grid cells $\gamma$ is $O(|S|\lmax)$.
The number of choices for $Z$ is at most $2^{2^{db}}$.
The number of choices for $v$ is $N=O(n2^{(\lmax+b)d})$.

(Note that (\ref{rec2}) naively requires $O(N^{2^d})$ time per $\gamma$ and $Z$, but can be evaluated more efficiently
by dynamic programming or by viewing the formula as a min-plus convolution of $2^d$ vectors.  With approximation,
the running time can be lowered further by borrowing known techniques for the knapsack problem, e.g., \cite{Chan18a}.)
\end{proof}

Given a parameter $L$, we say that an object $s$ is \emph{$L$-aligned} if it is contained in a quadtree cell of side length at most 
$L\cdot\diam(s)$.
We now use the above lemma to design an approximation algorithm when all objects are $L$-aligned.

\begin{lemma}\label{lem:solve:aligned}
If all objects in $S$ are $\alpha$-fat and $L$-aligned, then we can solve the weighted max-volume selection problem 
with approximation factor $1-O(\eps)$ in $2^{O((L/\eps)^d)}\poly(|S|)$ time.
\end{lemma}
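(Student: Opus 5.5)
Our plan is to run the dynamic program of Lemma~\ref{lem:dp} on the rounded objects, with $b$ chosen so that each rounded object $\sss$ lies inside a small Minkowski expansion of $s$, and then to compare the two problems using Fact~\ref{fact:expansion}.

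\textbf{Choice of $b$ and containment.} Let $s$ be $L$-aligned. The smallest quadtree cell containing $s$ has level $\lambda_s$ and side $2^{-\lambda_s}$. We first need $2^{-\lambda_s}\le L\,\diam(s)$. Alignment gives some quadtree cell of side at most $L\,\diam(s)$ containing $s$. That cell contains $s$, so its level is at most $\lambda_s$, and hence $2^{-\lambda_s}\le L\,\diam(s)$.

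The cells of level $\lambda_s+b$ have side $2^{-\lambda_s-b}\le L\,\diam(s)/2^b$. Every such cell meeting $s$ lies within $L_\infty$-distance $L\,\diam(s)/2^b$ of $s$. Therefore $s\subseteq\sss\subseteq s\oplus B_{\sqrt d\,L\diam(s)/2^b}$. We choose $2^b=\Theta(L/\eps)$, which gives $s\subseteq \sss\subseteq s\oplus B_{\eps\,\diam(s)}$.

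\textbf{Approximation.} Let $Q$ be the exact optimum for the rounded objects. It satisfies the same budget, since the weights are unchanged. Let $Q^*$ be the true optimum. Since $s\subseteq\sss$,
\[\vol{\bigcup_{s\in Q}\sss}\ \ge\ \vol{\bigcup_{s\in Q^*}\sss}\ \ge\ \vol{\bigcup_{s\in Q^*}s}=\OPT.\]
In the other direction, Fact~\ref{fact:expansion} applied to the $\alpha$-fat objects of $Q$ gives
\[\vol{\bigcup_{s\in Q}\sss}\ \le\ (1+O(\eps))\,\vol{\bigcup_{s\in Q}s}.\]
Combining the two, $Q$ covers volume at least $(1-O(\eps))\OPT$ with respect to the original objects. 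We also lose at most $\eps\vmax\le\eps\OPT$ from discarding the tiny objects in the preprocessing.

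\textbf{Running time and main obstacle.} Lemma~\ref{lem:dp} gives time $2^{O(2^{db})}\cdot 2^{O(\lmax)}\poly(|S|)$. With our choice of $b$, $2^{db}=O((L/\eps)^d)$. The preprocessing gives $\lmax\le \log((n^{d+1}/\eps)^{1/d})+O(1)$, so $2^{O(\lmax)}=\poly(n,1/\eps)$. The total is $2^{O((L/\eps)^d)}\poly(|S|)$, where the $\poly(1/\eps)$ factor is absorbed into the exponential term.

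The step needing the most care is checking that the expansion radius is $O(\eps\,\diam(s))$ with constants independent of $s$. This relies on the alignment bound $2^{-\lambda_s}\le L\,\diam(s)$ above. Once that holds, Fact~\ref{fact:expansion} does the rest.
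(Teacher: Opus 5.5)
Your proof is correct and follows the paper's argument essentially step for step. You choose $2^b=\Theta(L/\eps)$ so that $2^{-\lambda_s}\le L\,\diam(s)$ gives $\sss\subseteq s\oplus B_{O(\eps\,\diam(s))}$, then chain $\vol{\bigcup_{s\in Q^*}s}\le\vol{\bigcup_{s\in Q^*}\sss}\le\vol{\bigcup_{s\in Q}\sss}\le(1+O(\eps))\vol{\bigcup_{s\in Q}s}$ via Fact~\ref{fact:expansion}; your extra details (the justification of the cell-size bound, the $\sqrt d$ conversion, the running-time accounting) are fine, though the $\eps\vmax$ preprocessing loss belongs to the surrounding theorem rather than to this lemma.
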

\begin{proof}
Choose $b=\lceil\log(L/\eps)\rceil$.  
Let $\{\sss:s\in Q\}$ be the solution produced by Lemma~\ref{lem:dp}.  Let $Q^*$ be the optimal solution.

If $s$ is $L$-aligned, then $2^{-\lambda_s}\le L\cdot\diam(s)$ and $2^{-(\lambda_s+b)}\le  \eps\,\diam(s)$, 
implying that 
$\sss\subseteq s \oplus B_{O(\eps\,\diam(s))}$.  
Then
$\vol{\bigcup_{s\in Q^*}s}\le \vol{\bigcup_{s\in Q^*}\sss}\le \vol{\bigcup_{s\in Q}\sss}\le (1+O(\eps))\vol{\bigcup_{s\in Q}s}$
by Fact~\ref{fact:expansion}.
\end{proof}

Finally, we apply known shifting techniques to reduce to the aligned case.
Fix an odd integer parameter $L>d$. For every
$j\in\{0,\ldots,L-1\}$,
define the shift vector
$v^{(j)}=(j/L,\ldots,j/L)\in\R^d$.
Chan~\cite{Chan03} proved the following shifting lemma:

\begin{lemma}[Shifting lemma~\cite{Chan03}]
\label{lem:alignment}
Let $s$ be an object 
contained in $[0,1)^d$. 
Then $s+v^{(j)}$ is $2L$-aligned for all but at most $d$ indices
$j\in\{0,\ldots,L-1\}$.
\end{lemma}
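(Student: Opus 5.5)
The plan is to reduce $2L$-alignment of $s+v^{(j)}$ to a statement about each coordinate axis separately. Then I will show that, on each axis, at most one shift index $j$ is bad. Summing over the $d$ axes gives at most $d$ bad indices.

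Fix $s\subseteq[0,1)^d$ and let $D=\diam(s)$ be its $L_\infty$-diameter. Choose the level $\ell$ with $2^{-\ell}\le 2LD<2^{-\ell+1}$, so quadtree cells of level $\ell$ have side $\Delta:=2^{-\ell}\in(LD,2LD]$. The translate $s+v^{(j)}$ is $2L$-aligned as soon as it lies in a single level-$\ell$ cell. Its projection onto the $i$-th axis is an interval $[a_i+j/L,\,b_i+j/L]$ of length at most $D<\Delta/L$. This interval lies inside one level-$\ell$ cell unless it contains a grid point $m\Delta$ in its interior, i.e., unless $a_i+j/L<m\Delta\le b_i+j/L$ for some integer $m$. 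So it suffices to show that, for each coordinate $i$, at most one $j\in\{0,\ldots,L-1\}$ makes the $i$-th projection cross a level-$\ell$ grid line.

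Suppose two indices $j\ne j'$ both cross on axis $i$, at grid points $m\Delta$ and $m'\Delta$. Subtracting the two conditions gives
\[
\left|\frac{j-j'}{L}-(m-m')\Delta\right| < D < \frac{\Delta}{L},
\]
so the quantity $(j-j')/L$ would have to lie within distance $\Delta/L$ of an integer multiple of $\Delta$. Now $\Delta=2^{-\ell}$ is a power of two, and, since $s\subseteq[0,1)^d$, also $\Delta\le 1$ (treat the trivial large-diameter case separately). Writing $\Delta=1/2^{\ell}$, the condition becomes
\[
\bigl|2^{\ell}(j-j') - (m-m')L\bigr| < 1 .
\]
The left side is an integer, so it must be $0$, i.e., $L\mid 2^{\ell}(j-j')$. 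Because $L$ is odd, this forces $L\mid (j-j')$. That is impossible for $0<|j-j'|<L$. Hence each axis has at most one bad $j$, and the union over the $d$ axes has at most $d$ bad indices.

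The main obstacle is the bookkeeping at the boundaries:
\begin{itemize}
\item the half-open convention of the cells, i.e., whether touching a grid line counts as crossing;
\item the regime where $2LD\ge 1$, so that no level $\ell\ge 0$ works;
\item translated objects leaving $[0,1)^d$.
\end{itemize}
For the first, I will fix the convention that crossing means $m\Delta$ lies in $(a_i+j/L,\,b_i+j/L]$, which matches half-open cells, and keep the inequality strict. For the second, I expect to allow levels $\ell\le 0$: quadtree cells are defined for all integers $\ell$, and then $2^{\ell}(j-j')$ is still an integer after multiplying through by the appropriate power of two, so the same divisibility argument goes through using only that $L$ is odd. For the third, nothing is needed, since the quadtree grid is unbounded. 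The hypothesis $L>d$ is not used in this argument; it only ensures that some good shift exists.
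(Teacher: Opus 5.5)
Your argument is the standard one behind Chan's shifting lemma; the paper does not reprove the lemma and simply cites [Chan03]. A bad shift on a fixed axis forces $2^{\ell}j/L \bmod 1$ into a fixed window of length less than $1/L$. Since $L$ is odd, $2^{\ell}$ is invertible mod $L$, so the values $2^{\ell}j/L \bmod 1$ for $j=0,\ldots,L-1$ are distinct multiples of $1/L$. Hence at most one $j$ is bad per axis, and at most $d$ overall. For levels $\ell\ge 0$, i.e.\ $2LD<2$, your write-up is correct, including the half-open bookkeeping.

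The one step that fails as written is your proposed handling of levels $\ell<0$ by the same divisibility argument. With $\ell=-t$, $t\ge 1$, multiplying your inequality $|2^{\ell}(j-j')-(m-m')L|<1$ by $2^t$ gives only $|(j-j')-2^t(m-m')L|<2^t$, which does not force equality.

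In fact, per-axis uniqueness genuinely fails at these levels unless you use $s\subseteq[0,1)^d$. Take $L=3$, $D=1/2$ and $\Delta=2$, with projection $[3/2,2]$. Then both $j=0$ and $j=1$ cross the grid point $2$.

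Relatedly, $s\subseteq[0,1)^d$ does not by itself give $\Delta\le 1$; $\Delta$ can be as large as $2LD$.

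The correct fix is the trivial one you alluded to:
\begin{itemize}
\item If $2LD\ge 2$, every translate satisfies $s+v^{(j)}\subseteq[j/L,1+j/L)^d\subseteq[0,2)^d$. This is a quadtree cell of side $2\le 2LD$, so all shifts are good.
\item If $1\le 2LD<2$, then $\ell=0$, and this case is already covered by your integrality argument since $2^0=1$.
\end{itemize}
You also implicitly need $D>0$ to choose $\ell$; this holds for the fat objects to which the lemma is applied.
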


\begin{theorem}
For fat convex objects in a constant dimension $d$,
there is a $(1-O(\eps))$-approximation algorithm for weighted max-volume selection
with running time $2^{O(1/\eps^{2d})}\poly(|S|)$.
\end{theorem}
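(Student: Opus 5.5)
We combine the preprocessing already described with the shifting lemma, Lemma~\ref{lem:alignment}, and Lemma~\ref{lem:solve:aligned}. After the preprocessing, all objects lie in $[0,1)^d$. The loss from deleting tiny objects is at most $\eps\vmax\le\eps\,\OPT$, since we assumed every single object fits in the budget. The diameters are polynomially bounded, so $\lmax=O(\log(n/\eps))$ and $2^{O(\lmax)}$ is polynomial. Choose $L$ to be an odd integer greater than $d$ with $L=\Theta(d/\eps)$, say the smallest odd integer at least $d/\eps$.

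For each $j\in\{0,\ldots,L-1\}$, let $S^{(j)}$ be the set of objects $s\in S$ such that $s+v^{(j)}$ is $2L$-aligned. Then solve the problem on the translated family $\{s+v^{(j)}: s\in S^{(j)}\}$ with the same budget, using Lemma~\ref{lem:solve:aligned} with alignment parameter $2L$. Translating all objects by a common vector preserves union volumes and weights, and the shifted objects lie in $[0,2)^d$. Rescaling by $1/2$ puts them back in $[0,1)^d$ and changes alignment only by a constant factor, or we can simply run the quadtree on the larger box. The output is the best of the $L$ solutions, translated back.

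For the approximation bound, let $Q^*$ be an optimal solution. By Lemma~\ref{lem:alignment}, each $s\in Q^*$ fails to be aligned for at most $d$ of the $L$ shifts. Hence there is some $j$ such that the objects of $Q^*\setminus S^{(j)}$ account for at most a $d/L\le\eps$ fraction of a suitable charging of $\OPT$. The main obstacle is that union volume is not additive, so we need a charging rather than a sum of object volumes. We order $Q^*$ as $s_1,s_2,\ldots$ and charge to each $s_i$ its marginal volume $\vol{s_i\setminus\bigcup_{i'<i}s_{i'}}$; these charges sum to $\OPT$. Averaging over $j$, the total charge of the removed objects is at most $(d/L)\OPT$ for some $j$. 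By monotonicity of union volume, $Q^*\cap S^{(j)}$ then has union volume at least $(1-\eps)\OPT$, because deleting an object decreases the union by at most its marginal charge, under any fixed order. $Q^*\cap S^{(j)}$ is feasible for the budget, so Lemma~\ref{lem:solve:aligned} returns a solution of value at least $(1-O(\eps))(1-\eps)\OPT$.

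For the running time, each call costs $2^{O((2L/\eps)^d)}\poly(|S|)=2^{O(1/\eps^{2d})}\poly(|S|)$ with $d$ constant, and there are $L=O(1/\eps)$ calls. The total is $2^{O(1/\eps^{2d})}\poly(|S|)$, with the polynomial absorbing $2^{O(\lmax)}$.
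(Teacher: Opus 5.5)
Your proof is correct and is essentially the paper's argument. The paper picks a random shift $v^{(j)}$ and notes that each point covered by $Q^*$ lies in an object of $Q^*$ that is $2L$-aligned after the shift with probability at least $1-d/L$, so $Q^*\cap S^{(j)}$ has expected volume at least $(1-d/L)\OPT$; it then derandomizes by trying all $L$ shifts. Your marginal-volume charging is the same per-point argument grouped by the first object covering each point: the disjoint marginal pieces of the kept objects remain covered, which is the clean justification of your last step. Your handling of the tiny-object loss and the $[0,2)^d$ range just fills in details the paper leaves implicit.
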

\begin{proof}
Choose $L\approx d/\eps$.
Pick a random shift $v^{(j)}$, and solve the problem for the subset
$S^{(j)} := \{s\in S: \mbox{$s+v^{(j)}$ is $2L$-aligned}\}$ by Lemma~\ref{lem:solve:aligned}.

Let $Q^*$ be the optimal solution.
For a fixed point $p$ covered by $Q^*$, the probability that $p$ is covered by
a $2L$-aligned object is at least $1-d/L$ by the shifting lemma.
Then  $Q^*\cap S^{(j)}$ is a feasible solution for $S^{(j)}$ with volume at least 
$(1-d/L)\OPT$.  Thus, the returned solution has volume at least $(1-d/L)(1-O(\eps))\OPT \ge (1-O(\eps))\OPT$.

(The algorithm can be easily derandomized by trying all $L$ shifts and returning the best solution.)
\end{proof}

\newcommand{\cS}{\mathcal{S}}
\newcommand{\cU}{\mathcal{P}} 
\newcommand{\alg}{\mathrm{ALG}}

\section{Hardness Results}

In this section, we prove hardness of approximation results for max-volume selection and max coverage.

\subsection{Max-Volume Selection}

First, we start with the continuous (unweighted) max coverage problem, i.e., max-volume selection. We will prove hardness of approximation in the simplest case where the objects are axis-aligned boxes.

\subsubsection{Hardness in High Dimension}\label{sec:hard:vol:high}

In high-dimension (where $d$ is part of the input), we can rule out PTAS for the problem. 

\begin{theorem}
When the dimension $d$ is part of the input, the max-volume selection problem is NP-hard to approximate to within $(1 - \eps)$-factor for some constant $\eps > 0$, even when each object is a similar-size box.
\end{theorem}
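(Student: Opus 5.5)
We reduce from an APX-hard version of max coverage with bounded structure, namely \emph{max $k$-coverage on set systems where every element lies in exactly two sets}, which is exactly max vertex coverage ("max $k$-vertex cover") in graphs of bounded degree. This problem is APX-hard: it generalizes vertex cover on cubic graphs (a graph has a vertex cover of size $k$ iff $k$ vertices can cover all $m$ edges), and the standard gap for vertex cover on bounded-degree graphs transfers, since $m=\Theta(k)$ there. So fix a graph $G=(V,E)$ with $|V|=n$ and maximum degree $3$. We want boxes $\{s_v\}_{v\in V}$ in $\R^d$, with $d=|E|$ (polynomial, part of the input), such that the volume of $\bigcup_{v\in T}s_v$ is an affine function of the number of edges covered by $T$, with a slope that is a constant fraction of the total.

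The construction uses one coordinate per edge. For edge $e$, the $e$-th coordinate interval of $s_v$ is $[0,1]$ if $v\notin e$. If $e=\{u,v\}$, assign $[0,1]$ to $s_u$ and $[1,2]$ to $s_v$ under an arbitrary orientation; more symmetrically, we may use three disjoint-interior subintervals. Then a point of $\R^d$ is described coordinatewise, and the union volume can be computed by inclusion--exclusion. Product structure alone, however, makes intersection volumes tiny but nonzero in a way that depends on more than pairs. I would therefore change the design so that each box is $[0,1]^d$ plus a "bump": $s_v=[0,1]^d\cup$ (a slab). A single box cannot be that union, so instead I take each $s_v$ to be the box with side $1+\eta$ in the coordinates of its incident edges and side $1$ elsewhere, placing the extension on the positive side for one endpoint and on the negative side for the other. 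The sides lie in $[1,1+\eta]$, so the boxes are similar-size. The union is the core cube $[0,1]^d$ together with regions whose volume to first order in $\eta$ is $\eta$ times the number of distinct (edge, side) pairs covered, i.e.\ $\eta$ times the number of incident edge-ends selected. This only counts vertex degrees, which is wrong.

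The fix is to make the two endpoints of an edge extend the \emph{same} side in coordinate $e$, but to different extents in a second, private coordinate. Here I would introduce a coordinate per (edge, endpoint) pair and arrange that the contributions of $u$ and $v$ to the $e$-slab overlap in a fixed fraction. The union of selected boxes then has volume $1+\eta\,(\alpha\cdot\#\text{covered edges}+\beta\cdot\sum\deg)+O(\eta^2 d^2)$. For $3$-regular graphs $\sum\deg=3k$ is fixed, so maximizing the volume is equivalent to maximizing the number of covered edges, up to the error term.

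The main obstacle is controlling the higher-order terms: products of extensions in several coordinates contribute $O(\eta^2)$ per pair of coordinates, so the total error is $O(\eta^2 m^2)$. To keep this below the constant-fraction gap $\Theta(\eta m)$ we need $\eta\ll 1/m$, which is polynomial and therefore fine. Exact rational volumes are computable, and the gap $\eps$ of the vertex-cover source then yields a constant $\eps'>0$ in volume. I expect the delicate part to be designing the shared-slab gadget so that the first-order term really counts edges, rather than degrees, with a constant coefficient.
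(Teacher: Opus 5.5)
The reduction is missing its central ingredient: you never define the boxes. The only concrete family you write down (extensions on opposite sides of coordinate $e$) you correctly reject, because its first-order volume counts degrees. The replacement (``a coordinate per (edge, endpoint) pair \ldots\ overlap in a fixed fraction'') is not a construction. The formula $1+\eta(\alpha\cdot\#\text{covered edges}+\beta\sum\deg)+O(\eta^2d^2)$ is asserted rather than derived, and you yourself flag the gadget as unresolved. The rest of your outline is routine, so this missing step is essentially the whole proof.

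The missing idea is simpler than what you were looking for. Drop the private coordinates and let \emph{both} endpoints of $e$ extend coordinate $e$ on the same side by the same amount: $s_v=\prod_{e\in E}I_{\mathbf{1}[v\in e]}$ with $I_0=[0,1]$ and $I_1=[0,1+\eta]$. Partition $[0,1+\eta]^d$ into cells indexed by the set $F$ of coordinates exceeding $1$. The cell for $F$ has volume $\eta^{|F|}$, and it lies in $s_v$ iff every edge of $F$ is incident to $v$; otherwise it is disjoint from $s_v$. Let $\mathrm{cov}(T)$ be the number of edges with a selected endpoint. Then the $|F|=1$ cells contribute exactly $\eta\,\mathrm{cov}(T)$. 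Cells with $|F|\ge 2$ come only from subsets of a single selected vertex's star, and in a simple cubic graph distinct vertices share no such subset, so these contribute exactly $3k\eta^2+k\eta^3$. The union volume is therefore exactly $1+\eta\,\mathrm{cov}(T)+3k\eta^2+k\eta^3$, which also settles your worry about dependence on more than pairs.

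With this identity there is no error term and no reason to take $\eta$ small, and you should not. Every box contains the core $[0,1]^d$ of volume $1$, while the structure-dependent volume is $O(\eta m)$, so a constant ratio requires $\eta m=\Omega(1)$. Your requirement ``$\eta\ll 1/m$'' only survives if it means $\eta=c/m$ for a small constant $c$, and with your $O(\eta^2m^2)$ bound the gap then drops to $\Omega(\eps^2)$. With $\eta=o(1/m)$ the ratio tends to $1$.

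Taking $\eta=1$ gives volume $1+4k+\mathrm{cov}(T)$. Since $k\le n=2m/3$ and the NO case of max $k$-vertex coverage on cubic graphs leaves $\Omega(\eps m)$ edges uncovered, the ratio is $1-\Omega(\eps)$. Side lengths in $\{1,2\}$ keep the boxes similar-size.

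These are exactly the paper's boxes (with $\ell_0=1$, $\ell_1=2$), but the paper reduces from Max-IS on $3$-regular graphs instead. There, independent vertices have pairwise disjoint parts outside the core, giving volume $1+7k$. In the NO case, each chosen vertex outside a maximal independent subset loses at least $(\ell_1-\ell_0)\ell_0^{d-1}$ by overlapping a chosen neighbor. Once the boxes are fixed as above, your max-vertex-coverage source is a valid alternative that replaces that charging argument with an exact affine identity.
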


\begin{proof}
We will reduce from the max independent set (Max-IS) problem on 3-regular graph which is known to be NP-hard to approximate to $(1 - \delta)$ factor for some constant $\delta > 0$~\cite{ChlebikC03}. Let $G = (V, E)$ and $k$ denote the input to Max-IS\@. We construct an input to the max-volume selection problem as follows:
\begin{itemize}
\item Let $d = |E|$ where each coordinate is associated with an edge. Pick two bounded intervals $I_0 \subseteq I_1 \subseteq \R$ of lengths $\ell_0 \leq \ell_1$ (e.g. $I_0 = [0, 1], I_1 = [0, 2]$ works for this reduction).
\item For every vertex $v$, create an object $O_v$ which is a box defined as $O_v = \prod_{e \in E} I_{\bone[v \in e]}$.
\item Finally, let $k$ remains the same.
\end{itemize}
It is obvious that the reduction runs in polynomial time. We will now prove its completeness and soundness. For this, it is convenient to define $B := \prod_{e \in E} I_0$.

\paragraph{(Completeness)} Suppose that there are $k$ vertices $v_1, \dots, v_k$ that form an independent set. Consider $O_{v_1} \cup \cdots \cup O_{v_k}$. It is simple to see that $O_{v_i} \setminus B$ are all disjoint (and that $O_{v_i} \supset B$). Thus, the volume is exactly
\begin{align*}
V := \ell_0^d + (\ell_1^3 \ell_0^{d - 3} - \ell_0^d)k \geq k \cdot \ell_0^d ((\ell_1 / \ell_0)^3 - 1).
\end{align*}

\paragraph{(Soundness)} Suppose that the maximum independent set has size less than $(1 - \delta)k$. Now, consider any $k$ objects $O_{v_1}, \dots, O_{v_k}$. Let $S$ denote any maximal independent set among $v_1, \dots, v_k$; by our assumption $|S| < (1 - \delta)k$. Note that every $v_i \notin S$ must have an edge to a vertex in $S$, let this edge be $e_i$ and the vertex be $u_i$.

Now, we have
\begin{align*}
&\vol{O_{v_1} \cup \cdots \cup O_{v_k}} \\
&= \vol{B} +  \vol{(O_{v_1} \setminus B) \cup \cdots \cup (O_{v_k} \setminus B)} \\
&\leq \ell_0^d + |S| (\ell_1^3 \ell_0^{d - 3} - \ell_0^d) + \sum_{v_i \notin S} \vol{(O_{v_i} \setminus O_{u_i}) \setminus B} \\
&= \ell_0^d + |S| (\ell_1^3 \ell_0^{d - 3} - \ell_0^d) + \sum_{v_i \notin S} \left((\ell_1^3 \ell_0^{d - 3} - \ell_0^d) - \vol{(O_{v_i} \cap O_{u_i}) \setminus B } \right).
\end{align*}
Now, notice that, for each $v_i \notin S$, $O_{v_i} \cap O_{u_i} \supset \prod_{e \in E} I_{\bone[e = e_i]}$. Thus, $\vol{(O_{v_i} \cap O_{u_i}) \setminus B} \geq (\ell_1 - \ell_0)\ell_0^{d-1}$. Plugging this into the above, we get
\begin{align*}
&\vol{O_{v_1} \cup \cdots \cup O_{v_k} } \\
&\leq \ell_0^d + |S| (\ell_1^3 \ell_0^{d - 3} - \ell_0^d) + \sum_{v_i \notin S} \left((\ell_1^3 \ell_0^{d - 3} - \ell_0^d) - (\ell_1 - \ell_0)\ell_0^{d-1}\right) \\
&= V - (k - |S|) (\ell_1 - \ell_0)\ell_0^{d-1} \\
&\leq V \left(1 - \frac{k - |S|}{k} \cdot \frac{(\ell_1  / \ell_0) - 1}{(\ell_1 / \ell_0)^3 - 1}\right).
\end{align*}
Note that $\frac{k - |S|}{k} \geq \delta$ and that, when picking $\ell_1 > \ell_0$ as constants, $\frac{(\ell_1  / \ell_0) - 1}{(\ell_1 / \ell_0)^3 - 1} = \Omega(1)$. Thus, the gap between the two cases is $1 - \eps$ for some $\eps = \Omega(\delta)$ as desired.
\end{proof}


\subsubsection{Hardness in Low Dimension}\label{sec:hard:vol:low}

Next, we continue with the low dimension case. For any dimension $d \geq 3$, we are \emph{not} able to rule out PTASes. However, we still manage to prove that $\left(1 - O\left(\frac{1}{\log n}\right)\right)$-approximation is NP-hard. We remark also that, assuming NP $\nsubseteq$ TIME($2^{n^{o(1)}}$), this also rules out PTAS with running time e.g. $n^{2^{O(1/\eps^{0.99})}}$. In other words, while PTASes may still exist for the problem, our result imples (under standard complexity-theoretic assumptions) that their running time must be extremely large. Our result is stated more formally below.


\begin{theorem} \label{thm:main-low-dim-hardness}
For any $d \geq 3$, the max-volume selection problem is NP-hard to approximate to within $\Paren{1 - \frac{\eps}{\log n}}$-factor for some constant $\eps > 0$, even when each object is a unit-volume box.
\end{theorem}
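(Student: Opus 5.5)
We plan to adapt the high-dimensional reduction above, compressing its $|E|$ coordinates into three. As there, we reduce from Max-IS on 3-regular graphs, which is NP-hard to approximate within $1-\delta$. In the high-dimensional reduction each edge received its own coordinate, and a box $O_v$ is ``long'' exactly in the coordinates of its three incident edges. In dimension $3$ we cannot afford one coordinate per edge. Instead, we encode each edge $e$ by a small region $R_e$ of $\R^3$ that is disjoint from the other edge regions. Each vertex $v$ gets a unit-volume box $O_v$ whose ``private volume'' lies in the three regions $R_e$ with $e\ni v$. Two adjacent vertices $u,v$ then share the volume in $R_{uv}$. The shape of the target bound $1-\eps/\log n$ suggests that this overlap can only be made a $\Theta(1/\log n)$ fraction of a box's volume. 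This matches the intuition that boxes in $\R^3$ realize incidence patterns with only a logarithmic loss, for example via a dyadic / bit-encoding construction.

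Concretely, we would index vertices by binary strings of length $m=O(\log n)$. Boxes are built as products of intervals, and we exploit that a union of boxes in 3D can encode, on a logarithmic number of scales, which vertices are incident to which edges. Each box $O_v$ consists of a large common ``core'' $B$ shared by all boxes (as in the proof above), plus thin protrusions. Since a box is a single product of intervals, we would instead let the core of $O_v$ be a long slab in one direction, with its position in the other two coordinates determined by $v$. The edge $e=uv$ then corresponds to the rectangle where the slabs of $u$ and $v$ cross. To get $\Omega(1/\log n)$ overlap per adjacent pair with unit-volume boxes, we would use boxes of aspect ratio $2^{i}\times 2^{-i}\times 1$ for $i=1,\dots,\log n$ (or a similar layered scheme across dyadic scales). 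Such boxes pairwise intersect in volume about $1/\log n$ of their union contributions only when the corresponding ``bits'' agree.

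Completeness and soundness would then mirror the proof above. For an independent set of size $k$, the private parts are disjoint, giving a volume of exactly some $V$. If every $k$-subset contains at least $\delta k$ vertices adjacent to a chosen maximal independent set, each such vertex loses at least $\Omega(V/(k\log n))$ volume. This yields the ratio $1-\Omega(\delta/\log n)$. We also have to ensure that \emph{non-adjacent} pairs overlap negligibly, which is the delicate part. The extension to $d>3$ is immediate by taking a product with $[0,1]^{d-3}$.

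The main obstacle is the geometric gadget: realizing a 3-regular graph's adjacency structure by unit-volume boxes in $\R^3$. Adjacent pairs must overlap in $\Omega(1/\log n)$ volume and non-adjacent pairs must overlap in essentially nothing. We would first try to embed the graph after reduction to a bounded-degree instance with an ordering (or a bipartite double cover). Edges would be routed along $O(\log n)$ dyadic levels, with the loss of a $\log n$ factor coming from splitting each box's volume across these levels.
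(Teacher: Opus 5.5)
\textbf{Gap: the central gadget is missing, and the one you sketch cannot work.} Your endgame is right. Take a maximal independent set $J$ among the chosen boxes and charge every other chosen box a loss of $\gamma$; that is exactly how the paper turns an independent-set gap into a volume gap. Taking a product with $[0,1]^{d-3}$ also correctly handles $d>3$.

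The gap is the gadget you yourself flag as the main obstacle. You need a representation of an \emph{arbitrary} cubic graph $G$ by unit-volume boxes in $\R^3$ in which adjacent pairs overlap in volume $\Omega(1/\log n)$ and non-adjacent pairs are (essentially) disjoint. Nothing in the proposal constructs it, and the mechanisms you sketch fail:
\begin{itemize}
\item For unit-volume boxes, $\vol{B\cap B'}\ge\gamma$ forces every side-length ratio between $B$ and $B'$ to lie in $[\gamma,1/\gamma]$. So two ``slabs'' elongated in different directions by more than an $O(\log n)$ factor cannot overlap heavily where they cross, and the crossing-slab mechanism cannot produce the required overlaps.
\item Boxes of shape $2^i\times 2^{-i}\times 1$ that meet ``when bits agree'' realize a pattern fixed by the bit strings, not the adjacency of a given $G$.
\end{itemize}
It is unclear whether such a direct representation exists at all. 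Also, the heuristic that the $\log n$ comes from splitting each box's volume across dyadic levels is not where the loss actually arises.

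\textbf{Missing idea: embed a subdivision of $G$, not $G$ itself.} Use the $s$-subdivision $G_s$ with $s=\Theta(\log n)$ even. By Chleb\'{\i}k and Chleb\'{\i}kov\'{a}, $\optis(G_s)=\optis(G)+|E|s/2$. The additive gap $\eps k$ therefore survives, but it is now measured against $k'=k+|E|s/2=\Theta(|V|\log|V|)$. This dilution is the true source of the $1/\log n$.

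In exchange, $G_s$ admits a much stronger embedding: unit-volume boxes whose intersection graph is exactly $G_s$, so non-adjacent boxes are disjoint, and adjacent boxes overlap in \emph{constant} volume $1/512$. The construction is as follows.
\begin{itemize}
\item Original vertices become unit cubes placed far apart on the main diagonal.
\item Each edge is routed along a six-segment axis-parallel polyline through its own ``highway'' coordinate. Axes are assigned by a proper $3$-edge-colouring of the vertex--edge incidence graph, so that the thin tubes containing different segments are pairwise disjoint.
\item Each segment, of length $\mathrm{poly}(|V|)$, is filled by a chain of $\Theta(\log|V|)$ unit-volume boxes of shape $X\times X^{-1/2}\times X^{-1/2}$, whose length $X$ first grows and then shrinks geometrically.
\end{itemize}
The side-ratio constraint above is exactly why $\Theta(\log|V|)$ boxes per edge are needed. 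With this embedding, your endgame goes through with a constant $\gamma$.
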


We will prove this via a similar approach to that of hardness of approximation for Maximum Independent Set (Max-IS) on intersection graphs of boxes due to~\cite{ChlebikC05}. In fact, we can show a version of Max-IS on intersection graphs of boxes that has a stronger property that the intersection are either heavy or empty. This is formalized below, and might be of independent interest.

\begin{definition}
A set of objects $O_1, \dots, O_n$ are said to satisfy \emph{$\gamma$-heavy-intersection} property if, for every $O_i, O_j$ such that $O_i \cap O_j \ne \emptyset$, we have $\vol{O_i \cap O_j} \geq \gamma$.
\end{definition}

\begin{theorem} \label{thm:low-dim-heavy-ind}
There exist constants $\gamma, \zeta > 0$ such that the following holds. For any $d \geq 3$, given a set $\{B_1, \dots, B_n\}$ of unit-volume boxes in $\R^d$ with $\gamma$-heavy-intersection property and an integer $k$, it is NP-hard to distinguish between:
\begin{itemize}
\item (YES) There exists $k$ boxes from the input that form an independent set.
\item (NO) There is no $\Paren{1 - \frac{\zeta}{\log n}} k$ boxes from the input that form an independent set.
\end{itemize}
\end{theorem}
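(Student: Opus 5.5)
The plan is a gap-preserving reduction from Max-IS on 3-regular graphs, which is NP-hard to approximate within $1-\delta$~\cite{ChlebikC03}, combined with the standard \emph{even subdivision} trick used by~\cite{ChlebikC05}. Let $G=(V,E)$ be 3-regular with $|V|=N$ and $m=|E|=3N/2$. If every edge is replaced by a path with $2t_e$ new internal vertices, the resulting graph $G'$ satisfies $\alpha(G')=\alpha(G)+\sum_e t_e$.

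If every $t_e\le T$, the instance maps as follows:
\begin{itemize}
\item a YES instance with target $k_0=\Theta(N)$ becomes one with target $k=k_0+\sum_e t_e$;
\item a NO instance has $\alpha(G')<(1-\delta)k_0+\sum_e t_e$.
\end{itemize}
Since $k_0=\Theta(m)$, this is at most $(1-\Omega(\delta/T))k$. So it suffices to realize $G'$ with $T=O(\log N)$, with polynomially many boxes so that $\log n=\Theta(\log N)$. The boxes must be unit-volume, must be disjoint exactly on non-edges of $G'$, and must satisfy the $\gamma$-heavy-intersection property. This gives the theorem with $\zeta=\Theta(\delta)$.

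For the geometric realization in $\R^3$, I would first fix an ordering of $V$ and represent each original vertex $v$ by a box near a ``port'' position on a base layer. The three edges of $G$ then have to be routed between ports, and I would do this with a permutation routing network of depth $O(\log N)$. Since $G$ is 3-regular, $E$ decomposes into $O(1)$ matchings, and each matching is a partial permutation of the ports. I would route each matching through a Bene\v{s} network, realized layer by layer in the $z$-direction.

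Each wire in the network is a chain of boxes. At each network layer a wire either goes straight or performs a swap, and a swap uses the third dimension to pass over the other wire without touching it. So each edge of $G$ becomes a path of $O(\log N)$ boxes, and I would pad with an extra box where needed so that the number of internal boxes is even. Different matchings would use disjoint slabs of space, with each vertex box extended (for instance as an $L$-shaped set split into two boxes forming a subdivided edge, again keeping parity even) so that it reaches all $O(1)$ slabs.

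To get unit volume and heavy intersections, I would make all the boxes come from $O(1)$ combinatorial shapes at a common scale. Consecutive boxes in a chain would overlap in a sub-box of constant volume fraction, while non-consecutive boxes keep a constant gap. A wire in a layer of width $w=\Theta(N)$ ports needs length $\Theta(N)$ to travel, so I would make boxes long in the travel direction and thin in the other two. Then I would rescale each box anisotropically so its volume is exactly $1$, adjusting the thin dimensions consistently across a whole layer so that overlap volumes remain $\ge\gamma$. A simpler alternative is to give each box only constant aspect ratio and build the Bene\v{s} layers from $O(1)$-length hops over a grid of size $\mathrm{poly}(N)$. That costs more subdivisions per edge, but still only $O(\log N)$ of them if each stage moves a wire by a power-of-two distance using one long box of that length. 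This is the reason $\gamma$ has to be defined relative to unit volume and not relative to length.

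For $d>3$ I would take the $d=3$ construction and multiply every box by $[0,1]^{d-3}$; volumes, intersection volumes and the intersection pattern are all preserved. The main obstacle is the middle step. The Bene\v{s} routing has to be realized by boxes so that (a) wires that cross in projection are genuinely disjoint, (b) the per-edge path length is $O(\log N)$ and even, and (c) all boxes can simultaneously be made unit-volume with overlaps bounded below by a fixed constant $\gamma$ independent of $N$. I would first try a layered construction in which every box in stage $i$ has the same dimensions, for example length $2^{i}$ by $2^{-i}$ by $1$, and check that the overlap between a stage-$i$ box and a stage-$(i+1)$ box is a constant fraction of $1$.
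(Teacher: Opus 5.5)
\textbf{There is a genuine gap: the box realization is never supplied.} Your outer reduction is sound and matches the paper's. It reduces from Max-IS on graphs of maximum degree $3$ and uses even subdivisions with $\Theta(\log N)$ new vertices per edge, so the constant gap becomes $\Omega(\delta/\log n)$. It then lifts from $\R^3$ to $\R^d$ via a product with $[0,1]^{d-3}$. The box realization, however, is the whole content of the theorem beyond cited results, and you leave it as ``the main obstacle.''

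\textbf{Your concrete sketches conflict with heavy intersection.} Take unit-volume boxes with sides $(X_1,Y_1,Z_1)$ and $(X_2,Y_2,Z_2)$ and $X_2<\gamma X_1$. Their intersection has volume at most $X_2Y_1Z_1=X_2/X_1<\gamma$. So along any chain, each side length changes by a factor of at most $1/\gamma$ per step.
\begin{itemize}
\item A box of length $2^i$ (or $\Theta(N)$) can be adjacent neither to an $O(1)$-size port box nor to a box heading in another direction. A Bene\v{s} stage therefore cannot shift a wire by $2^i$ ``using one long box of that length'' at $O(1)$ cost.
\item If wires turn between stages, as in your picture of going straight or swapping and then continuing upward, stage $i$ costs $\Theta(i)$ boxes. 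That gives $\Theta(\log^2 N)$ boxes per edge and only a $1-\Omega(1/\log^2 n)$ gap.
\item The layered $2^i\times 2^{-i}\times 1$ idea is not worked out. There each wire covers an $x$-interval of length $2^i$ and overlaps $\Theta(2^i)$ other wires in projection. You would need an explicit stacking scheme, not wires ``passing over'' one another.
\item You also do not say how a matching edge $\{u,v\}$ is closed when $u$ and $v$ sit at the input and output ends of a permutation network. Nor do you say how one vertex box reaches $O(1)$ distant slabs while keeping heavy overlaps.
\end{itemize}

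\textbf{The missing ingredient is the paper's single-axis box path lemma, which makes permutation networks unnecessary.}
\begin{itemize}
\item Two unit cubes at axis-parallel distance $D$, with $100\cdot 2^{L/2}\le D\le 2^L$, are joined by $L$ unit-volume boxes of size $X_\ell\times X_\ell^{-1/2}\times X_\ell^{-1/2}$.
\item Here $X_\ell=32\alpha^\ell$ grows geometrically up to the middle of the chain and shrinks symmetrically afterwards. The parameter $\alpha\in[2,4]$ is chosen by continuity so that the total length is exactly $D$.
\item Consecutive overlaps are at least $1/512$, non-consecutive boxes are disjoint, and all boxes lie in a thin tube around the segment. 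Turns happen only at unit cubes.
\end{itemize}

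\textbf{With this lemma, the routing is elementary.} Any orthogonal drawing with $O(1)$ bends per edge and polynomial coordinates suffices. The paper does the following:
\begin{itemize}
\item It places vertex cubes at $(Ki,Ki,Ki)$ with $K=\Theta(N^3)$.
\item It routes edge $e_j$ along a $6$-segment polyline through a private highway coordinate $H_j=K(N+j)$. An edge coloring of the vertex--edge incidence graph makes the edges at a vertex, and the two ends of each edge, use distinct axes.
\item Each segment uses $L=\Theta(\log N)$ boxes, giving $s=6L+6$ subdivision vertices per edge.
\item Disjointness of distinct segments follows from a short case analysis on their tubes.
\end{itemize}
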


Before we prove \Cref{thm:low-dim-heavy-ind}, let us note that it implies our main hardness (\Cref{thm:main-low-dim-hardness}).

\begin{proof}[Proof of \Cref{thm:main-low-dim-hardness}]
Let $\eps = \gamma \cdot \zeta$ where $\gamma, \zeta$ are the constants from \Cref{thm:low-dim-heavy-ind}.

Let $(B_1, \dots, B_n, k)$ be the input from \Cref{thm:low-dim-heavy-ind}. In the YES case, there exist $k$ disjoint boxes. The volume of the union of these $k$ disjoint boxes is exactly $k$.

As for the NO case, suppose that there is no $\Paren{1 - \frac{\zeta}{\log n}} k$ boxes from the input that form an independent set. Consider any $k$ boxes $(B_i)_{i \in I}$. Let $(B_j)_{j \in J}$ where $J \subseteq I$ denote the maximum independent set among these boxes; from our assumption, we have $|J| < \Paren{1 - \frac{\zeta}{\log n}} k$. Moreover, since $(B_j)_{j \in J}$ is a maximum independent set among $(B_i)_{i \in I}$, for any $\ell \in I \setminus J$, $B_\ell$ must intersect with at least one box from $(B_j)_{j \in J}$. Thus, from $\gamma$-heavy-intersection property, we have
\begin{align*}
\vol{\bigcup_{i \in I} B_i} &\leq \vol{\bigcup_{j \in J} B_j} + \sum_{\ell \in I \setminus J} \vol{B_\ell \setminus \bigcup{j \in J} B_j} \\
&\leq |J| + |I \setminus J| \cdot (1 - \gamma) \\
&= k - |I \setminus J| \cdot \gamma \\
&< k \Paren{1 - \frac{\gamma \zeta}{\log n}} = k \Paren{1 - \frac{\eps}{\log n}} \qedhere.
\end{align*}
\end{proof}

We will next prove \Cref{thm:main-low-dim-hardness}. Our proof follows the general approach of~\cite{ChlebikC05}.
In particular, their reduction proceeds by first ``subdividing'' the graph before embedding it in $d$-dimensional boxes. To describe this, let us first define a subdivision of a graph.

\begin{definition}[Subdivision Graph]
Let $G = (V, E)$ be any graph. For $s \in \N$, its $s$-subdivision is the graph $G_s$ obtained from $G$ by $s$-subdivision of each edge $e$, i.e., replacing $e=\{u,v\}$ by a path with end vertices $u$, $v$, and $s$ new internal vertices\footnote{Note that the paths are pairwise internally disjoint.} $e^1, \dots, e^s$.
\end{definition}

Chleb{\'{\i}}k and Chleb{\'{\i}}kov{\'{a}} \cite{ChlebikC05} shows that any $s$-subdivision of a graph where $s \geq 3$ can be embedded as an intersection graph of boxes in $\R^3$. Unfortunately, this is insufficient for us since their box volumes vary significantly. In particular, while each of their ``original vertex'' boxes has unit volume, each of their ``new internal vertex'' boxes has volume as large as $\Omega(n^2)$. This prevents us from using their reduction from our problem.

As stated above, we will instead provide an embedding that satisfies $\gamma$-heavy-intersection property. However, our reduction will need $s$ to be $\Omega(\log n)$. The main lemma we need is the following which allows us to connect any two unit cubes at distance $D$ (along some axis) using $\Omega(\log D)$ intermediate boxes while satisfying $\gamma$-heavy-intersection property. We remark that the dependency on $\log D$ is necessary: To satisfy $\gamma$-heavy-intersection each length of the box cannot change by a factor of more than $1/\gamma$ in each step. Thus, to achieve a distance of $D$, we need $\Omega_\gamma(\log D)$ steps.

Throughout the remainder of the proof, for a center $c \in \R^3$, we use $B^c$ to denote the unit cube centered at $c$, i.e. $B^c = [c_1 - 1/2, c_1 + 1/2] \times [c_2 - 1/2, c_2 + 1/2] \times [c_3 - 1/2, c_3 + 1/2]$. For $i \in [3]$, we also use $\bone_i \in \{0, 1\}^3$ to denote the vector that is one only at the $i$-coordinate. 

\begin{lemma}[Single-Axis Box Path]
\label{lem:internal_boxes}
Let $c, c' \in \R^3$ be two points differing in exactly one coordinate by a distance $D$. That is, $c' = c + D \cdot \bone_i$ for some $i \in [3]$ and $D \in \R_{> 0}$. Let $L \ge 2$ be an integer, and suppose the distance $D$ satisfies $100 \cdot 2^{L/2} \le D \le 2^L$. 

Then, there exists a sequence of $L$ unit-volume, axis-parallel boxes $B_1, B_2, \dots, B_L$ such that, setting $B_0 = B^c$ and $B_{L+1} = B^{c'}$:
\begin{enumerate}
    \item \textbf{Consecutive Intersection:} $\mathrm{Vol}(B_{\ell-1} \cap B_\ell) \ge \gamma$ for all $\ell \in [L + 1]$, where $\gamma = 1/512$.
    \item \textbf{Non-consecutive Disjointness:} $B_{\ell} \cap B_{\ell'} = \emptyset$ for all $|\ell - \ell'| \ge 2$.
    \item \textbf{Strictly Narrow Bounding Tube:} Let $T_{c, c'}$ be the bounding box (or ``tube'') defined by the interval $(c_i + 1/4, c'_i - 1/4)$ on the $i$-coordinate and the intervals $(c_j - 1/5, c_j + 1/5)$ on the other coordinates $j \ne i$. All intermediate boxes $B_1, \dots, B_L$ are contained within $T$.
\end{enumerate}
\end{lemma}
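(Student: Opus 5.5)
We construct the boxes explicitly. By symmetry take $i=1$ and $c=0$, so $c'=(D,0,0)$. The natural shape is a chain that first grows long and thin along the $x$-axis and then shrinks back. Each box $B_\ell$ is centered on the $x$-axis with side lengths $(a_\ell, b_\ell, b_\ell)$, where $a_\ell b_\ell^2=1$. Consecutive boxes should overlap in a constant fraction of their volume. Growing $a$ by a factor of $4$ per step gives $b$ shrinking by a factor of $2$, so after $m$ steps $a=4^m$ and $b=2^{-m}$.

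However, the tube condition forces $b_\ell<2/5$, so even the first box must already be thin. Moreover, the $x$-extent covered grows geometrically. So I would plan the chain in three parts:
\begin{itemize}
\item[(i)] an ascending phase $B_1,\dots,B_m$ with $a_\ell=4^{\ell}$ (up to a fixed initial factor chosen to satisfy the tube constraint), each box starting slightly to the right of $x=1/4$;
\item[(ii)] a symmetric descending phase mirrored about $x=D/2$, ending near $c'$;
\item[(iii)] a ``filler'' middle segment that absorbs the slack, since $D$ is only known to lie in $[100\cdot 2^{L/2},2^L]$ and the total number of boxes must be exactly $L$.
\end{itemize}

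Concretely, with $m\approx L/4$ up and $m$ down, the largest boxes have length about $4^{m}=2^{L/2}$. That is at most $D/100$, so the two phases alone do not reach across. The remaining $\approx L/2$ boxes must cover a gap of up to about $2^L$. To do that, I would let the middle boxes keep growing with ratio up to $4$ and then shrink; equivalently, I would choose a unimodal sequence of lengths $a_1,\dots,a_L$ with consecutive ratios in $[1,4]$ going up and in $[1/4,1]$ going down. The sequence would be tuned so that the sum of the effective advances (about half of each length, consecutive boxes overlapping by roughly half the shorter one) equals $D-1/2$ exactly. The lower bound $D\ge 100\cdot 2^{L/2}$ ensures that $L$ boxes are not too many: we can use ratio close to $1$ for some steps without overshooting, which needs continuity in the ratio parameter. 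The upper bound $D\le 2^L$ ensures $L$ boxes can reach. The intermediate value theorem on a one-parameter family (for instance, the common growth ratio $r\in[1,4]$ applied symmetrically) should hit $D$ exactly.

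Next I would verify the three properties.
\begin{itemize}
\item \textbf{Consecutive intersection.} Two consecutive boxes are nested in $y,z$ (the wider cross-section contains the narrower one). Their $x$-overlap is a constant fraction, at least $1/4$, of the shorter length. So the intersection volume is at least roughly $(\text{shorter } a)\cdot(\text{smaller }b)^2/4$. Since the ratios are bounded by $4$ and $2$, this is at least $1/(4\cdot 4\cdot\text{const})$, and I would check it exceeds $1/512$. The same computation applies to $B_0\cap B_1$ and $B_L\cap B_{L+1}$ using the unit cubes, once the first box's $x$-interval pokes into $[-1/2,1/2]$; this requires the first box to start at $x$ in $(1/4,1/2)$.
\item \textbf{Non-consecutive disjointness.} I would place the $x$-intervals so that $B_{\ell+1}$'s interval starts strictly after $B_{\ell-1}$'s ends. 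This is arranged by letting each box's left endpoint lie in the right half of its predecessor and its right endpoint lie beyond its predecessor's right end by more than the predecessor's overlap.
\item \textbf{Tube.} This holds since all cross-section half-widths are at most $1/5$, and the $x$-intervals lie in $(1/4, D-1/4)$.
\end{itemize}

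The main obstacle is hitting $D$ exactly with exactly $L$ boxes while keeping all the ratio constraints, which together yield $\gamma=1/512$. I would first try the symmetric family with a single tunable ratio and constant proportional overlap, then fix integrality issues in the phase lengths by allowing one box of intermediate length.
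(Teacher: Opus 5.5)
Your proposal is correct and is essentially the paper's construction: a symmetric grow-then-shrink chain of $a\times a^{-1/2}\times a^{-1/2}$ boxes centered on the axis, each overlapping the next by half the shorter length, with a single common growth ratio chosen by the intermediate value theorem so the chain spans exactly $D$. The paper fixes $X_\ell=32\alpha^\ell$ with $\alpha\in[2,4]$, uses a $\tfrac34 X_k$ step at the peak, and skips your detour through a separate ``filler'' segment.

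One small correction: the end junctions $B_0\cap B_1$ and $B_L\cap B_{L+1}$ are not ``the same computation.'' The cube $B_0$ has length $1$ while $a_1>25/4$, so the length ratio there is not bounded by $4$. The intersection volume is $(\tfrac12-s)/a_1$, which must be at least $1/512$ and so forces $a_1<128$. This is exactly where $\gamma=1/512$ comes from; in the paper it is $1/(128\alpha)\ge 1/512$.
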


\begin{proof}
Without loss of generality, assume $c = (0,0,0)$ and $c' = (D, 0, 0)$, so the differing coordinate is the first axis. The tube $T$ is therefore defined as $[1/4, D - 1/4] \times [-1/5, 1/5] \times [-1/5, 1/5]$. 

We will give a proof below when $L$ is an even integer. The case where $L$ is odd follows similarly\footnote{For odd $L$, the sequence $X_\ell$'s has a single maximal peak at index $m = (L+1)/2$: We set $X_\ell = 32\alpha^\ell$ for $\ell \le m$ and $X_\ell = 32\alpha^{L+1-\ell}$ for $\ell > m$. The special step $\delta_{k+1} = \frac{3}{4}X_k$ is omitted; instead, we set $\delta_\ell = \max(X_{\ell-1}, X_\ell)/2$ for all $2 \leq \ell \leq L$}.

Each box $B_\ell$ has center $c_\ell = (x_\ell, 0, 0)$, and dimensions $X_\ell \times Y_\ell \times Z_\ell$ such that $X_\ell Y_\ell Z_\ell = 1$. In particular, we set $Y_\ell = Z_\ell = X_\ell^{-1/2}$, where $x_{\ell}$ and $X_\ell$ will be specified below.

Since $L$ is even, we evenly partition the $L$ boxes into a growing phase of $k = L/2$ boxes and a shrinking phase of $k$ boxes. We introduce a growth parameter $\alpha \in [2, 4]$ and set the $x$-dimensions $X_\ell$ as follows:
\begin{itemize}
    \item \textbf{Growing Phase} ($1 \le \ell \le k$): $X_\ell = 32\alpha^\ell$.
    \item \textbf{Shrinking Phase} ($k < \ell \le L$): $X_\ell = 32\alpha^{L+1-\ell}$.
\end{itemize}

We define the center-to-center step sizes $\delta_\ell = x_\ell - x_{\ell-1}$ (where $x_0 = 0$ and $x_{L+1} = D$):
\begin{itemize}
    \item $\delta_1 = X_1/2 + 1/4 = 16\alpha + 1/4$.
    \item $\delta_\ell = X_\ell/2 = 16\alpha^\ell$ for $2 \le \ell \le k$.
    \item $\delta_{k+1} = \frac{3}{4} X_k = 24\alpha^k$.
    \item $\delta_\ell = X_{\ell-1}/2 = 16\alpha^{L+2-\ell}$ for $k+2 \le \ell \le L$.
    \item $\delta_{L+1} = X_L/2 + 1/4 = 16\alpha + 1/4$.
\end{itemize}

Total length covered is a continuous function $D(\alpha) = \sum_{j=1}^{L+1} \delta_j$. Evaluating at $\alpha=2$, the sum evaluates strictly to $D(2) \le 100 \cdot 2^{L/2}$. Evaluating at $\alpha=4$, the geometric series evaluates to $D(4) \ge 12 \cdot 2^L \ge 2^L$. Because $100 \cdot 2^{L/2} \le D \le 2^L$, the Intermediate Value Theorem guarantees there exists $\alpha \in [2, 4]$ such that $D(\alpha) = D$ exactly. We use this value of $\alpha$ in the construction. 

We next prove the three properties claimed in the lemma.

\textbf{Property 3 (Bounding Tube):} Since $\alpha \ge 2$, every intermediate box has $X_\ell \ge 64$. Consequently, $Y_\ell = Z_\ell \le 1/\sqrt{64} = 1/8$, which is strictly less than $1/5$. Thus, they are contained within $[-1/5, 1/5]$ on the last two axes. Along the first axis, the left boundary of $B_\ell$ is $x_\ell - X_\ell/2$ and the right boundary is $x_\ell + X_\ell/2$. Because $\delta_m \ge X_m/2$ for all steps (and symmetry), these intervals are strictly contained within the span of the first and last intermediate boxes: $[x_\ell - X_\ell/2, x_\ell + X_\ell/2] \subseteq [x_1 - X_1/2, x_L + X_L/2] = [1/4, D - 1/4]$.

\textbf{Property 1 (Intersection):} 
Due to symmetry, it suffices to consider $\ell \in [k + 1]$. First, consider the case $\ell = 1$. $B_0$ spans $[-1/2, 1/2]$ in the first axis, while $B_1$ spans $[1/4, X_1 + 1/4]$ where $X_1 > 1/4$. Thus, the first axis overlap is exactly $1/4$. Each of the other two axes overlap is $1/\sqrt{X_1} = 1/\sqrt{32\alpha}$. The intersection volume is $\frac{1}{4} \cdot \frac{1}{32\alpha} = \frac{1}{128\alpha} \ge 1/512$. 

For $2 \leq \ell \leq k$, the first axis overlap is $X_{\ell-1}/2 = 16\alpha^{\ell-1}$. Each of the other two axes overlap is $1/\sqrt{X_\ell} = 1/\sqrt{32\alpha^\ell}$. This yields the intersection volume $\frac{1}{2\alpha} \ge 1/8 \ge 1/512$. 

Finally, at the peak transition ($\ell = k+1$), the first axis overlap is $X_k - \delta_{k+1} = X_k - \frac{3}{4}X_k = \frac{1}{4}X_k$, while each of the other two axes overlap is $1/\sqrt{X_k}$. The intersection volume is thus  $1/4 \ge 1/512$.

\textbf{Property 2 (Disjointness):} 
Consider any $\ell, \ell'$ such that $|\ell - \ell'| \geq 2$. Assume w.l.o.g. that $\ell < \ell'$. The distance between their centers is $x_{\ell'} - x_\ell = \sum_{m=\ell+1}^{\ell'} \delta_m \geq \delta_{\ell + 1} + \delta_{\ell'}$. Since $\delta_{\ell+1} > X_\ell/2$ and $\delta_{\ell'} \geq X_{\ell'} / 2$, we have $x_j - x_i \geq X_i/2 + X_j/2$ and $B_\ell, B_{\ell'}$ are disjoint.
\end{proof}

Using the above axis-aligned box path, we can construct the embedding for $s$-division graph for $s = \Omega(\log n)$ as defined below. The idea is roughly similar to \cite{ChlebikC05}, where, for each vertex, we assign each of its edge to a different coordinate and ``route'' the subdivided edge through that coordinate. The main difference is that we use our box path above for such a routing. 


For routing to go through different coordinates, it will be convenient\footnote{It is possible to construct an embedding without the ``Edge Distinctness'' requirement in the lemma, but this property helps simplify the reduction.} for us to use the following lemma, which is simple to show. Here $A(w, e)$ is the coordinate $w$ will use to route $e$.

\begin{lemma} \label{lem:edge-vertex-coloring}
Let $G = (V, E)$ be any graph with $n$ vertices and maximum degree $d \geq 2$. There exists an assignment $A(v, e) \in [d]$ for every vertex $v$ that is adjacent to edge $e$ that satisfies the following:
\begin{itemize}
\item (Vertex Distinctness) For every vertex $v$, $A(v, e)$ are distinct for all edges $e$ adjacent to $v$.
\item (Edge Distinctness) For every edge $e = \{u, w\}$, $A(u, e)$ and $A(w, e)$ are distinct.
\end{itemize}
Moreover, such an assignment can be found in polynomial time.
\end{lemma}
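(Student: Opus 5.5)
The plan is to recast the assignment as a proper edge coloring of an auxiliary bipartite graph and then invoke K\H{o}nig's edge-coloring theorem. Let $G'$ be the $1$-subdivision of $G$ (in the notation of the subdivision-graph definition, $G_1$). Each edge $e=\{u,w\}$ of $G$ becomes a path $u - e^1 - w$. So $G'$ has exactly one edge for each incident pair $(v,e)$, namely the edge $\{v,e^1\}$. We will set $A(v,e)$ to be the color of the edge $\{v,e^1\}$ in a proper edge coloring of $G'$.

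First I check the translation in both directions.
\begin{itemize}
\item Vertex Distinctness asks that all pairs $(v,e)$ with a fixed $v$ get different colors. These pairs are exactly the edges of $G'$ meeting the original vertex $v$.
\item Edge Distinctness asks that $(u,e)$ and $(w,e)$ get different colors. These are exactly the two edges of $G'$ meeting the subdivision vertex $e^1$.
\end{itemize}
Every pair of edges of $G'$ sharing an endpoint falls into one of these two cases. So a proper edge coloring of $G'$ gives precisely a valid assignment, and conversely.

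Next I bound the number of colors needed. The graph $G'$ is bipartite: one side is $V$, the other is the set of subdivision vertices $\{e^1: e\in E\}$. Its maximum degree is $\max(d,2)=d$, because original vertices keep their degrees and subdivision vertices have degree $2\le d$. This is exactly where the hypothesis $d\ge 2$ is used. By K\H{o}nig's edge-coloring theorem, a bipartite graph of maximum degree $\Delta$ has a proper edge coloring with $\Delta$ colors. Hence $G'$ has one with colors in $[d]$, which gives the required $A(v,e)\in[d]$.

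For the polynomial-time claim, I would use the standard constructive proof of K\H{o}nig's theorem. Color the edges one at a time. For a new edge $\{x,y\}$, pick a color $a$ missing at $x$ and a color $b$ missing at $y$; if $a\ne b$, swap colors along the $a/b$ alternating path starting at $y$. Bipartiteness guarantees this path does not end at $x$. Alternatively, repeatedly extract perfect matchings after padding the graph to be $d$-regular. Either way the procedure runs in polynomial time.

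There is no real obstacle here. The only step that needs care is confirming that the two distinctness conditions are exactly the adjacency relations of edges in $G'$, and that the degree-$2$ subdivision vertices do not push the maximum degree above $d$. An alternative would be Brooks' theorem on the conflict graph of incidences, but that requires ruling out $K_{d+1}$ components and odd cycles, so the bipartite route is cleaner.
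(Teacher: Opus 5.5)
Your proposal is correct and is essentially the paper's proof: the paper builds the same auxiliary graph (the bipartite vertex--edge incidence graph on $V \cup E$, i.e., the $1$-subdivision of $G$), notes that its maximum degree is $d$, and takes a proper $d$-edge-coloring, computed in polynomial (near-linear) time via Cole, Ost, and Schirra. Your explicit remark that $d \ge 2$ is needed to cover the degree-$2$ subdivision vertices is a detail the paper leaves implicit.
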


\begin{proof}
We construct a graph $G' = (V' = V \cup E, E')$. The vertex set $V'$ of $G'$ consists of all vertices and all edges of $G$. The set of edges $E'$ consists of all pairs $\{e, v\}$ of edge $e \in E$ that is adjacent to $v \in V$.
Note that this is a bipartite graph with maximum degree $d$. As a result, there is an proper edge coloring $\chi: E' \to [d]$ with $d$ colors and this can be found in polynomial (in fact nearly linear) time~\cite{ColeOS01}. This gives us the desired assignment; namely, for every adjacent vertex $v$ and edge $e$ pair, we set $A(v, e) = \chi(\{v, e\})$.
\end{proof}

We are now ready to prove the main theorem for the embedding:

\begin{theorem} \label{thm:main-embedding-to-r3}
Let $G = (V, E)$ be any graph with $n$ vertices where each vertex has degree at most 3. Let $s = 6 \cdot L + 6$ where $L = 4 \lceil \log_2 n \rceil + 20$, and let $G_s$ denote the $s$-subdivision of $G$. Then it is possible to represent each vertex in $G_s$ as a unit-volume $3$-dimensional box such that two boxes intersect if and only if they are neighbors in $G_s$, and these boxes satisfy the $\gamma$-heavy-intersection property with $\gamma = 1/512$.
\end{theorem}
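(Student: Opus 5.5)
We construct the embedding in three stages: place the original vertices as unit cubes at well-separated points, route each subdivided edge as a chain of single-axis box paths (Lemma~\ref{lem:internal_boxes}), and verify that non-adjacent boxes are disjoint.

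\textbf{Placement and port assignment.} Apply Lemma~\ref{lem:edge-vertex-coloring} with $d=3$ (padding the degree bound to 3) to obtain $A(v,e)\in[3]$. Index the vertices $1,\dots,n$ and let $N=2^{\lceil\log_2 n\rceil}$. Vertex $v$ becomes the unit cube $B^{c_v}$ with $c_v=M\cdot(v,v,v)$ for a scale $M=2^{\Theta(1)}$. Thus all centers lie on the main diagonal with pairwise separation at least $M$ in every coordinate, and all coordinates are at most $MN$.

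\textbf{Routing one edge.} For $e=\{u,w\}$ with $u<w$, the path leaves $c_u$ along axis $A(u,e)$ and enters $c_w$ along axis $A(w,e)$; by Edge Distinctness these axes differ. The route is a rectilinear polyline with at most six axis-parallel segments whose corners are points $q_0=c_u,q_1,\dots,q_6=c_w$. At each corner we place a unit cube $B^{q_j}$, and each segment $q_{j-1}q_j$ is filled by Lemma~\ref{lem:internal_boxes} with $L$ boxes. This gives $5$ corner cubes and $6L$ path boxes, for $6L+5$ boxes in total. To reach exactly $s=6L+6$, one segment is split by an extra corner cube, or equivalently one segment uses $L+1$ boxes with a slightly longer length. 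The three coordinates of the intermediate corners encode the edge: they take a detour such as $q_1=c_u+ h_e\bone_{A(u,e)}$ with a height $h_e$ unique to $e$, then move in the remaining axes to a ``private'' line for $e$, then return to $c_w$ along $A(w,e)$. All segment lengths are chosen in $[100\cdot2^{L/2},2^L]$, which is possible since $2^{L/2}\ge 2^{10}n^2 \gg MN\cdot\mathrm{poly}$ while $2^L\ge 2^{20}n^4$. Vertex Distinctness ensures the three edges leaving $c_u$ use different axes, so they are disjoint near $B^{c_u}$ except for their first boxes touching it.

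\textbf{Verification.} Consecutive boxes intersect with volume $\ge\gamma=1/512$ by Property~1; corner cubes are themselves of the form $B^c$, so Lemma~\ref{lem:internal_boxes} applies on both sides of each corner. Within a segment, disjointness is Property~2. For boxes of different segments, the tubes $T_{c,c'}$ have cross-section radius $1/5$ around the segment lines and stop $1/4$ short of the endpoint cubes. It therefore suffices that any two segment lines, whether from the same or different edges, are at $L_\infty$-distance at least $1$ apart wherever they are not consecutive at a shared corner. We also need that a tube meets only its own endpoint cubes.

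\textbf{Main obstacle.} The crux is choosing the corner coordinates so that all $O(n)$ polylines are pairwise separated: parallel segments must lie on distinct lines differing by at least $1$ in some transverse coordinate, and perpendicular segments must not cross. I would assign each edge a distinct integer offset $\omega_e\in[1,3n]$ and use coordinates in a grid with spacing $\ge 2$. The route first moves along $A(u,e)$ to level $M N+2\omega_e$, which lies beyond all vertex cubes. It then moves in the remaining axes at that level and finally descends along $A(w,e)$. Since levels are distinct per edge and per axis, segments at different levels cannot meet; segments sharing an axis and emanating from different vertices lie on distinct lines because the vertices differ in all coordinates. Checking these finitely many segment-type pairs is routine but tedious. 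Segment lengths are padded by extending the outward leg, with a uniform extra length $\approx 2^L$, so that every segment length lands in the admissible window.
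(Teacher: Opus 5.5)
Your outline matches the paper's construction. It uses diagonal unit cubes, the axis assignment from Lemma~\ref{lem:edge-vertex-coloring}, a six-leg rectilinear route per edge with five corner cubes and $6L+1$ path boxes from Lemma~\ref{lem:internal_boxes}, and a private level per edge above all vertex coordinates, and it reduces disjointness to the thin tubes. Constant spacing $M$ plus a uniform additive padding of the levels (e.g.\ $h_e=2^{L-1}+MN+2\omega_e$) is a fine substitute for the paper's scale $K$ and highways $H_j=K(n+j)$.

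The problem is that the step you call ``routine but tedious'' is what the paper's proof actually consists of. Your proposal neither fixes a valid route nor gives a valid reason for disjointness. As described, after the first leg the $A(u,e)$-coordinate sits at the level while only the other two axes move, so the polyline cannot end with a single descent along $A(w,e)$ into $c_w$. Also, every one of the six legs, not just the outward one, must have length in $[100\cdot 2^{L/2},2^L]$. The route that works is the paper's symmetric one. Set $d_1=A(u,e)$ and $d_3=A(w,e)$. Raise $d_1,d_2,d_3$ in turn from $c_u$ to the common level $h_e$, reaching $(h_e,h_e,h_e)$, then lower $d_1,d_2,d_3$ in turn to $c_w$. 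Every leg then runs between a vertex value and $h_e$.

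Your claim ``segments at different levels cannot meet'' is false as a principle, because each leg spans the whole range from a vertex value up to its level. The dangerous configuration is a vertex $v_i$ with edges $e,e'$, $A(v_i,e)=\rho$, $A(v_i,e')=\rho'$, whose far endpoints both use the third axis $\nu$. The second legs of $e$ and $e'$ are then perpendicular and coplanar in $x_\nu=Mi$: they are $\{x_\rho=h_e,\ x_{\rho'}\in[Mi,h_e]\}$ and $\{x_{\rho'}=h_{e'},\ x_\rho\in[Mi,h_{e'}]\}$. They miss each other only because each edge turns at a single level in all axes and $h_e\neq h_{e'}$, so the lower leg stops short of the higher one. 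If the levels depended on the axis in inconsistent orders (your ``distinct per edge and per axis''), these legs could cross.

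The paper's Case~II shows this is the only danger, in three steps:
\begin{enumerate}
\item All corner coordinates are vertex values or edge levels with gaps $\ge 1$. So two overlapping tubes with four distinct endpoints force four corners to share a value in some axis $\nu$.
\item In a fixed axis, an edge level occurs in at most three corners and a vertex value in at most five (by degree $\le 3$ and Vertex Distinctness). This pins one segment as such a second leg at $v_i$.
\item The strict order of the two levels then gives the contradiction.
\end{enumerate}
That argument carries over verbatim to your coordinates once the symmetric route is fixed, but it has to be supplied.
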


\begin{proof}
Let $V = \{v_1, \dots, v_n\}$ and $E = \{e_1, \dots, e_m\}$. 
We map every vertex in $G_s$ to a unit-volume box in $\R^3$ as follows.

\paragraph{Original Vertex Placement.}
Let $K = 2 \cdot 10^6 n^3$. First, we assign each original vertex $v_i \in V$ a unit cube $B^{C_i}$ centered at $C_i = (Ki, Ki, Ki)$. 


\paragraph{Edge Routing.}
For each edge $e = e_j = (v_u, v_w) \in E$ with $u < w$, we will now describe the $s$ boxes we use to represent its subdivided vertices in $G_s$.

First, we assign a unique spatial ``highway'' coordinate $H_j = K(n + j)$. Let $d_1 = A(v_u, e) \in [3], d_3 = A(v_w, e) \in [3]$ be the assignments guaranteed from \Cref{lem:edge-vertex-coloring}. Note that, by \Cref{lem:edge-vertex-coloring}, $d_1, d_3$ are distinct. Finally, let $d_2$ be the remaining element in $[3] \setminus \{d_1, d_3\}$.

The subdivided path for $e_j$ is routed as an orthogonal polyline defined by the centers $C_u$, $C_w$, and 5 intermediate unit cubes centered points $P^{e, u, 1}, P^{e, u, 2}, P^e, P^{e, w, 2}, P^{e, w, 1}$, defined as follows.
\begin{itemize}
    \item $P^{e, u, 1} = C_u + (H_j - Ku)\bone_{d_1}$
    \item $P^{e, u, 2} = P^{e, u, 1} + (H_j - Ku)\bone_{d_2}$
    \item $P^e = P^{e, u, 2} + (H_j - Ku)\bone_{d_3} = (H_{j,u}, H_{j,u}, H_{j,u}) = P^{e, w, 2} + (H_j - Kw)\bone_{d_1}$
    \item $P^{e, w, 2} = P^{e, w, 1} + (H_j - Kw)\bone_{d_2}$
    \item $P^{e, w, 1} = C_w + (H_j - Kw)\bone_{d_3}$
\end{itemize}
This sequence guarantees exactly 6 straight, axis-parallel segments. We refer to these centers (including the centers $C_u, C_w$) as the ``corner centers''.

We used 5 boxes for the intermediate corners, leaving $s - 5 = 6L + 1$ boxes. We assign $L$ boxes to each of 5 out of the 6 segments, and assign $L + 1$ boxes to the remaining segment. By construction, the length $D$ of any segment is at least $K \geq 10^6 n^2$ and at most $3 K n \leq 6 \cdot 10^6 n^3$. This ensures that $100 \cdot 2^{L/2} \le D \le 2^L$ for all sufficiently large $n$. We can thus apply \Cref{lem:internal_boxes} to embed the $L$ or $L + 1$ intermediate boxes into each of the 6 segments. The boxes satisfy the internal adjacency of the $G_s$ path, overlapping only with their direct sequence neighbors with intersection volume $\ge 1/512$.

Finally, we need to argue that the boxes belonging to different segments or edges never intersect.
%
By Lemma \ref{lem:internal_boxes}, the intermediate boxes of a segment (of centers) spanning from $P$ to $\tP$ along axis $\omega$ are strictly confined to the tube $T(P, \tP)$ defined by $[\min(P_{\omega}, \tP_{\omega}) + 1/4, \max(P_{\omega}, \tP_{\omega}) - 1/4]$ on axis $\omega$, and $[P_{\nu} - 1/5, \tP_{\nu} + 1/5]$ on the two transverse axes $\nu \ne \omega$. Thus, it suffices to show that $T(P, \tP) \cap T(P', \tP') = \emptyset$ for any pair of distinct segments $(P, \tP)$ and $(P', \tP')$.

To prove this, we consider two cases based on whether the two segments share an endpoint:
\begin{itemize}
\item Case I: The two segments share an endpoint. In other words, $\{P, \tP\} \cap \{P', \tP'\}$ is non-empty. Let us assume w.l.o.g. that $P = P'$. Observe from our construction that the segments $(P, \tP)$ and $(P, \tP')$ must be along different axis $\omega \ne \omega'$. In the $\omega$ axis, we have that $T(P, \tP)$ is fully contained in $[\min(P_{\omega}, \tP_{\omega}) + 1/4, \max(P_{\omega}, \tP_{\omega}) - 1/4]$, while $T(P, \tP')$ is fully contained in $[P_{\omega} - 1/5, P_{\omega} + 1/5]$. These two intervals are disjoint since $|\tP_{\omega} - P_{\omega}| \geq 1$.
\item Case II: The two segments do not share an endpoint. Suppose for the sake of contradiction that $P, \tP, P', \tP'$ are all distinct but $T(P, \tP) \cap T(P', \tP') \ne \emptyset$. Let $\omega$ and $\omega'$ are the axes corresponding to the segments $(P, \tP)$ and $(P', \tP')$, respectively. Finally, let $\nu \in [3] \setminus \{\omega, \omega'\}$ denote the remaining axis. Notice that along $\nu$ the bounding boxes $T(P, \tP)$ and $T(P', \tP')$ are bounded in $[P_{\nu} - 1/5, P_{\nu} + 1/5]$ and $[P'_{\nu} - 1/5, P'_{\nu} + 1/5]$ respectively. Since both $P_{\nu}$ and $P'_{\nu}$ are integers, we must have $P_{\nu} = P'_{\nu}$ for the bounding boxes to overlap. In other words, we must have $P_{\nu} = \tP_{\nu} = P'_{\nu} = \tP'_{\nu}$, i.e. there are four distinct intermediate corner boxes with the same value on the $\nu$ axis.

Observe from our construction that this is impossible if $P_{\nu} = H_j$ for some $j \in [m]$; since in the sequence of our corner box construction for $e_j$, any coordinate is always changed after 3 steps and, thus, there are at most three distinct boxes with coordinate $\nu$ is equal to $H_j$. Hence, we must have $P_{\nu} = Ki$ for some $i \in [n]$.

Now, consider the vertex $v_i$. Note that this immediately implies that there are two edges $e, e'$ such that $A(v_i, e), A(v_i, e') \ne \nu$ (as otherwise there would only be at most three corner centers whose $\nu$-coordinate is equal to $Ki$). Thus, there are at most five corner centers with $\nu$-coordinates equal to $Ki$: $C_i, P^{e, u, 1}, P^{e, u, 2}, P^{e', u, 1}, P^{e', u, 2}$. Since the two segments do not share a corner, one of the segment must be $(P^{e, u, 1}, P^{e, u, 2})$ or $(P^{e', u, 1}, P^{e', u, 2})$. Assume w.l.o.g. that it is the former case, i.e. that $P = P^{e, u, 1}, \tP = P^{e, u, 2}$.

Let $e = e_j$ and $e' = e_{j'}$ for some $j, j' \in [m]$. Consider the coordinate $\rho = A(v_i, e)$. We have that $P_\rho = \tP_\rho = H_j$. Thus, the bounding box $T(P, \tP)$ is entirely contained in $[H_j - 1/5, H_j + 1/5]$ in the $\rho$ coordinate. For $T(P', \tP')$ to intersect $T(P, \tP)$, we must have that $\{P'_{\rho}, \tP'_{\rho}\} = \{Ki, H_{j'}\}$ and $j' > j$. Now, consider the remaining coordinate $\rho'$ (apart from $\rho, \nu$). We must have $P'_{\rho'} = \tP'_{\rho'}$ is either $Ki$ or $H_{j'}$. This means that, on the $\rho'$ axis, $T(P', \tP')$ is either entirely contained in $[Ki - 1/5, Ki + 1/5]$ or $[H_{j'} - 1/5, H_{j'} + 1/5]$. Both of these are disjoint from $[Ki + 1/4, H_j - 1/4]$. Meanwhile, we have $\{P_{\rho'}, \tP_{\rho'}\} = \{Ki, H_j\}$, meaning that, on the $\rho'$ axis, $T(P, \tP)$ is entirely contained in $[Ki + 1/4, H_j - 1/4]$. Thus, $T(P, \tP)$ and $T(P', \tP')$ are disjoint, a contradiction. \qedhere
\end{itemize}
\end{proof}

With the above embedding in mind, we are nearly done. We simply recall the hardness of approximation of Max-IS on degree-3 graphs and how the gap transfers on its $s$-division below. Here we use $\optis(G)$ to denote the size of the maximum independent set of $G$.

\begin{theorem}[\cite{ChlebikC03}] \label{thm:hardness-is-deg3}
There exists a constant $\eps > 0$ such that the following holds. Given a graph $G$ with maximum degree 3 and an integer $k$, it is NP-hard to distinguish between:
\begin{itemize}
\item (YES) $\optis(G) \geq k$.
\item (NO) $\optis(G) < (1 - \eps) k$.
\end{itemize}
\end{theorem}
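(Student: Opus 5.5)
This statement is the APX-hardness of Max-IS on graphs of maximum degree 3 due to Chleb{\'{\i}}k and Chleb{\'{\i}}kov{\'{a}}~\cite{ChlebikC03}, and in the paper it is simply invoked. If I had to justify it, I would not rederive the PCP machinery. Instead I would assume the classical fact (Papadimitriou--Yannakakis, via MAX-3SAT with a bounded number of occurrences per variable) that Max-IS is APX-hard on graphs of maximum degree $\Delta$ for some constant $\Delta$. The plan is then a gap-preserving degree reduction from degree $\Delta$ to degree $3$.

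\textbf{Gadget.} Replace each vertex $v$ of degree $D_v$ by a path
\[
v_1 - u_1 - v_2 - u_2 - \cdots - u_{D_v-1} - v_{D_v}
\]
on $2D_v-1$ vertices. Then attach the $i$-th original edge of $v$ to the endpoint $v_i$. In the new graph $G'$, each $v_i$ has degree at most $2+1=3$ and each $u_i$ has degree $2$. The construction is clearly polynomial.

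\textbf{Consistency step.} This is the only step needing care. I will show $\optis(G')=\optis(G)+\sum_v (D_v-1)$.
\begin{itemize}
\item[($\ge$)] Given an independent set $I$ of $G$, take all $v_i$ for $v\in I$ and all $u_i$ for $v\notin I$. No edge between two gadgets joins two chosen endpoints, since two chosen $v_i$'s come from two vertices of $I$, which are nonadjacent in $G$. So this set is independent and has size $|I|+\sum_v(D_v-1)$.
\item[($\le$)] Take any independent set $J$ of $G'$. The path has a unique independent set of size $D_v$, namely all the $v_i$'s, and every other independent set in it has size at most $D_v-1$. So we may replace $J$ inside each gadget by either the full $\{v_i\}$ set or the full $\{u_i\}$ set, without decreasing $|J|$. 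We use the former only when $J$ already contains all the $v_i$'s. The resulting ``full'' gadgets are pairwise nonadjacent, so they correspond to an independent set of $G$. Hence $|J|\le \optis(G)+\sum_v(D_v-1)$.
\end{itemize}

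\textbf{Gap preservation.} Set $k'=k+\sum_v(D_v-1)=k+2|E|-n$. In the YES case, $\optis(G')\ge k'$. In the NO case, $\optis(G')<k'-\eps k$. Since $\optis(G)\ge n/(\Delta+1)$, we may assume $k=\Omega(n)$. Also $|E|\le \Delta n/2$, so $k'=O(n)=O(k)$. Therefore $\eps k\ge \eps' k'$ for a constant $\eps'>0$, which gives the claimed NP-hard gap on maximum-degree-$3$ graphs.

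The main obstacle is really the starting point: getting APX-hardness of bounded-degree Max-IS at all. That step needs the PCP theorem, and I would cite it rather than prove it. The degree-reduction and bookkeeping steps above are routine.
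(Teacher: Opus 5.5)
Your argument is correct, but it takes a different route from the paper. The paper gives no proof of this statement; it imports it as a black box from Chleb{\'{\i}}k and Chleb{\'{\i}}kov{\'{a}}~\cite{ChlebikC03}, who prove hardness for degree-$3$ graphs directly, with explicit constants, via their own dedicated reductions. You instead start from the weaker classical fact that Max-IS is APX-hard for some constant degree bound $\Delta$ (Papadimitriou--Yannakakis together with the PCP theorem), and then apply a gap-preserving vertex-splitting gadget.

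Your identity $\optis(G')=\optis(G)+\sum_v(D_v-1)$ holds. The key fact is that the path on $2D_v-1$ vertices has a \emph{unique} independent set of size $D_v$, and this uniqueness is exactly what makes your exchange argument in the $(\le)$ direction sound. The gap then transfers with $\eps'=\eps/\Delta^2$, since $k'<\Delta^2 k$ once $k>n/(\Delta+1)$.

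Two small bookkeeping points need fixing:
\begin{itemize}
\item Isolated vertices ($D_v=0$) must be kept as single vertices. The offset is therefore $\sum_v\max(D_v-1,0)$, not $2|E|-n$.
\item The assumption $k=\Omega(n)$ needs a justification. If $k\le n/(\Delta+1)$, the instance cannot be a NO instance because $\optis(G)\ge n/(\Delta+1)$. So it may be mapped to a fixed YES instance of the target problem.
\end{itemize}

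Each route has its advantage. Yours is a self-contained, elementary degree reduction; it is the vertex analogue of the edge-subdivision identity the paper uses later. The citation gives a concrete and much better constant. The paper only needs some fixed $\eps>0$ in the proof of Theorem~\ref{thm:main-low-dim-hardness}, so either suffices.
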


\begin{theorem}[\cite{ChlebikC05}] \label{thm:subdivision-opt}
Let $G=(V,E)$ be any graph, let $s$ be any nonnegative even integer, and let $G_s$ be the $s$-subdivision of $G$. Then, $\optis(G_s) = \optis(G) + |E| \cdot \frac{s}{2}$.
\end{theorem}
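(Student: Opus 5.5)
We prove $\optis(G_s) = \optis(G) + |E|\cdot\frac{s}{2}$ by establishing the two inequalities separately. In both directions we work edge by edge on the internally disjoint paths $u, e^1,\dots,e^s, v$ that replace the edges $e=\{u,v\}$. Each such path has $s+2$ vertices, with $s$ even.

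\emph{Lower bound.} Let $I$ be a maximum independent set of $G$. We extend it edge by edge.
\begin{itemize}
\item For an edge $e=\{u,v\}$ with $u\notin I$ (fix an orientation if neither endpoint is in $I$), we add $e^1, e^3, \dots, e^{s-1}$. These are $s/2$ pairwise nonadjacent internal vertices. The only original vertex adjacent to $e^1$ is $u \notin I$. The vertex $e^{s-1}$ is not adjacent to $v$, since $e^s$ lies between them.
\item If $u\in I$, then $v\notin I$ because $I$ is independent. We add $e^2,e^4,\dots,e^s$ instead, which is symmetric to the previous case.
\end{itemize}
Internal vertices of distinct paths are never adjacent, so the result is independent in $G_s$. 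Its size is $\optis(G)+|E|s/2$.

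\emph{Upper bound.} Let $J$ be a maximum independent set of $G_s$. Set $J_0 = J\cap V$ and, for each edge $e$, let $J_e$ be the part of $J$ among $e^1,\dots,e^s$.
\begin{itemize}
\item Since the internal vertices form a path on $s$ vertices, $|J_e|\le s/2$.
\item Call $e$ \emph{bad} if both endpoints lie in $J_0$. Then $e^1$ and $e^s$ are both excluded, so $J_e$ lies in a path on $s-2$ vertices and $|J_e|\le s/2-1$.
\end{itemize}
Let $F$ be the set of bad edges, so that $|J|\le |J_0| + |E|s/2 - |F|$. The bad edges are exactly the edges of the subgraph $G[J_0]$. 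Choose one endpoint of each bad edge and delete it from $J_0$. This removes at most $|F|$ vertices and leaves an independent set of $G$, so $\optis(G)\ge |J_0|-|F|$. Combining the two bounds gives $|J|\le \optis(G)+|E|s/2$.

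The only delicate step is the bookkeeping in the upper bound: we must charge exactly one lost internal vertex to each bad edge. This works because $s$ is even, which makes the path on $s-2$ vertices have independence number exactly $s/2-1$. The case $s=0$ needs no argument, since then $G_s = G$.
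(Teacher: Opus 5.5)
Your proof is correct. Extending a maximum independent set of $G$ along each subdivided path gives $\ge$, and charging exactly one lost internal vertex to each edge of $G[J\cap V]$ gives $\le$; the latter uses that a path on $s-2$ vertices has independence number $s/2-1$ when $s$ is even. The paper does not prove this statement but imports it from Chleb{\'{\i}}k and Chleb{\'{\i}}kov{\'{a}} \cite{ChlebikC05}, and your argument is the standard self-contained proof behind such citations (equivalently, the classical fact that subdividing one edge twice raises the independence number by exactly one, applied $s/2$ times per edge).
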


We can now prove \Cref{thm:main-low-dim-hardness}.

\begin{proof}[Proof of \Cref{thm:main-low-dim-hardness}]
Let $\delta = \eps/2000$ where $\eps$ is the constant from \Cref{thm:hardness-is-deg3}. 
For any input $(G = (V, E), k)$ from \Cref{thm:hardness-is-deg3}, apply \Cref{thm:main-embedding-to-r3} to produce its $s$-subdivision $G_s = (V_s, E_s)$ where $s = 6 \cdot \Paren{4 \lceil \log_2 |V| \rceil + 20} + 6$ together with its box $\gamma$-heavy-intersection embedding $(B_s)_{G_s}$ such that its intersection graph is exactly $G_s$. Finally, let $k' = k + s \cdot |E|/2$. It is obvious that this reduction runs in polynomial time. We next prove its soundness and completeness.

\paragraph{(Completeness)} If $\optis(G) \geq k$, \Cref{thm:subdivision-opt} implies that $\optis(G_s) \geq k'$ as desired.

\paragraph{(Soundness)} Suppose that $\optis(G) < (1 - \eps) k$. Then, \Cref{thm:subdivision-opt} yields $\optis(G_s) < k' - \eps \cdot k$. Notice that $k \geq |V|/4$ because $G$ has maximum degree 3. Furthermore, $n = |V_s|$ is exactly equal to $|V| + |E| \cdot s \leq 1000 |V| \log |V|$. Thus, we have
\begin{align*}
\optis(G_s) < k'\Paren{1 - \eps \cdot \frac{k}{k'}} \leq k' \Paren{1 - \eps \cdot \frac{|V|/3}{1000 |V| \log |V|}} \leq k' \Paren{1 - \frac{\delta}{\log n}}. &\qedhere
\end{align*}
\end{proof}

\subsection{Max Coverage}

We next consider the original (discrete) version of the max coverage problem. As discussed earlier, any algorithm for max coverage can be used for max-volume selection with the approximation ratio nearly preserved. As such all the hardness results in the previous subsection immediately translates to the max coverage problem as well. Below, we will show additional hardness results for max coverage.

\subsubsection{Hardness in High Dimension}\label{sec:hard:high}

In the low dimensional case, we rule out the existence of PTAS for many classes of objects:

\begin{theorem} \label{thm:apx-hardness-max-cov}
Assuming $\mbox{P} \ne \mbox{NP}$, there is no PTAS for
the max coverage problem for each of the following classes of objects:
\begin{itemize}
    \item Axis-aligned rectangles in $\mathbb{R}^2$ even when all rectangles have lower-left corner in $[-1,-1+\epsilon]\times[-1,-1+\epsilon]$ and upper-right corner in $[1,1+\epsilon]\times[1,1+\epsilon]$ for an arbitrarily small $\epsilon>0$. 
    \item Axis-aligned ellipses in $\mathbb{R}^2$, even when all ellipses have centers in $[0,\epsilon]\times[0,\epsilon]$ and major and minor axes of length in $[1,1+\epsilon]$. 
    \item Axis-aligned slabs in $\mathbb{R}^2$, each of the form $[a_i,b_i]\times[-\infty,\infty]$ or $[-\infty,\infty]\times[a_i,b_i]$. 
    \item Axis-aligned rectangles in $\mathbb{R}^2$, even when the boundaries of each pair of rectangles intersect exactly zero times or four times. 
    \item Downward shadows of line segments in $\mathbb{R}^2$. 
    \item Downward shadows of (graphs of) univariate cubic functions in $\mathbb{R}^2$. 
    \item Unit balls in $\mathbb{R}^3$, even when all the balls contain a common point. 
    \item Axis-aligned cubes in $\mathbb{R}^3$, even when all the cubes contain a common point and are of similar size. 
    \item Half-spaces in $\mathbb{R}^4$. 
    \item Fat semi-infinite wedges in $\mathbb{R}^2$, each of which has an opening angle in $[\pi-\epsilon,\pi)$ and has its vertex within $\epsilon$ of a common point. 
\end{itemize}
\end{theorem}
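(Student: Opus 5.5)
The plan is to adapt Chan and Grant's APX-hardness reductions for geometric set cover~\cite{ChanG14} to max coverage. Their reductions start from vertex cover on graphs of bounded degree (APX-hard~\cite{ChlebikC03}), and I would use the same source problem. Given a graph $G=(V,E)$ with maximum degree 3, Chan and Grant build, for each listed class, a set of objects $S$ and points $P$ with the following structure. There is one object per vertex, or per vertex together with a few auxiliary ``gadget'' objects. There is one point per edge, contained exactly in the objects of its two endpoints, plus auxiliary points. The result is that covering $P$ with $t$ objects corresponds to a vertex cover of size $t-c(G)$, where $c(G)$ is a fixed linear-size offset.

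The first step is to check, class by class, that each construction is \emph{locally exact}: the only incidences between edge-points and vertex-objects are the intended ones, and the auxiliary structure is forced. I would then reformulate the target problem as max coverage with budget $k$. Since $G$ has maximum degree 3, covering $k$ vertices of $G$ covers at most $3k$ edges. The relevant hard problem is \emph{max vertex coverage} (partial vertex cover) on bounded-degree graphs: choose $k$ vertices to maximize the number of covered edges. This problem is APX-hard; for instance, on cubic graphs with $k=\tau(G)$ one can reduce from vertex cover, because the uncovered edges measure the deficiency. In the instance, each edge-point is covered iff one of its endpoint objects is chosen, so max coverage value equals edges covered plus auxiliary points covered.

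The main obstacle is the auxiliary points and objects. In set cover they must all be covered, but in max coverage a solution may trade them off against edge-points. I would handle this by giving each auxiliary point a large multiplicity $M$, realized by duplicating points nearby. This forces any near-optimal solution to select the gadget objects that cover them, which pins down a fixed number of slots in $k$. Equivalently, I would set $k=k_G+(\text{number of forced gadget objects})$ and argue with a simple exchange: any solution can be modified, without decreasing coverage, to include all forced objects.

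The multiplicity has to stay constant, $M=O(1)$, so that the gap remains a constant fraction. This works because $|E|=\Theta(|V|)$ and there are $O(|V|)$ gadget points. A constant-factor gap in the number of covered edges then yields a $1-\Omega(1)$ gap in the total coverage. The special restrictions in the statement (a common point, similar sizes, the $\epsilon$-perturbations, and so on) are inherited directly from Chan and Grant's embeddings. For the classes where their embedding uses no gadget objects, such as slabs and unit balls, the reduction is immediate.
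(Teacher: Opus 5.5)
There is a genuine gap, and it sits at the step you flag as the main obstacle. You treat Chan and Grant's instances as vertex cover on a max-degree-3 graph $G$, realized with one object per vertex plus auxiliary gadget points. You then neutralize the gadget points with multiplicity $M$ and an exchange argument: any solution can be modified to contain all ``forced'' gadget objects, which fixes how many of the $k$ slots the gadgets consume.

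Their constructions do not have that shape. The hardness comes from a highly structured vertex cover problem (their SPECIAL-3SC). In it, each gadget's private points can be covered either by a small family touching only those points, or by a larger family that also covers the shared ``edge'' points. That choice is exactly where the vertex-cover decision is encoded. So no gadget object is forced, the number of slots the gadgets use varies with the solution, and the exchange argument as stated fails. Relatedly, hardness of max vertex coverage on general cubic graphs is not the right starting point. You would need hardness for the specific instances each class actually realizes.

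The idea you need is already in your second paragraph, but you apply it at the wrong level. Apply the deficiency argument to the entire realized instance rather than to $G$, with no special treatment of auxiliary points. Chan and Grant show that set cover is APX-hard for every listed class even when each object contains at most $3$ points. Let $k$ be the set-cover optimum (try all values of $k$). A $(1-\delta)$-approximate max coverage solution then leaves at most $\delta|P|$ points uncovered. Adding one object per uncovered point gives a cover of size at most $k+\delta|P|\le(1+3\delta)k$, because $|P|\le 3k$. Hence a PTAS for max coverage would give a PTAS for bounded-set-size set cover in the same class, a contradiction.

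This is the paper's proof (Lemma~\ref{lem:maxcov-vs-setcov}). It uses Chan and Grant's theorem as a black box, so it needs no multiplicities, exchange arguments, or class-by-class re-verification. All the special restrictions in the bullets carry over automatically.
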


Our results follow from a rather direct implication from the geometric set cover problem, where we are given a universe $\cU$ of points and a collection $S \subseteq \cS$ of objects. The goal is to select $I \subseteq S$ that whose union contains all the points. The relation between the Set Cover problem and the Max Cover problem can be stated as follows.

\begin{lemma} \label{lem:maxcov-vs-setcov}
Suppose that there exists a PTAS for max coverage for a class $\cS$. Then, there is also a PTAS for geometric set cover for the same class $\cS$ for all set system $(\cU, S)$ where each set in $S$ has size at most $O(1)$.
\end{lemma}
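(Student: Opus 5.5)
The plan is to use the max coverage PTAS as a subroutine inside a search over $k$. Suppose every set in $S$ has size at most $\Delta=O(1)$, and let $\OPT_{SC}$ denote the minimum set cover size, with $n:=|\cU|$ points to cover. The key observation is that $\OPT_{SC}\ge n/\Delta$, so covering $n$ points is a ``linear-size'' target: a deficit of $\eps n$ uncovered points can be repaired with at most $\eps n$ extra sets, which costs at most $\eps\Delta\,\OPT_{SC}=O(\eps)\OPT_{SC}$.

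The algorithm proceeds in three steps. First, for each $k=1,\ldots,|S|$, run the max coverage PTAS with accuracy parameter $\eps'$ on $(\cU,S,k)$; choose $\eps'=\eps/\Delta$, which is still a constant. Second, let $k$ be the smallest value for which the returned solution covers at least $(1-\eps')n$ points. Third, complete this solution to a set cover by adding, for each uncovered point, one arbitrary set of $S$ containing it. We may assume every point lies in some set, since otherwise no cover exists.

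For the analysis, take $k=\OPT_{SC}$. The optimal max coverage value for this $k$ is $n$, so the PTAS returns $k'$ sets covering at least $(1-\eps')n$ points. Since our $k$ is chosen minimal, it satisfies $k\le \OPT_{SC}$. At most $\eps' n$ points remain uncovered, so we add at most $\eps' n$ sets. The total number of sets is therefore at most
\[ \OPT_{SC}+\eps' n\ \le\ \OPT_{SC}+\eps'\Delta\,\OPT_{SC}\ =\ (1+\eps)\OPT_{SC}. \]
The running time is $|S|$ calls to the PTAS plus linear-time patching, which is polynomial for each fixed $\eps$.

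The only step that needs care is the lower bound $\OPT_{SC}\ge n/\Delta$. It follows because every set covers at most $\Delta$ points of $\cU$. Note also that the size-$O(1)$ restriction is what lets us translate additive point loss into multiplicative set loss; without it the reduction fails. I do not expect any real obstacle beyond this bookkeeping. Its main role is to feed into the APX-hardness results for bounded-size geometric set cover of Chan and Grant~\cite{ChanG14}, whose instances have bounded set sizes.
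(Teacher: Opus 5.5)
Your proposal is correct and is essentially the paper's proof: run the max coverage PTAS with accuracy $\eps/c$ at $k$ equal to the optimal cover size, cover each of the at most $(\eps/c)|\cU|$ remaining points with one extra set, and use $|\cU|\le c\cdot k$ to bound the total by $(1+\eps)k$. Your rule of taking the smallest $k$ whose returned solution covers at least a $(1-\eps/c)$ fraction of the points is just a concrete way of carrying out the paper's footnote about looping over all values of $k$.
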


\begin{proof}
Let $\alg$ be the PTAS for max coverage for a class $\cS$. We construct a PTAS for geometric set cover as follows. Suppose that we are given a constant $\eps > 0$ and a set system $(\cU, S)$ where each set in $S$ has size at most $c$, and assume w.l.o.g.\footnote{Otherwise, we can simply loop through all possible values of $k$ and run the algorithm on each value of $k$.} that we know the optimum set cover size $k$. Then, we run $\alg$ on $(\cU, S)$ to find $k$ sets that cover at least $(1 - \delta) |\cU|$ elements for $\delta = \eps / c$. For the remaining elements $\delta \cdot |\cU|$ elements, we simply pick any set to cover each of them. In total, the number of sets picked is at most
\begin{align*}
k + \delta \cdot |\cU| \leq k + \delta \cdot c \cdot k = (1 + \eps) k,
\end{align*}
where the first inequality follows from the fact that each set has size at most $c$. Thus, this yields a $(1 + \eps)$-approximation as desired.
\end{proof}

Chan and Grant~\cite{ChanG14} proved the following hardness of approximation for geometric set cover with bounded set size.

\begin{theorem}[\cite{ChanG14}]
Geometric set cover is APX-hard for each of the classes of objects from \Cref{thm:apx-hardness-max-cov}. Furthermore, this holds even for set systems $(\cU, S)$ where each set has size at most 3.
\end{theorem}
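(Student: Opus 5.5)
The statement immediately above is the Chan--Grant result, cited from \cite{ChanG14}: geometric set cover is APX-hard for every class in \Cref{thm:apx-hardness-max-cov}, even when each set has size at most 3. Since it is imported, the plan is to check that the cited reductions really give this strengthened form, rather than to reprove APX-hardness from scratch.

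The starting point is the source problem Chan and Grant reduce from. In essentially all of their constructions this is vertex cover on graphs of maximum degree~3, or a closely related bounded-occurrence problem such as vertex cover on subdivided cubic graphs, which is APX-hard~\cite{ChlebikC03}. We view vertex cover as a set cover instance in which the universe is the edge set and each vertex $v$ corresponds to the set of edges incident to $v$. In this view every set has size at most $\deg(v)\le 3$.

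The main step is to verify, for each geometric class, that the Chan--Grant embedding preserves this set system. That is, we need a point set $\cU$ and objects $S$ from the class such that the sets $s\cap\cU$ are exactly the incidence sets above, possibly plus a gadget that keeps the correspondence approximation-preserving (an L-reduction). Typically each edge $\{u,v\}$ becomes a point lying in exactly the two objects representing $u$ and $v$. Subdivision gadgets, which Chan and Grant use to make the graph realizable (paths of length $2t$ replacing edges), add new objects, but each such object still contains only a constant number of points, namely its two or three incident path edges.

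So for each item we need to confirm two things:
\begin{enumerate}
\item the points are placed so that no object picks up any extra points beyond its incident edges;
\item the subdivision changes the optimum by an additive $t|E|$, which is linear in the original optimum because the degree is bounded.
\end{enumerate}
Once these hold, APX-hardness carries over with set size at most 3.

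The hardest part is item 1 for the constrained families, such as rectangles with corners near two fixed points or cubes containing a common point. There, object membership is determined by delicate perturbations, and it is easy to accidentally cover extra points. For these cases I would first check that every point is placed inside exactly its intended objects, so that $|s\cap\cU|$ equals the degree of the corresponding vertex in the (subdivided) cubic graph.

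Combining this theorem with \Cref{lem:maxcov-vs-setcov} then yields \Cref{thm:apx-hardness-max-cov}.
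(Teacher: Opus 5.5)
Your proposal is correct and matches the paper, which gives no proof of its own and simply imports the result from Chan and Grant. As you indicate, the size bound comes from their source problem (SPECIAL-3SC): every element lies in exactly two sets and every set has two or three elements, so it is vertex cover on a maximum-degree-$3$ graph (an even subdivision of a cubic graph), and each listed class realizes it exactly. One small correction is that this subdivision is built into Chan and Grant's fixed five-set gadget per vertex rather than being a separate step added for realizability; this does not affect your argument.
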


The above theorem together with \Cref{lem:maxcov-vs-setcov} immediately yields \Cref{thm:apx-hardness-max-cov}.  
This implies APX-hardness for most of the types of objects listed in the third and fourth row of Tables~\ref{tbl:set:cov}--\ref{tbl:max:cov}.
The case of similar-size fat triangles can be handled
by an easy modification of the reduction  from \cite{ChanG14} for downward shadows of 2D line segments.
(For the cases of 4D unit hypercubes and 4D boxes with the origin as a vertex, there is an easy reduction from the case of 2D rectangles.)

\subsubsection{Hardness in Low Dimension}\label{sec:hard:low}

In high dimension, we prove a hardness that is arbitrarily close to the $(1 - 1/e)$ upper bound, even when the objects are boxes. Here, the number of dimensions required is a constant that depends on how close we wish the hardness to be to the optimal.

\begin{theorem} \label{thm:tight-hardness-max-cov}
For every $\eps > 0$, there is a sufficiently large $d > 0$ such that it is NP-hard to approximate max coverage to within a factor of $(1 - 1/e + \eps)$, even when the objects are unit boxes in $\R^d$.
\end{theorem}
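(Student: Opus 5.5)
We would reduce from Feige's hard instances of max coverage, in the form where the hardness comes from a label-cover/partition-system gadget with bounded parameters. The key observation is that for any fixed $\eps>0$, Feige's reduction (or the Khot-style variant) can be made to produce set systems with a special product structure. The universe is $\bigcup_{\text{edges}} [m]^{\ell}$, with $m,\ell$ constants depending only on $\eps$. Each set is a union, over the edges incident to a label-cover vertex, of "coordinate cylinders" $\{x\in[m]^\ell : x_i = a\}$ inside the partition-system gadget. So the plan is to first fix a version of the $(1-1/e+\eps)$-hardness in which the partition system is the standard cube $[m]^\ell$, the gadget sets are the fibers of the coordinate projections, and each point lies in a bounded number of sets. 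This is the step that forces the dimension to depend on $\eps$: the error of Feige's analysis decays with $\ell$ and $m$.

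Next we would embed this into boxes. In the gadget, a fiber $\{x : x_i=a\}$ of $[m]^\ell$ is realized exactly by an axis-aligned box in $\R^\ell$. The box is a unit-width slab in coordinate $i$ around $a$ and spans the full range in all other coordinates, with points placed at the integer grid. The difficulty is that one object must cover many gadgets simultaneously, one per incident label-cover edge, and must avoid all the others. To handle this we would use extra coordinates that encode the identity of the gadget. We assign each gadget (edge) a code in a constant number of additional dimensions, so that a box can span exactly the gadgets incident to its vertex.

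This is where the main obstacle lies. A box restricted to the "identity" coordinates is a product of intervals, so the set of gadgets it touches must be a combinatorial rectangle. To make this work, we would start from a label cover instance with constant degree. This comes from a regularized PCP, since parallel repetition with $\ell$ rounds gives degree depending only on $\eps$. We then properly color the edges incident to each vertex so that each vertex has at most $D=O_\eps(1)$ incident edges. The box for vertex $v$ with label $a$ then uses, for each color class $c\in[D]$, its own block of $\ell$ coordinates. The gadget for edge $e$ of color $c$ is placed in block $c$. In the other blocks, its points sit at a "neutral" position that every box for an endpoint of $e$ covers. Identity separation is achieved with a further constant number of coordinates: we encode vertices via an $O(1)$-dimensional layout in which $v$'s box contains exactly the points of edges incident to $v$. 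One way to do this is to give each vertex two coordinates, so $O(1)$ coordinates per color class suffice after a bounded-degree coloring of the vertex-conflict graph. We would try this first, accept a dimension blowup of $O_\eps(1)$, and verify that no box accidentally covers points of non-incident gadgets.

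Finally we would make the boxes unit: rescale so that all boxes have equal side lengths, padding the unused ranges. This is possible because each box spans either a width-one slab or the full (bounded) range in each coordinate, and points outside the intended region are placed so they are not covered. Completeness and soundness then transfer verbatim from Feige's analysis, because the covered-point counts in the geometric instance equal those in the abstract instance.
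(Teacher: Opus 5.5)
There is a genuine gap in the geometric embedding, which is the real content of the theorem. Your reduction source is fine: the paper also starts from Feige's hardness, and needs only that each set has size at most $c$ and each element lies in at most $f$ sets. But your embedding fails at three points.

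First, the ``neutral position'' cannot exist with your encoding. Which boxes must contain the neutral points of a gadget $e'$ in another block $c$? All boxes of all endpoints of $e'$, for all their labels. Different endpoints select parts with different indices $j\in[m]$, and as the labels sweep over everything, each coordinate of block $c$ is constrained to slabs around every $j\in[m]$. With one real coordinate per cube coordinate, this is impossible for $m\ge 3$, whatever widths you choose: there is no value $\nu$ such that, for each $j\in[m]$, some interval contains $\nu$ and $j$ but no other element of $[m]$.

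Second, your identity layer, which you leave at ``we would try this first'', hits the same obstruction. A box must single out one edge id within a color class while still containing whatever value all other points carry in that coordinate.

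Third, the unit-box step is false. Within a single coordinate, some boxes have a width-one side and others span the full range $[1,m]$. No coordinatewise rescaling equalizes these, and shrinking the long sides would lose points that must be covered.

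The missing idea is the paper's encoding of ``equals $i$'' by two one-sided comparisons in two coordinates, with $0$ as a neutral value that every box contains. The paper colors the elements with $cf$ colors so that no set contains two elements of the same color. This can be done greedily, since each element shares a set with at most $f(c-1)<cf$ others. Each color gets two coordinates.
\begin{itemize}
\item Element $i$ becomes the point that is $0$ everywhere except for the values $\frac12(1+\frac im)$ and $\frac12(1-\frac im)$ in its color's two coordinates.
\item Set $s$ becomes the unit cube whose center is $\frac12\cdot\frac im$ and $-\frac12\cdot\frac im$ in the two coordinates of the color of each $i\in s$, and $0$ elsewhere.
\end{itemize}
In a color's first coordinate the cube admits exactly the same-colored elements $i'\le i$, and in the second exactly those with $i'\ge i$. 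If $s$ has no element of some color, the interval $[-\frac12,\frac12]$ excludes every point of that color, because their first coordinate exceeds $\frac12$. Every interval contains $0$.

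Hence $p^{i'}\in B^s$ if and only if $i'\in s$. This works for an arbitrary set system with bounded $c$ and $f$, in dimension $2cf$. So you need no label-cover product structure, no coloring of gadgets, and no identity layer. To keep your architecture you would have to replace both your slabs and your identity coordinates by exactly this two-coordinate trick, at which point the gadget structure becomes superfluous.
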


We use the following classic result of Feige on the hardness of the max coverage problem\footnote{We remark that, while Feige did not explicitly state the bounded degree and frequency properties, it is simple to check that this hold in the construction.}: 

\begin{theorem}[\cite{Feige98}] \label{thm:feige-bounded}
For every $\eps > 0$, there are sufficiently large $c, f > 0$ such that it is NP-hard to approximate max coverage to within a factor of $(1 - 1/e + \eps)$, even when each set has size at most $c$ and each element appears in at most $f$ sets.
\end{theorem}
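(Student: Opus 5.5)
The statement is quoted from Feige, so the plan is to re-open his reduction rather than prove something new. For a fixed $\eps>0$, every parameter of that reduction is a constant. Bounded set size and bounded element frequency then follow by counting. This is exactly what the footnote asserts: nothing about the soundness analysis needs to change.

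\textbf{Step 1: fix the construction.} Recall Feige's reduction. Start from gap 3SAT-5, where each clause has 3 literals and each variable appears in exactly 5 clauses. Apply parallel repetition $u$ times to get a $k$-prover proof system. A random string $r$ chooses $u$ clauses, and from each clause a distinguished variable. Prover $i$'s question $q_i(r)$ is obtained by projecting, according to a code word determining which coordinates it receives as clauses and which as variables. Answers are assignments to at most $3u$ variables, so there are at most $2^{3u}$ of them.

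Fix a partition system on a ground set $B$ with $|B|=m$ points. It has $2^{3u}$ partitions, one indexed by each answer $a$, each into $k$ parts. The universe is $R\times B$, where $R$ is the set of random strings. The set $S(q,a,i)$ is the union over all $r$ with $q_i(r)=q$ of $\{r\}\times B_{a,i}$, where $B_{a,i}$ is the $i$-th part of the partition indexed by the appropriately projected answer.

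Feige's soundness and completeness require $u$, $k$ and $m$ to depend only on $\eps$. Therefore I would first confirm from his parameter choices that they are constants once $\eps$ is fixed. This holds because the $(1-1/e+\eps)$ gap, unlike the $\ln n$ set-cover gap, needs only constant $k$ and $u$.

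\textbf{Step 2: bound the frequency $f$.} An element $(r,b)$ lies in $S(q,a,i)$ only when $q=q_i(r)$. So for each prover $i$ there are at most $2^{3u}$ candidate sets, one per answer $a$. Hence
\[
f\le k\,2^{3u},
\]
which is a constant.

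\textbf{Step 3: bound the set size $c$.} Each set satisfies $|S(q,a,i)|\le |B|\cdot N_i(q)$, where $N_i(q)$ is the number of random strings $r$ with $q_i(r)=q$. A question fixes, in each of the $u$ coordinates, either a clause or a variable.
\begin{itemize}
\item If a coordinate gives a clause, the consistent choices of $r$ in that coordinate are the at most 3 distinguished variables of that clause.
\item If a coordinate gives a variable, the consistent choices are the at most 5 clauses containing it (bounded occurrence).
\end{itemize}
Hence $N_i(q)\le 5^u$, and so
\[
c\le m\,5^u,
\]
again a constant.

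\textbf{Main obstacle.} The counting in Steps 2 and 3 is routine. The real work is checking that Feige's gap analysis never lets the parameters grow with the instance size when the target is only a constant $\eps$. In particular, $k$ must be a function of $\eps$ alone, with $1/k$ and the parallel-repetition soundness error both small relative to $\eps$.

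If Feige's stated construction does not transfer directly, my fallback is the following. Take any Feige-hard instance with constant $k$ and $u$, and observe that the underlying gap 3SAT-5 instance, and hence the label-cover instance, already has bounded degree. Bounded degree is the only property Steps 2 and 3 use, so the counting argument goes through unchanged.
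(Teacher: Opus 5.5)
Your proposal is correct and matches the paper's own justification, which is only the footnote remark that bounded set size and frequency are ``simple to check'' in Feige's construction. Your counting is exactly that check once $k$, $u$ and $m$ depend only on $\eps$: $f\le k\,2^{3u}$ because $q=q_i(r)$ is forced, and $c\le m\,5^u$ from the 3SAT-5 occurrence bounds. One caveat: your fallback paragraph omits that Step~3 also needs the partition-system size $m$ to be constant. This is easy to ensure for max coverage, since a constant-size system such as the ground set $[k]^{2^u}$, with the $j$-th partition splitting by the $j$-th coordinate, already has the needed coverage property.
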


\begin{proof}[Proof of \Cref{thm:tight-hardness-max-cov}]
From \Cref{thm:feige-bounded}, it suffices to show that any set system $(\cU, S)$ where each set has size at most $c$ and each element appears in at most $f$ sets can be represented as points and unit boxes in $\R^d$ where $d = 2cf$. This embedding can be constructed as follows:
\begin{itemize}
\item Assume w.l.o.g. that $\cU = [m]$.
\item First, let $\Delta = cf$ and determine a coloring $\phi: [m] \to [\Delta]$ such that no two elements that are in a common set have the same color. This always exists and can be computed greedily since, for each element, the number of other elements that belong to the same set to it is at most $f(c - 1) < \Delta$.
\item Now, for each element $i \in [m]$, we construct the corresponding point $p^i \in \R^d$ as follows:
\begin{align*}
(p^i)_j =
\begin{cases}
\frac{1}{2}\left(1 + \frac{i}{m}\right) & \text{ if } j = 2 \cdot \phi(i) - 1 \\
\frac{1}{2}\left(1 - \frac{i}{m}\right) & \text{ if } j = 2 \cdot \phi(i) \\
0 &\text{ Otherwise. }
\end{cases}
\end{align*}
\item Finally, for each set $s \in S$, we let the box $B^s$ be the unit box with center $c^s \in \R^d$ be as follows:
\begin{align*}
(c^s)_j =
\begin{cases}
\frac{1}{2} \cdot \frac{i}{m} & \text{ if } j = 2 \cdot \phi(i) - 1 \text{ for some } i \in s \\
-\frac{1}{2} \cdot \frac{i}{m}& \text{ if } j = 2 \cdot \phi(i) \text{ for some } i \in s \\
0 &\text{ Otherwise. }
\end{cases}
\end{align*}
We remark that, when $j = 2\phi(i) - 1$ or $j = 2\phi(i)$ for some $i \in s$ (the first two cases), the value of $i$ is unique due to our choice of the coloring $\phi$.
\end{itemize}
We claim that this indeed encodes the set system. To see this, consider any $i \in [m]$ and $s \in S$.
\begin{itemize}
\item \textbf{Case I:} $i \in s$. In this case, it is clear from the definition that $(p^i)_j \in \left[(c^s)_j - \frac{1}{2}, (c^s)_j + \frac{1}{2}\right]$ for all $j \in [d]$. Thus, $p^i$ belongs to $B^s$ as desired. 
\item \textbf{Case II:} $i \notin s$. We further consider three subcases:
\begin{itemize}
\item \textbf{Case II.A:} There is no $i' \in s$ such that $\phi(i') = \phi(i)$. In this case, for $j = 2\phi(i) - 1$, we have $(p^i)_j = \frac{1}{2}\left(1 + \frac{i}{m}\right) > \frac{1}{2}$ and $c^s_{j} = 0$, implying that $p^i \notin B^s$.
\item \textbf{Case II.B:} There is some $i' \in s$ such that $\phi(i') = \phi(i)$ and $i' < i$. In this case, for $j = 2\phi(i) - 1$, we have $(p^i)_j = \frac{1}{2}\left(1 + \frac{i}{m}\right) > \frac{1}{2}\left(1 + \frac{i'}{m}\right)$ and $c^s_{j} = \frac{1}{2} \cdot \frac{i'}{m}$, implying that $p^i \notin B^s$.
\item \textbf{Case II.C:} There is some $i' \in s$ such that $\phi(i') = \phi(i)$ and $i' > i$. In this case, for $j = 2\phi(i)$, we have $(p^i)_j = \frac{1}{2}\left(1 - \frac{i}{m}\right) > \frac{1}{2}\left(1 - \frac{i'}{m}\right)$ and $c^s_{j} = -\frac{1}{2} \cdot \frac{i'}{m}$, implying that $p^i \notin B^s$.
\end{itemize}
\end{itemize}
Thus, these points and boxes encode the set system $(\cU, S)$, which concludes our proof.
\end{proof}

\section*{AI Disclosure Statement}

We use Gemini 3.1 Pro to help draft the proofs of Lemma~\ref{lem:internal_boxes} and Theorem~\ref{thm:main-embedding-to-r3}. In particular, we prompt it with the proof sketches and ask it to help flesh out the proofs, including determining the exact constants required. We then edit the generated proofs ourselves to simplify the construction and arguments further. Apart from these two proofs, all other proofs are entirely written by humans.

{\small
\bibliographystyle{alphaurl}
\bibliography{ref}
}

\appendix

\section{LP-Based Approximation Algorithm for Discrete Independent Set\\ (Proof of Lemma~\ref{lem:DIS})}\label{sec:DIS}

\PROOFDIS

\section{LP-Based Approximation Algorithm for Max-Volume Selection}\label{sec:max:cov:var}

In this section, we describe a variant of the max coverage algorithm
from Section~\ref{sec:max:cov:lp} that is more amenable to the
continuous version of the problem (max-volume selection),
where $P=\R^d$.  In the continuous case, for a region $A$, we should re-interpret $|A|$ as the volume of $A$, and sums over $P$ as integrals.

The main issue is that
our earlier algorithm requires invoking a DIS algorithm for
a subset $\Pgood$ of $P$, which is no longer all of $\R^d$.
Instead of assuming a DIS approximation algorithm, we will assume a different property:



\begin{definition}
We say that $\SSS$ satisfies the \emph{$c$-large-DIS property} w.r.t.\ $P$ if for any set $S\subseteq\SSS$,
we can find a subset $I\subseteq S$ such that every point of $P$ lies in at most one object of $I$, and
$\left|\bigcup_{s\in I}s \cap P\right| \ge \left|\bigcup_{s\in S}s \cap P\right|/c$ in polynomial time.
In the case $P=\R^d$, we say that $\SSS$ satisfies the \emph{$c$-large-IS property}.
\end{definition}

\begin{definition}
An object $s$ is \emph{$c$-fat} relative to a box $B$ if $s$ is contained in a homothet of $B$ whose volume is 
at most $c$ times the volume of $s$.

(Note that any convex object in a constant dimension is $O(1)$-fat relative to some box, e.g., by taking a bounding box of the John ellipsoid.)
\end{definition}

\begin{fact}\

\begin{enumerate}
\item[(a)] The family of all $c$-fat objects relative to a fixed box $B$ in $\R^d$ 
satisfies the $O_{d,c}(1)$-large-IS property.
\item[(b)] If $\SSS_1,\ldots,\SSS_\ell$ satisfies the $c$-large-IS property, then
$\SSS_1\cup\cdots\cup\SSS_\ell$ satisfies the $(c\ell)$-large-IS property.
\end{enumerate}
\end{fact}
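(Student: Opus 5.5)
For part (a), I will use the greedy ``smallest-first'' selection, as in the standard argument for independent sets of fat objects. After an affine map sending $B$ to a unit cube, every object $s$ lies in a homothet $H_s$ of $B$ (now a hypercube) with $\vol{H_s}\le c\,\vol{s}$; volume ratios are preserved by the map, so this costs nothing. I sort the objects of $S$ by the side length of $H_s$ in nondecreasing order. Then I scan them, putting $s$ into $I$ whenever $s$ is disjoint from all objects already in $I$. The resulting $I$ is clearly independent.

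To bound the coverage, take any $s\in S\setminus I$. It is blocked by some $t\in I$ that was chosen earlier, so the side of $H_t$ is at most the side of $H_s$, and $s$ meets $t\subseteq H_t$. Hence $s\subseteq H_s$ lies in the hypercube $\widehat{H}_t$ obtained by scaling $H_t$ by a factor $3$ about its center. It follows that $\bigcup_{s\in S}s\subseteq\bigcup_{t\in I}\widehat{H}_t$, and therefore
\[ \vol{\bigcup_{s\in S}s}\ \le\ \sum_{t\in I}3^d\vol{H_t}\ \le\ 3^dc\sum_{t\in I}\vol{t}\ =\ 3^dc\,\vol{\bigcup_{t\in I}t}. \]
The last equality uses the disjointness of $I$. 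This gives the $O_{d,c}(1)$-large-IS property with constant $3^dc$.

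There is one technical point to handle. ``Disjoint'' should mean that every point lies in at most one object of $I$, which matters only up to measure zero in the continuous setting. I will either treat the objects as half-open or simply note that boundary overlaps have measure zero.

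For part (b), given $S$, I write $S_j=S\cap\SSS_j$ (assigning each object to a single class). By subadditivity some $j$ satisfies $\vol{\bigcup_{s\in S_j}s}\ge\vol{\bigcup_{s\in S}s}/\ell$. I then apply the $c$-large-IS algorithm to that $S_j$, trying all $\ell$ classes and keeping the best result. This yields $(c\ell)$-large-IS.

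The main subtlety is in part (a): checking that the factor-$3$ enlargement really contains $H_s$ when side$(H_s)\ge$ side$(H_t)$ and the hypercubes intersect. Here the argument must use the side of $H_s$ as the ordering key; ordering by volume of the object itself would not suffice.
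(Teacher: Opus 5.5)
\textbf{There is a genuine gap in part (a): your scan order is backwards.} This breaks both the containment step and the algorithm itself.

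If $s$ is rejected because it meets an earlier $t\in I$ with $\mathrm{side}(H_t)\le\mathrm{side}(H_s)$, then $H_s$ is the \emph{larger} cube. There is no reason for it to lie in the factor-$3$ enlargement $\widehat{H}_t$. That containment holds exactly when $\mathrm{side}(H_s)\le\mathrm{side}(H_t)$, which is the opposite of what your last paragraph asserts.

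Here is a concrete failure. Take $B=[0,1]^d$ and $S=\{t,s\}$ with $t=[0,\eta]^d$, $s=[0,1]^d$ and $0<\eta<1$; both are $1$-fat relative to $B$. Your smallest-first scan puts $t$ into $I$ and then rejects $s$. So $\vol{\bigcup_{u\in I}u}=\eta^d$ while $\vol{\bigcup_{u\in S}u}=1$, and the ratio is unbounded. Smallest-first is the right greedy when approximating the \emph{number} of pairwise disjoint objects. It is the wrong one for the volume of their union.

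\textbf{The fix.} Scan in nonincreasing order of $\mathrm{side}(H_s)$. Then every rejected $s$ meets an earlier chosen $t$ with $\mathrm{side}(H_s)\le\mathrm{side}(H_t)$. Since $H_s\cap H_t\neq\emptyset$, you get $s\subseteq H_s\subseteq\widehat{H}_t$, and your displayed chain then gives the constant $3^dc$ verbatim.

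This corrected version is essentially the paper's argument. The paper picks the largest-\emph{volume} object, deletes everything meeting it, and repeats.

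Your closing claim that ordering by the object's own volume ``would not suffice'' is also incorrect. With largest-volume-first, a rejected $s$ and its blocker $t$ satisfy $\vol{H_s}\le c\,\vol{s}\le c\,\vol{t}\le c\,\vol{H_t}$. Hence $\mathrm{side}(H_s)\le c^{1/d}\,\mathrm{side}(H_t)$, so $H_s$ lies in $H_t$ scaled by $1+2c^{1/d}$ about its center. This gives the still-constant factor $(1+2c^{1/d})^d c$. Ordering by $\mathrm{side}(H_s)$ only improves the constant.

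Your part (b), the affine normalization, and your handling of boundary overlaps are all fine.
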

\begin{proof}
For (a), use a greedy algorithm: add the largest-volume object $s\in S$ to $I$, remove all objects intersecting $s$ from $S$, and repeat.
For each object $s\in I$, the union of all objects that intersect $s$ and have smaller volume than $s$ has volume at most
$O_{d,c}(1)$ times the volume of $s$.    Thus, the volume of $\bigcup_{s\in S}s$ is at most $O_{d,c}(1)$ times the volume of
$\bigcup_{s\in I}s$.

(b) is obvious from the definition.
\end{proof}

\begin{theorem}
Fix $P\subseteq\UUU$.  Suppose $\SSS$ satisfies the $C$-large-DIS property w.r.t.\ $P$.

Then for any $S\subseteq\SSS$, we can find a feasible solution
to the weighted max coverage problem of value at least $(1-1/e+\Omega(1/C^2))\OPTLP-\vmax$ in polynomial time,
where $\OPTLP$ denotes the optimal LP value for weighted max coverage and $\vmax:=\max_{s\in S}|s\cap P|$.  The term $\vmax$ can be removed for the max $k$-coverage problem (where all the weights are equal to $1/k$).
\end{theorem}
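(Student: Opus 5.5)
The plan is to rerun the proof of Theorem~\ref{thm:maxcov:LP} essentially verbatim: the same three-way algorithm (LP rounding when $\Pbad$ is heavy, standard greedy when $|A_h|>\delta\,\OPTLP$, otherwise modified greedy seeded with an independent set). The only change is line~4, where the DIS call on $(\Sgood,\Pgood)$ is replaced by a call to the $C$-large-DIS property on $\Sgood$ with respect to the whole set $P$.

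The analyses of LP rounding (line~1), standard greedy (line~2), and the LP-mass bound for $\Sbad$ (line~3) go through unchanged. They only use the LP constraints, the definition of $h$, and sums over $P$, which in the continuous case become integrals. So we may again assume $|\Pbad|\le\delta\,\OPTLP$ and $t_h\ge(1-\delta)\OPTLP$, and we have $\sum_{s\in\Sgood}w_sx_s\ge 1/2$. Every $s\in\Sgood$ satisfies $|s\cap\Pgood\setminus A_h|\ge(1-\delta')t_hw_s$.

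The new step, and the main obstacle, is showing that the independent set $I\subseteq\Sgood$ returned by the large-DIS property has total weight $\Omega(1/C)$. The property guarantees disjoint coverage $|\bigcup_{s\in I}s\cap P|\ge|\bigcup_{s\in\Sgood}s\cap P|/C$, but it says nothing directly about weight, so coverage has to be converted into weight in both directions.

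For the lower bound on $U:=|\bigcup_{s\in\Sgood}s\cap P|$, I would use fractional counting on $\Pgood\setminus A_h$, where $z_p\le 2$:
\[
\sum_{s\in\Sgood}|s\cap\Pgood\setminus A_h|\,x_s \;\le\; 2U.
\]
The left side is at least $(1-\delta')t_h\sum_{s\in\Sgood}w_sx_s\ge(1-\delta')t_h/2$, so $U\ge(1-O(\delta))\OPTLP/4$.

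For the upper bound on what $I$ covers, I would use that every object satisfies $|s\cap P\setminus A_h|\le\Delta_h w_s\le(1+\delta)t_hw_s$, which gives
\[
|s\cap P|\le(1+\delta)t_hw_s+|s\cap A_h|.
\]
Summing over the disjoint objects of $I$, the $A_h$ parts total at most $|A_h|\le\delta\,\OPTLP$. Hence
\[
(1+\delta)t_h\sum_{s\in I}w_s+\delta\,\OPTLP\;\ge\;U/C\;\ge\;(1-O(\delta))\OPTLP/(4C).
\]
With $\delta$ equal to $1/C$ times a sufficiently small constant, this yields $\sum_{s\in I}w_s\ge 1/(8C)$, say.

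One further subtlety is that the DIS property gives disjointness on all of $P$, not just on $\Pgood$. That is stronger than what the modified-greedy analysis needs, so the bound $|\AAA_i|\ge\sum_{j\le i}|\aaa_j\cap\Pgood\setminus A_h|\ge(1-\delta')\bbeta_it_h$ still holds.

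It remains to check the budget. If $\sum_{s\in I}w_s>1$, truncate $I$ to a prefix whose weight is just below $1$; this is the case $\kkk\le|I|$, already handled by (\ref{eqn4}). From there the modified-greedy analysis via (\ref{eqn3}) and (\ref{eqn4}) is identical, with $\bbeta_{|I|}\ge 1/(8C)$, and it yields
\[
(1-e^{-1}+\Omega(1/C^2))\OPTLP-\vmax,
\]
with the $\vmax$ term removable in the unweighted case by rounding up as before.
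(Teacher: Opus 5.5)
Your argument is correct and follows the paper's skeleton. You apply the large-DIS property to $\Sgood$ with respect to all of $P$. You then convert the coverage of $I$ into weight using $|s\cap P\setminus A_h|\le(1+\delta)t_hw_s$ and $|A_h|\le\delta\,\OPTLP$, which is essentially the paper's computation, and you rerun the modified-greedy analysis.

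The one real difference is how you lower-bound $U=\bigl|\bigcup_{s\in\Sgood}s\cap P\bigr|$.
\begin{itemize}
\item \textbf{Your route.} You keep $\Pbad$ and the LP-rounding branch, and double count over $\Pgood\setminus A_h$ using $z_p\le 2$. This gives $U\ge(1-\delta')t_h/4$.
\item \textbf{The paper's route.} It resets $\Pbad=\emptyset$ and $\Pgood=P$, and skips LP rounding entirely. It then splits the LP objective according to whether a point lies in $U$: a point outside $U$ is covered fractionally only by objects of $\Sbad$. Hence $\OPTLP\le U+|A_h|+(1-\delta')t_h\sum_{s\in\Sbad}w_sx_s\le U+|A_h|+(1-\delta')t_h/2$, so $U\ge(1+\delta')t_h/2$, with no multiplicity bound needed.
\end{itemize}

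What the paper's route buys is the observation that, under the large-DIS hypothesis, the $z_p>2$ threshold and the LP-rounding branch are superfluous. The algorithm then reduces to standard greedy plus modified greedy. Your version carries that extra branch, which is harmless, and loses a factor of $2$ in $U$. That factor only changes the constant in $\sum_{s\in I}w_s=\Omega(1/C)$, so the final $\Omega(1/C^2)$ bound is unaffected.

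A minor point: disjointness of $I$ is not needed in the coverage-to-weight step. The inequality $\sum_{s\in I}|s\cap P\setminus A_h|\ge\bigl|\bigcup_{s\in I}s\cap P\bigr|-|A_h|$ holds for any family. Disjointness is only used later, in the modified-greedy analysis.
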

\begin{proof}
We modify the algorithm in the proof of Theorem~\ref{thm:maxcov:LP}:
we reset $\Pbad=\emptyset$ and $\Pgood=P$ (so LP rounding in line~1 may be skipped entirely), and in line~4, we compute a feasible
solution $I$ to DIS for $\Sgood$ and $P$ with $\left|\bigcup_{s\in I}s \cap P\right| \ge \left|\bigcup_{s\in\Sgood}s \cap P\right|/C$.

\newcommand{\Ugood}{U_{\text{good}}}

For the analysis of line~4, let $\Ugood=\bigcup_{s\in\Sgood} s\cap P$.  Observe that
\begin{eqnarray*}
\OPTLP &\le& \sum_{p\in\Ugood\cup A_h} z_p' + \sum_{p\in P\setminus\Ugood\setminus A_h} \sum_{s\in\Sbad:\,p\in s} x_s\\
&\le& |\Ugood| + |A_h| + \sum_{s\in\Sbad}\sum_{p\in P\setminus A_h:\,p\in s} x_s\\
&=& |\Ugood| + |A_h| + \sum_{s\in\Sbad} |s\cap P\setminus A_h| x_s\\
&\le& |\Ugood| + |A_h| + (1-\delta')t_h \sum_{s\in\Sbad} w_sx_s
\ \le\ |\Ugood| + |A_h| + (1-\delta')t_h/2,
\end{eqnarray*}
implying that $|\Ugood|\ge (1+\delta')t_h/2$.
Thus, $\left|\bigcup_{s\in I}s \cap P\right|\ge t_h/(2C)$.
Recall that $|s\cap P\setminus A_h|/w_s\le \Delta_h < (1+\delta)t_h$ for every $s\in S$, and $|A_h|\le \delta\,\OPTLP\le 2\delta t_h$.  It follows that
\[ (1+\delta)t_h \sum_{s\in I} w_s\ \ge\ \sum_{s\in I} |s\cap P\setminus A_h|
\ \ge\ \left|\bigcup_{s\in I}s\cap P\right| - |A_h|\ \ge\ \left(\frac{1}{2C}-2\delta\right) t_h,
\]
implying $\sum_{s\in I}w_s\ge 1/(2C)-O(\delta)$.
The rest of the analysis is as before.
\end{proof}

We modify Lemma~\ref{lem:vmax} to remove the $\vmax$ term for the weighted problem:

\begin{lemma}
Fix $P\subseteq\UUU$.
Suppose that for any given $S\subseteq\SSS$, we can find a feasible solution to the 
weighted max coverage problem of value at least $(1-1/e+\delta)\OPT-\vmax$ in polynomial time,
where $\OPT$ denotes the optimal value  and $\vmax:=\max_{s\in S}|s\cap P|$.

Then for any given $S\subseteq\SSS$, we can find a feasible solution to the 
weighted max coverage problem of value at least $(1-1/e+\Omega(\delta))\OPT$ in polynomial time.
\end{lemma}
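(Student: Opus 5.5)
The plan is to mimic the proof of Lemma~\ref{lem:vmax}: guess a constant number of the ``heaviest-marginal'' objects of the optimal solution, and run the assumed algorithm on the residual instance. The new difficulty is that $P$ is fixed, so we cannot delete the points of $A_b^*$ from the universe. Instead, we handle them by restricting the objects while keeping $P$ unchanged.

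Fix a large constant $b$ (depending on $\delta$). For $i=0,\ldots,b$, guess $a_{i+1}^*$ in the optimal solution maximizing $|a_{i+1}^*\cap P\setminus A_i^*|$, where $A_i^*=(a_1^*\cup\cdots\cup a_i^*)\cap P$. If the optimal solution has at most $b$ objects, we find it by brute force. As in Lemma~\ref{lem:vmax}, the greedy-order choice gives
\[ v:=|a_{b+1}^*\cap P\setminus A_b^*|\le |A_b^*|/b, \]
and every remaining optimal object $s$ satisfies $|s\cap P\setminus A_b^*|\le v$.

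The key issue is that the residual problem must be an instance over the same fixed $P$ and the same class $\SSS$, with small $\vmax$. I would take $S'=\{s\in S: |s\cap P\setminus A_b^*|\le v\}$ and the residual budget $1-\sum_{i\le b} w(a_i^*)$. The objective, however, should measure coverage of $P\setminus A_b^*$ only, and $\vmax$ of $S'$ measured against $P$ may still be large, because objects can overlap $A_b^*$ heavily.

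The first fix I would try avoids changing the instance: run the hypothesized algorithm on $S'$ over $P$ and compare its output against $\OPT' \ge \OPT-|A_b^*|$ plus overlap terms. This seems too lossy. Instead, I would use a reweighting that is still legitimate in the continuous setting. Assign weight $0$ to points of $A_b^*$ and weight $1$ elsewhere, and interpret $|\cdot|$ as weighted measure. The hypothesis applied to this weighted point set is still ``fixed $P$'' in spirit, since weights in $P$ are treated as multiplicities; the paper's earlier footnote licenses this. Then $\vmax\le v$ and $\OPT_{\mathrm{res}}\ge \OPT-|A_b^*|$. I expect this step to be the main obstacle, since the hypothesis is stated for a fixed $P$. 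If reweighting is not allowed, I would instead check that the proof of the preceding theorem goes through verbatim with $P$ replaced by $P\setminus A_b^*$, because the large-DIS property is monotone under shrinking the measured region in the relevant inequality.

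Given the residual guarantee, the value of the combined solution for the correct guess is
\[ |A_b^*|+(1-1/e+\delta)(\OPT-|A_b^*|)-|A_b^*|/b = (1-1/e+\delta)\OPT+(1/e-\delta-1/b)|A_b^*|, \]
which is at least $(1-1/e+\delta)\OPT$ once $b\ge 1/(1/e-\delta)$, for instance $b\ge 3$ when $\delta$ is small. The only loss is the case analysis between brute force and guessing, so the statement holds with $\Omega(\delta)$, in fact with $\delta$ itself. The running time is $O(n^{b+1})$ times polynomial, which is polynomial since $b$ is a constant.
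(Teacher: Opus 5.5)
\textbf{Gap: the residual call uses a universe the hypothesis does not cover.} The hypothesis gives you an algorithm only for the fixed point set $P$ (in the application, $P=\R^d$ with full volume measure). Both of your fixes amount to invoking it on the different universe $P\setminus A_b^*$.

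Zero-weighting the points of $A_b^*$ is not licensed by the multiplicity footnote. That footnote lets an algorithm cope with weighted points, but changing the weights changes $P$ itself, which is exactly what the lemma forbids.

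Your second fix is false as well, because the large-DIS property is not monotone under restricting the measure. Take squares $s_i=[-i,i]\times[1/i-2i,\,1/i]$ for $i=1,\ldots,n$, poking out of a large square $a_1^*=[-n,n]\times[-2n,0]$. Outside $a_1^*$ they leave strips $[-i,i]\times(0,1/i]$. These strips pairwise intersect, so any independent set w.r.t.\ $\R^2\setminus a_1^*$ has measure $2$, while their union has measure $2H_n=\Theta(\log n)$. So the residual guarantee $(1-1/e+\delta)(\OPT-|A_b^*|)-v$ is unsupported. Your final algebra is fine, but the claimed bound ``with $\delta$ itself'' rests entirely on this illegitimate step.

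\textbf{Missing idea: never alter $P$ when calling the given algorithm; use a generic $(1-1/e)$-approximation for the residual.} The generic algorithm for budgeted max coverage, e.g.\ Khuller--Moss--Naor, is valid on any universe. The argument then goes as follows.
\begin{enumerate}
\item Guess only the optimal object $a^*$ maximizing $|a^*\cap P|$.
\item Run the given algorithm on $\{s\in S: |s\cap P|\le |a^*\cap P|\}$ over the same $P$. The whole optimal solution survives this filter and $\vmax=|a^*\cap P|$, so you get $(1-1/e+\delta)\OPT-|a^*\cap P|$. This is at least $(1-1/e+\delta/2)\OPT$ whenever $|a^*\cap P|\le(\delta/2)\OPT$.
\item Otherwise, solve $S$ over $P\setminus a^*$ with budget $1-w_{a^*}$ using the generic algorithm, and add back $a^*$. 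This gives $|a^*\cap P|+(1-1/e)(\OPT-|a^*\cap P|)=(1-1/e)\OPT+|a^*\cap P|/e$, which exceeds $(1-1/e+\delta/(2e))\OPT$.
\end{enumerate}
This case split is why the lemma claims only $\Omega(\delta)$. The option you dismissed as ``too lossy''---running the given algorithm over the original $P$ on restricted objects---is exactly the first half of the correct argument. It just needs the large-$\vmax$ fallback.
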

\begin{proof}
We guess the object $a^*$ in the optimal solution that maximizes $|a^*\cap P|$.
We solve the problem for $\{s\in S: |s\cap P|\le |a^*\cap P|\}$ and $P$ using the given algorithm.
In addition,
we solve the problem for $S$ and $P\setminus a^*$ with budget $1-w_{a^*}$ by a known $(1-e^{-1})$-approximation algorithm, and add back $a^*$ to the solution.
We return the best solution found.

Let  $\vmax := |a^*\cap P|$.
If $\vmax\le (\delta/2)\OPT$, then
the given algorithm yields a solution with value at least 
$(1-e^{-1}+\delta)\OPT - \vmax\ge (1-e^{-1}+\delta/2)\OPT$.
On the other hand, if $\vmax > (\delta/2)\OPT$, then
the solution for the correct guess has value at least
$|a^*\cap P| + (1-e^{-1})(\OPT-|a^*\cap P|) = (1-e^{-1})\OPT + e^{-1}\vmax > (1-e^{-1}+\Omega(\delta))\OPT$.
\end{proof}

Putting everything together, we obtain:

\begin{corollary}
For the following families of objects,
there are polynomial-time $(1-1/e+\Omega(1))$-approximation algorithms for the weighted max-volume selection problem:
\begin{itemize}
\item
objects that are $O(1)$-fat relative to $O(1)$ boxes in any constant dimension $d$---in particular, homothets of $O(1)$ convex objects in any constant dimension $d$.
\end{itemize}
\end{corollary}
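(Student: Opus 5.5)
The plan is to chain together the three results that immediately precede the corollary, working in the continuous setting $P=\R^d$ throughout. Let the input family be a union $\SSS_1\cup\cdots\cup\SSS_\ell$ with $\ell=O(1)$, where each $\SSS_j$ consists of objects that are $c$-fat relative to a fixed box $B_j$, with $c=O(1)$.

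\textbf{Step 1: the large-IS property.} By Fact (a), each $\SSS_j$ satisfies the $O_{d,c}(1)$-large-IS property, using the greedy largest-volume-first independent set. By Fact (b), the union then satisfies the $C$-large-IS property with $C=\ell\cdot O_{d,c}(1)=O(1)$. Since the large-IS property is exactly the $C$-large-DIS property with respect to $P=\R^d$, the preceding theorem applies to $\SSS$. For any input $S$, it yields in polynomial time a feasible solution of value at least $(1-1/e+\Omega(1/C^2))\OPTLP-\vmax$. Because $\OPTLP\ge\OPT$, this is at least $(1-1/e+\delta)\OPT-\vmax$ with $\delta=\Omega(1/C^2)=\Omega(1)$.

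\textbf{Step 2: removing the $\vmax$ term.} I then invoke the modified version of Lemma~\ref{lem:vmax} stated just above. Its hypothesis asks for the bound to hold for every $S\subseteq\SSS$, including the restricted family $\{s: |s\cap P|\le|a^*\cap P|\}$. That family is still a subset of $\SSS$, so the hypothesis holds. The lemma also needs a polynomial-time $(1-1/e)$-approximation for weighted max coverage on $S$ and $P\setminus a^*$, for which I use the known greedy or LP-rounding algorithm. This gives value $(1-1/e+\Omega(1))\OPT$.

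\textbf{Step 3: polynomial time in the continuous setting.} This step is the main point needing care. All the algorithms above manipulate volumes $|A|$ of unions and intersections, and solve an LP whose "points" range over $\R^d$. For objects of constant description complexity, I would discretize via the arrangement of $S$. The arrangement has $O(n^d)$ cells, and each cell is replaced by a point of weight equal to its volume (as noted in the introduction). This turns the LP into a finite one, and makes each volume computation, the greedy steps, the large-IS greedy, and the guessing polynomial-time.

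\textbf{The homothet case.} For homothets of $O(1)$ convex objects $K_1,\dots,K_\ell$, I fix for each $K_j$ a bounding box $B_j$ of its John ellipsoid; $K_j$ is $O_d(1)$-fat relative to $B_j$. Every homothet $\lambda K_j+t$ is then $O_d(1)$-fat relative to the homothet $\lambda B_j+t$ of $B_j$. Hence the homothets of $K_j$ form an $O(1)$-fat family relative to the fixed box $B_j$, and the general case applies.
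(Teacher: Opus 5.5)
Your proposal is correct and takes the same route the paper means by ``putting everything together'': Fact~(a) and~(b) give the $O(1)$-large-IS property, i.e., large-DIS with respect to $P=\R^d$. The Appendix~B theorem then yields $(1-1/e+\Omega(1/C^2))\OPTLP-\vmax$, the modified $\vmax$-removal lemma finishes the argument, and the homothet case uses the John-ellipsoid bounding box exactly as the paper remarks; your Step~3 (discretizing via the arrangement) just spells out the reduction the paper states in its introduction.
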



\section{Proof of Fact~\ref{fact:expansion}}\label{app:expansion}

We begin with the following lemma (which is related to \emph{Hardy--Littlewood maximal operators}~\cite{SteinBOOK}):
\begin{lemma}\label{lem:extension}\
\begin{enumerate}
\item[(i)] For a line segment $s$, let $(1+\eps)s$ denote the $(1+\eps)$-factor expansion of $s$ (preserving the midpoint).
For any set $A\subseteq\R^d$ and vector $v\in\R^d$, define
\[ \ext_\eps(A,v) = \bigcup_{\text{\em line segment}\ s\subseteq A\ \text{\em parallel to $v$}} (1+\eps)s.
\]
Then $\vol{\ext_\eps(A,v)} \le (1+\eps)\vol{A}$.
\item[(ii)] Let a parallelepiped $\tau$, let $(1+\eps)\tau$ denote the $(1+\eps)$-factor expansion of $\tau$ (preserving the center).  For any set $A\subseteq\R^d$ and parallelepiped $\tau_0\subseteq\R^d$, define
\[ \ext_\eps(A,\tau_0) = \bigcup_{\text{\em parallelepiped}\ \tau\subseteq A\ \text{\em homothetic to $\tau_0$}} (1+\eps)\tau.
\]
Then $\vol{\ext_\eps(A,\tau_0)} \le (1+\eps)^d\vol{A}$.
\end{enumerate}
\end{lemma}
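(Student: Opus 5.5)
For part (i), I will reduce to one dimension with Fubini. Fix $v$ and decompose $\R^d$ into lines parallel to $v$. The set $\ext_\eps(A,v)$ is a union of expanded segments, and each such segment lies on one line parallel to $v$. So $\ext_\eps(A,v)\cap\ell=\ext_\eps(A\cap\ell,v)$ for every line $\ell$ parallel to $v$. It is therefore enough to prove the one-dimensional statement: for any set $A\subseteq\R$, the union $E$ of $(1+\eps)I$ over all intervals $I\subseteq A$ has measure at most $(1+\eps)|A|$. Integrating over the transverse hyperplane then gives (i).

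The one-dimensional claim is the core step. Write $A$'s interval content as the union of its maximal intervals, the connected components $J$ of the union of intervals contained in $A$. Any interval $I\subseteq A$ lies inside one component $J$. Moreover $(1+\eps)I\subseteq (1+\eps)J$, because expanding about the midpoint by a fixed factor is monotone under inclusion of intervals: the endpoints of $(1+\eps)I$ are $a-\frac{\eps}{2}(b-a)$ and $b+\frac{\eps}{2}(b-a)$, and these stay inside $J$'s expansion.

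This gives $E\subseteq\bigcup_J(1+\eps)J$, so $|E|\le\sum_J(1+\eps)|J|$. The components are disjoint, countably many of positive length, and contained in $A$, so $\sum_J|J|\le|A|$. Components of length zero (points) expand to points and have measure zero.

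I expect the measure-theoretic care to be the main obstacle. Measurability of $E$ needs checking, but $E$ is a union of nondegenerate intervals plus a null set, hence measurable. I would simply work with outer measure if needed. The rest is clean.

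For part (ii), I will iterate (i) along the $d$ edge directions $v_1,\dots,v_d$ of $\tau_0$. Apply the affine map sending $\tau_0$ to a unit cube. This scales all volumes by the same factor, so we may assume $\tau_0$ is an axis-aligned box and homothets of it are boxes with fixed aspect ratios.

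Expanding a box $\tau$ by $1+\eps$ about its center equals applying the 1D midpoint expansion in each coordinate direction in turn. Let $A_0=A$ and $A_i=\ext_\eps(A_{i-1},v_i)$. I claim by induction that $A_i$ contains the partial expansion $\tau^{(i)}$ of every box $\tau\subseteq A$, where $\tau^{(i)}$ is expanded in the first $i$ directions only. Indeed, $\tau^{(i-1)}$ is a union of segments parallel to $v_i$, and expanding each of them gives exactly $\tau^{(i)}$.

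So $\ext_\eps(A,\tau_0)\subseteq A_d$, and applying (i) $d$ times gives $\vol{A_d}\le(1+\eps)^d\vol{A}$. Note that (i) is needed for arbitrary sets, not just boxes, which is why it was stated in that generality.
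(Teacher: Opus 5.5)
Your proposal is correct and follows the paper's proof: (i) is reduced, by integrating over lines parallel to $v$, to the one-dimensional case, which you settle via the disjoint maximal intervals inside $A$, and (ii) follows from the containment $\ext_\eps(A,\tau_0)\subseteq\ext_\eps(\cdots\ext_\eps(\ext_\eps(A,v_1),v_2)\cdots,v_d)$, obtained by expanding one edge direction at a time, exactly as in the paper. One small correction: if you allow degenerate segments, zero-length components have measure zero only individually---their union can have positive measure, e.g.\ when $A$ is a fat Cantor set---but that union lies in $A$, is disjoint from the positive-length components, and is not enlarged by the expansion, so the bound $(1+\eps)\vol{A}$ still holds (and with nondegenerate segments the case does not arise at all).
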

\begin{proof}
First note that (i) is trivial in the $d=1$ case (think of $A$ as a union of disjoint intervals on the real line).
Then (i) for larger $d$ follows by integrating over all lines parallel to $v$.

(ii) follows by applying (i) $d$ times: for a parallelelepiped $\tau_0=\{\xi_1v_1+\cdots+\xi_dv_d:
-1\le \xi_1,\ldots,\xi_d\le 1\}$ defined by vectors $v_1,\ldots,v_d\in\R^d$, we have
\[ \ext_\eps(A,\tau_0)\subseteq \ext_\eps(\cdots \ext_\eps(\ext_\eps(A,v_1),v_2),\ldots, v_d).\qedhere
\]
\end{proof}

Pick a set $V$ of $O_{d,\alpha}(1)$ unit vectors in $\R^d$ such that every vector in $\R^d$ has angle at most $1/c_{d,\alpha}$
from at least one vector in $V$ for a sufficiently large constant $c_{d,\alpha}$.
Let $T$ be the set of all $O_{d,\alpha}(1)$ parallelepipeds defined by $d$ vectors in $V$ centered at the origin.
We claim that
\[ \bigcup_{s\in S} (s\oplus B_{\eps\,\diam(s)})\ \subseteq\ \bigcup_{\tau_0\in T} \ext_{C_{d,\alpha}\eps}\left(\bigcup_{s\in S} s, \tau_0\right)
\]
for a sufficiently large constant $C_{d,\alpha}$.

To see this, let $s\in S$ of diameter $r$ and $p$ be a point in $s$.  Let $B$ be a ball contained in $s$ of radius $\alpha r$, and let $B'$ be the ball of radius $\alpha r/4$ contained in $B$ that is of distance $\Omega(\alpha)$ from $p$.
The set of all directions along which the ray from $p$ intersects $B'$ forms a cone of angle $\Theta_{d,\alpha}(1)$.
We can find $d$ vectors $v_1,\ldots,v_d\in V$ inside this cone, such that
every pair of vectors has angle at least $\Omega_{d,\alpha}(1)$.
Let $\tau_0\in T$ be the parallelepiped defined by $v_1,\ldots,v_d$ centered at the origin.
Then $s$ contains
a homothetic copy $\tau$ of $\tau_0$ with a vertex at $p$ and a scaling factor $r/c'_{d,\alpha}$ for a sufficiently large constant $c'_{d,\alpha}$.
Furthermore, $\tau_0$ is fat since the minimum angle among $v_1,\ldots,v_d$ is $\Omega_{d,\alpha}(1)$, and
so $c''_{d,\alpha}r\tau_0$ contains $B_{\eps r}$ for a sufficiently large constant $c''_{d,\alpha}$.
Thus, $p\oplus B_{\eps r}$ is contained in $(1+C_{d,\alpha}\eps)\tau$ for a sufficiently large constant $C_{d,\alpha}$.
The claim follows.

We conclude that
\begin{eqnarray*}
 \vol{\bigcup_{s\in S} (s\oplus B_{\eps\,\diam(s)}) \setminus \bigcup_{s\in S} s}  &\le&
\sum_{\tau_0\in T} \vol{\ext_{C_{d,\alpha}\eps}\left(\bigcup_{s\in S} s, \tau_0\right) \setminus \bigcup_{s\in S} s}\\
&\le & \sum_{\tau_0\in T} ((1+C_{d,\alpha}\eps)^d-1)\vol{\bigcup_{s\in S}s} \ \le\ O_{d,\alpha}(\eps) \vol{\bigcup_{s\in S}s}.
\end{eqnarray*}

\hfill\qed

\end{document}